\documentclass[11pt]{article}

\usepackage[margin=1in, letterpaper]{geometry}
\usepackage{setspace}
\usepackage[utf8]{inputenc}
\usepackage[numbers]{natbib}
\usepackage[linesnumbered,ruled,vlined]{algorithm2e}
\SetKwProg{upon}{upon}{}{}
\newcommand{\codefont}[1]{\textup{\texttt{#1}}}

\usepackage{xcolor} \SetCommentSty{(\color{blue}\itshape)} 

\SetCommentSty{mycommfont}
\SetKwProg{Proc}{Procedure}{}{}

\usepackage{array}
\newcolumntype{P}[1]{>{\centering\arraybackslash}p{#1}}

\usepackage{amsmath,amsthm,amssymb}

\usepackage{hyperref}
\usepackage{cleveref}

\newtheorem{theorem}{Theorem}
\newtheorem{lemma}[theorem]{Lemma}
\newtheorem{corollary}[theorem]{Corollary}
\newtheorem{definition}[theorem]{Definition}

\newtheorem{observation}[theorem]{Observation}

\DeclareMathOperator{\suc}{succ}
\DeclareMathOperator{\pre}{pred}

\DeclareMathOperator{\indeg}{indeg}
\DeclareMathOperator{\outdeg}{outdeg}

\usepackage{nicefrac}
\usepackage{soul}
\newcommand{\E}{\mathbb{E}}

\usepackage{mathtools}

\usepackage{tikz}
\usetikzlibrary{arrows.meta}
\usetikzlibrary{decorations.pathreplacing, positioning, calc}

\title{Supervised Distributed Computing:  Efficiency and Robustness under a Majority of Adversarial Workers
}
\author{John Augustine{\small ,} Henning Hillebrandt{\small ,} Manish Kumar{\small ,}\\ Christian Scheideler, {\small and} Julian Werthmann}
\date{}

\begin{document}

\maketitle

\begin{abstract}
    We consider a recently proposed \emph{supervised distributed computing} paradigm \cite{augustine2025supervised} that extends and refines the standard master-worker paradigm for parallel computations.
    In this paradigm, there is a supervisor, a source, a target, and a collection of workers. The distributed computation is given as an acyclic task graph that is known to the supervisor. The source initially stores the input and the target is supposed to store the output of the computation. 
    The individual tasks of the computation are supposed to be executed by the workers under the guidance of the supervisor.
    The source, target and supervisor are assumed to be reliable, while a $\beta$-fraction of the workers might be adversarial, for some $\beta \in [0,1)$. This covers, for example, the case where a supervisor has to work with untrusted volunteers.
    In the standard master-worker approach, the master checks whether the workers correctly execute the assigned tasks, creating a severe bottleneck, whereas in the supervised approach, the supervisor outsources this checking to the workers.
    Prior to this work, only supervised solutions were known for the case that $\beta$ is a sufficiently small constant. 
    We show that robust and efficient supervised solutions are possible for \emph{any} constant $\beta<1$ while the expected work for the honest workers is close to a \emph{single} execution per task, given that there is a lightweight verification mechanism that allows honest workers to check the correctness of task outputs, which is significantly better than all robust master-worker as well as peer-to-peer approaches known so far.
\end{abstract}


\section{Introduction}\label{sec: introduction}

In the standard master-worker paradigm, there is a master and a collection of workers. The master is responsible for assigning tasks to the workers and collecting their outputs while the workers are supposed to execute the tasks in a correct and timely manner. Countermeasures against workers who do not deliver their outputs in a timely manner are relatively simple: Wait until a timeout has been reached, and then assign the task to another worker. However, checking the correctness of the output is more challenging. A standard approach has been to assign the same task to two workers, and if their outputs disagree, to assign it to a third worker or more until a majority of workers agree on the output. This has turned out to be effective in practice against isolated adversarial workers (e.g., SETI@home \cite{seti2001}), but is not effective if there is a constant fraction of cooperating adversarial workers. Various alternative approaches may be used such as challenging workers with tasks whose correct outputs are already known to identify and blacklist adversarial workers, but in this case, computations performed by honest workers are wasted. Thus, ideally, appropriate verification mechanisms are available that allow the master to check the correctness of an output \emph{without} consulting other workers. Such verification mechanisms are known for various problems and have been studied in the context of \emph{certifying algorithms}~\cite{McConnellMNS11} and interactive proofs~\cite{feigenbaum1992overview}. In the maxflow problem, for example, the master may not just expect a maxflow solution from the worker but also a mincut, which must have the same values to be correct due to the famous maxflow-mincut theorem. For distributed computations, it can be more challenging to identify lightweight verification mechanisms at the task level, but as was demonstrated in \cite{augustine2025supervised}, this is possible, for example, for distributed matrix multiplication and sorting.

Certainly, requiring the master to perform these verifications puts a significant burden on the master and therefore severely limits the scalability of the master-worker approach. To avoid this problem, the supervised distributed computing paradigm was recently proposed, which shifts the burden of verifying the outputs of task computations to the workers so that the supervisor is not involved in any I/O activities anymore. Instead, the supervisor just focuses on assigning tasks to workers and introducing them to each other to make sure that the computation proceeds in a controlled manner. To ensure that there is a chance for the computation to be performed correctly, a critical assumption of that approach is to assume that the source, which stores the input of the distributed computation, and the target, which is supposed to collect the final outputs of the computation, are reliable. In that case, the input will be delivered correctly to the workers responsible for the initial tasks, and when the target tells the supervisor that all final outputs have been correctly delivered to it, the supervisor knows for sure that the computation was successful.

Prior to our work, only efficient supervised distributed computing solutions were known for the case where the fraction of adversarial workers is a sufficiently small constant \cite{augustine2025supervised}. In this work, we show that there are supervised approaches that are efficient and robust against \emph{any} constant fraction of adversarial workers less than 1. Before we state our concrete contributions, we formally introduce the supervised distributed computing paradigm and give an overview of related work.

\subsection{Supervised Distributed Computing Paradigm}\label{sec: model}

In the supervised distributed computing paradigm, there are four roles: a \emph{supervisor}, a \emph{source}, a \emph{target}, and a collection of \emph{workers}. The term \emph{supervisor} was suggested instead of \emph{master} to highlight its lightweight scheduling role. The source, target, and supervisor are assumed to be \emph{reliable} in the sense that they operate on time and are not subject to crashes or adversarial behavior.

Given an instance $I$ of a computational problem, $I$ is initially stored in the source, and the supervisor's goal is to orchestrate the workers so that, in the end, a correct solution $S(I)$ to $I$ is stored in the target. Note that the supervisor, source, and target are abstractions. Depending on the application, these roles may overlap or may be realized by multiple entities. For example, if $S(I)$ is small (e.g., a simple ``yes/no'' answer), it might be convenient for the supervisor and target to be the same entity. In decentralized environments with a sufficiently small fraction of adversarial peers, a reliable supervisor might be emulated by a quorum of peers, and a source may consist of a collection of repositories with read-only access.

We assume that the distributed computation for some instance $I$ can be represented as a \emph{directed acyclic task graph} $G = (V, E)$ that is known to the supervisor.\footnote{Example task graphs for matrix multiplication and sorting have been presented in \cite{augustine2025supervised}.} Each node $v \in V$ represents a \emph{task}, and for every edge $(u,v) \in E$, $u$ is called a \emph{predecessor} of $v$ and $v$ is called a \emph{successor} of $u$. 
We denote the set of successors of a task $v$ in $G$ as $\suc_G(v)$ and the set of predecessors in $G$ as $\pre_G(v)$.
For every task $v \in V$, $v$ needs the outputs of all predecessors to be executable. Nodes without incoming edges represent \emph{initial tasks}; they receive their inputs directly from the source. Nodes without outgoing edges represent \emph{final tasks}; they are supposed to send their outputs to the target, where the final solution $S(I)$ is assembled. For our analysis, we use $n = |V|$ to denote the number of tasks and $D$ to denote the \emph{depth} of $G$, i.e., the length of the longest directed path in the graph.
Furthermore, for each task $v \in V$, we use $D(v) \in \{1, \dots, D\}$ to denote the depth of $v$ in $G$, i.e., the length of the longest directed path from an initial task to $v$.

\subsubsection{Worker Sampling and Adversarial Model}

We assume that the supervisor has access to a black-box sampling mechanism that, when asked for a worker, returns an \emph{adversarial worker} with probability at most $\beta \in [0,1)$, and otherwise an \emph{honest worker}.
The sampling mechanism ensures that only \emph{available} workers are picked, and an honest worker only declares itself available if it is not currently busy with a task. The set of workers may change arbitrarily over time, but we assume that there are always sufficiently many available honest workers so that the sampling mechanism can work as desired.
$\beta$ is known to the supervisor, but this is not a severe constraint since the supervisor may simply choose a higher $\beta<1$ if a correct solution $S(I)$ cannot be computed.
The adversary is omniscient, i.e., it has complete knowledge of the system's past and present state and can coordinate all adversarial workers. However, it cannot predict the supervisor's future random choices or the future choices of the sampling mechanism. Honest workers, once selected, are assumed to remain accessible and function correctly for as long as the output of the task assigned to them is needed for the execution of $G$ (which depends on the scheduling strategy of the supervisor).

\subsubsection{Task Executions}

For simplicity, we assume that the time progresses in \emph{synchronous rounds}. Each round is sufficiently long to ensure that the following sequence of events can take place: (1) The supervisor samples the desired number of workers for tasks that it wants to get executed in that round, (2) the supervisor introduces workers to each other so that formerly sampled workers can forward their outputs to the corresponding newly sampled workers, (3) the newly sampled workers check whether the received outputs are correct and complete,\footnote{Workers can discard any received outputs that do not adhere to the protocol, including messages that are too large.} and (4) if so, they may execute the task assigned to them.

The assumption of synchronous rounds is reasonable if all tasks require roughly the same computational effort and the time required for receiving outputs and verifying their correctness is negligible compared to executing the task. The most critical issue is coming up with a lightweight verification mechanism for the received outputs, which has been shown to be possible for certain problems and is an active area of research, but here we just assume that such verification mechanisms are available and focus instead on the scheduling part.\footnote{The assumption is reasonable, in particular due to the existence of class NP of problems that are difficult to compute but easy to verify.}

\subsubsection{Efficiency and Robustness Goals}

We consider a supervised solution to be \emph{efficient} if the following goals can be met:
\begin{itemize}
    \item The expected communication work of the source is linear in the size of the input $I$.
    \item The expected communication work of the supervisor is linear in the size of $G$ and independent of the size of the input $I$ and its solution $S(I)$.
    \item The expected communication and verification work of the honest workers is in $O(d)$, where $d$ is a bound on the in- and outdegree of $G$.
    \item The expected total computation work of the honest workers is asymptotically equal to the work required for executing all tasks in $G$.
    \item The expected runtime of the computation, i.e., the number of rounds it takes to terminate, is asymptotically equal to the depth $D$ of the task graph.
    \item The expected communication work of the target is linear in the size of the solution $S(I)$.
\end{itemize}
These requirements explicitly rule out trivial scheduling strategies. For example, assigning the entire computation to a single worker would result in a runtime of $O(|V|)$, which can be significantly larger than $D$. Similarly, involving the supervisor in handling the input and output data would make its work dependent on $|I|$ or $|S(I)|$, which can be much larger than the size of $G$, thereby creating a bottleneck.

We call a supervised solution \emph{robust} if, despite the presence of adversarial workers, eventually a correct solution can be assembled at the target.

\section{Related Work}\label{sec: related_work}

The master-worker paradigm has been around for decades and is still widely used in practice. Approaches following that paradigm can be found in volunteer-based computing (e.g., PrimeNet and SETI@home) as well as data processing in large computational clusters (e.g., Google's MapReduce paradigm \cite{DeanG04}). While in the basic master-worker paradigm, tasks are typically independent, the MapReduce paradigm can cover situations where a task may spawn several subtasks (the Map operation) whose outputs are collected (via the Reduce operation) to assemble the final output of the task.

Starting with the work of Fern\'andez, Georgiou, L\'opez and Santos~\cite{FernandezGLS05}, and followed by Fer\-n\'an\-dez, L\'opez, Santos and Georgiou \cite{FernandezLSG06}, various algorithmic techniques have been explored for the case that the tasks are independent of each other and the tasks need to be computed by workers that are potentially malicious (e.g., \cite{ChristoforouAGM14,KonwarRS15}). A standard assumption in these papers is that there is a fixed set of $n$ workers and a set of $t$ tasks, where $n \le t$. The computation proceeds in synchronous rounds. In each round, the master can assign any collection of tasks to the workers, and during a round, each worker can compute only one task from the master and report the result back to it. The goal of the master is to accept only correct answers of every task, w.h.p.\footnote{We say that an event occurs with high probability (w.h.p.), if it occurs with probability at least $1-n^{-c}$ for a constant $c$.} Techniques that have been explored to achieve that are voting (a majority of workers have to agree on the output), challenges (where tasks with known outputs are given to workers to identify malicious ones), or combinations of these. Appropriate strategies for the master based on these techniques were presented in \cite{ChristoforouAGM14,KonwarRS15} that, for the case of binary outputs, achieve the stated goal if the probability of malicious or faulty behavior is less than $1/2$, and for scenarios where more outputs are allowed, this bound may even be lower. Furthermore, a lower bound of $\Omega(t \log n)$ for the total work was shown for the case that an $f$-fraction, $0<f<1/2$, of the workers may return incorrect results with probability $0<p<1/2$, while the master does not have a priori knowledge of $f$ and $p$ \cite{KonwarRS15}. Reputation-based (e.g., \cite{ChristoforouAGM16,SonnekNCW06}) and game-theoretical approaches (e.g., \cite{AntaGMP16,ChristoforouAKN16}) have also been considered in this context, but these are beyond the scope of this paper.

In the works above, the master does all the checking. Goodrich~\cite{Goodrich08} investigated a method to prevent cheating that uses the workers to check each others' outputs. As the main result, he shows that three rounds and a replication factor of 6 suffice to identify all false outputs among $t=n$ tasks, so long as there are at most 5\,\% of malicious participants. He actually calls the master a supervisor, and his approach appears to be applicable to the supervised distributed computing paradigm, but it requires the number of malicious participants to be sufficiently small to work.

The supervised distributed computing model considered here was introduced by Augustine, Scheideler and Werthmann~\cite{augustine2025supervised}. They studied two classes of task graphs: paths (as a warm-up) and arbitrary directed acyclic graphs (DAGs). For the path case, they showed that when $\beta \leq 1/12$, their scheduling approach terminates correctly in a linear number of rounds, w.h.p.\ (here, $n$ denotes the number of tasks of the task graph). Moreover, for any adversarial strategy, the source and target send and receive the solution only a constant number of times in expectation, and the total computational and verification work performed by the workers is within a constant factor of optimal. 
They then generalized their path strategy to DAGs and showed that as long as $\beta \leq (1/(2(2d+1)))^{2+\epsilon}$, for an arbitrary constant $\epsilon > 0$, where $d$ denotes the maximum indegree of the task graph, the computation finishes within $O(D+ \log n)$ rounds, w.h.p.

Research on verification in adversarial environments can be traced back to the pioneering work of Blum, Evans, Gemmell, Kannan, and Naor~\cite{BEGKN91}, who studied how to check the correctness of operations performed on a data structure controlled by an adversary. 
Their work initiated a significant amount of research on authenticated data structures (see \cite{Tamassia03} for a survey), which typically involves a trusted source, an untrusted server, and a client. 
The paradigm of certifying algorithms~\cite{AlkassarBMR14} further explores the separation between computation and verification. In this setting, an algorithm outputs not only a solution but also a witness whose correctness can be efficiently verified by a checker. Certifying algorithms have been developed for a variety of sequential problems, including graph problems and geometric settings (e.g., \cite{McConnellMNS11}). 

There is a large body of work on cryptographic approaches to check the correctness of outputs (see \cite{DBLP:conf/dlt/CrescenzoKKS22} for a survey). This line of work is commonly formalized using interactive proof systems~\cite{feigenbaum1992overview}, where a powerful prover convinces a weaker verifier of the correctness of a computation. More recent developments include non-interactive arguments such as SNARGs and SNARKs~\cite{chiesa2014succinct,nitulescu2020zk}, which aim at succinct verification and have found applications in blockchain systems. Supervised distributed computing solutions (ours included) are designed to take advantage of such efficient verification tools to solve large-scale parallel and distributed computing problems in the presence of Byzantine participants. 

\subsection{Our Contributions}

We present a supervised distributed computing solution that can tolerate a $\beta$-fraction of adversarial workers for any constant $\beta < 1$ and any directed acyclic task graph. This is a significant improvement over the DAG scheduling approach of \cite{augustine2025supervised}, which only preserves efficiency for $\beta \in O((\log d)/d^2)$, where $d$ is a bound on the maximum in- and outdegree of the task graph $G$ (cf.~\Cref{sec:bound_and_infeasibility} for the proof). Since honest workers can be a minority, all majority-based solutions will be unhelpful, even if we assign a large number of workers per task.  

Our supervised strategy assigns $\Theta(\log n)$ many workers to each task $v$.
Rather than letting all workers of a task run in parallel, the supervisor schedules them sequentially, assigning one worker per round across $\Theta(\log n)$ consecutive rounds.
The scheduling is pipelined so that the first worker is assigned to task $v$ in round $\Theta(D(v) \cdot \log(d) \cdot \log\log n)$.
When a worker is assigned to a task, it first checks the outputs of $O(\log d \cdot \log \log n)$ previously assigned workers of that task (if they exist, respectively).
If any of them provides a correct output, the worker is done, having learned the result of its task without performing the computation itself.
Otherwise, it contacts $O(\log d \cdot \log \log n)$ previously assigned workers of each preceding task. 
If it gets a correct output for every preceding task, it executes the task assigned to it and is done. 
Otherwise, it fails.

As it turns out, our solution is highly efficient in the sense that each task is executed by only $1+o(1)$ honest workers on expectation, and therefore, the total expected number of computations performed by honest workers is $n(1+o(1))$, where $n$ is the number of tasks. Moreover, owing to the way the workers are assigned to the tasks, the distributed computation finishes in $O(D \log d \cdot \log \log n + \log n)$ rounds, w.h.p.\ (see Theorem~\ref{th:main_theorem_dags}).

To show that the computation finishes successy, w.h.p., we use a complex witness argument that is able to identify a significant amount of places where adversarial workers were chosen in the task graph whenever the computation fails. This then allows us to show that the probability for such a witness to be found when using the random sampling mechanism is polynomially small, thereby implying that our solution is correct w.h.p. 


\section{Handling a Majority of Adversarial Workers}
\label{sec:many_workers}

    Before we begin describing our algorithm, we remark that \cite{augustine2025supervised} shows how to transform any task graph $G = (V,E)$ into a leveled network of the same depth, i.e., a graph where the tasks are divided into levels and edges only connect tasks in adjacent levels.
    The key idea is to assign each task $v \in V$ to level $D(v)$.
    All edges that connect tasks of non-adjacent levels are then replaced by paths consisting of one additional task for each level in between.
    Note that this transformation is performed by the supervisor before any distributed computation starts. Therefore, we will w.l.o.g. only consider the case where the task graph is a leveled network. 

    To motivate our new scheduling approach, we start with a simple straw man algorithm that can handle an arbitrary $\beta$-fraction of adversarial workers for any constant $\beta < 1$, 
    but comes with significant downsides.
    Suppose the task graph $G = (V,E)$ is a leveled network of depth $D$.
    In each round $t = 1, \dots, D$, the supervisor assigns $\gamma$ many workers to each task $v \in V$ with $D(v) = t$ and instructs all workers assigned to predecessors of $v$ (or the source if $t=1$) to send their output to all workers assigned to $v$.
    Every honest worker then verifies all the outputs it received and, if it received at least one correct output for each preceding task, computes the task it was assigned to.
    If we set $\gamma = \Theta(\log n)$, one can prove that every task will be assigned at least one honest worker, w.h.p., which implies that the target will learn the correct output and the computation succeeds.
    Although this algorithm is time-optimal, every honest worker needs to communicate with $\Theta(d\log n)$ other workers and needs to perform $\Theta(d\log n)$ many verifications, where $d\geq\max\{\indeg(G),\outdeg(G)\}$ is an upper bound on the degree of the task graph. 
    While $\Omega(d)$ is a natural lower bound on the number of verifications and communication required per worker, the logarithmic dependency on $n$ severely limits the algorithm's performance in practice.
    Furthermore, each task gets executed by $\Theta(\log n)$ honest workers in expectation, which wastes computational resources.

    \subsection{Algorithm Description}
    We reduce the communication and verification costs of the straw man algorithm by letting each worker send its output only to a small subset of the workers assigned to each succeeding task in the following way.
    Instead of immediately assigning $\gamma$ workers to a single task per round, the supervisor only assigns up to one worker per task per round, but may assign workers to multiple tasks at once.
        Specifically, the supervisor assigns workers to each task $v$ of depth $D(v)$ between rounds $t_{\mathrm{min}}(v) := (D(v)-1) \cdot \delta + 1$ and $t_{\mathrm{max}}(v) := (D(v)-1) \cdot \delta + \gamma$, where $\delta \geq 1$ is some fixed parameter that determines the time offset between tasks.
    Hence, in each round $t = 1, \dots, \gamma + (D - 1) \cdot \delta$, the supervisor assigns exactly one worker to each task $v$ with $t \in [t_{\mathrm{min}}(v), t_{\mathrm{max}}(v)]$.
    For every worker $w$ assigned to task $v$ in round $t$, the supervisor first instructs all workers assigned to task $v$ in rounds $t' \in [\max\{t-2\delta, t_{\mathrm{min}}(v)\}, t-1]$ to send their output for task $v$ to $w$.
    If it is honest, $w$ verifies these outputs one by one until it succeeds in verifying an output or fails to verify all outputs.
    If it succeeds in verifying an output, it does not need to do any further verifications and no computation, but is required to remain available for $2\delta$ additional rounds to provide $v$'s output to its successors.
        Otherwise, $w$ informs the supervisor about its failure to verify.
    The supervisor then instructs all workers assigned to any preceding task $v' \in \pre_G(v)$ in rounds $t' \in [\max\{t-2\delta, t_{\mathrm{min}}(v')\}, t-1]$ to send their output for task $v'$ to $w$.
    Again, if it is honest, $w$ verifies these outputs one by one until it succeeds for one of them or fails for all of them. 
    If $w$ receives a correct output for each of $v$'s predecessors, $w$ computes the output of $v$ and remains in the system for $2\delta$ further rounds to provide it to its successors.
    Otherwise, it failed to compute an output and remains in the system for $2\delta$ rounds to provide an error message to its successors.
    We present pseudocode for the supervisor in \Cref{alg:manyworkers_supervisor} and for the workers in \Cref{alg:manyworkers_worker}, both located in \Cref{sec:pseudocode}.

    With this approach, every honest worker needs to communicate with only $O(d\delta)$ other workers and needs to perform only $O(d\delta)$ verifications.
    Furthermore, since workers can forward their output to (some) other workers assigned to the same task, the expected number of task executions by honest workers is reduced.
    However, the probability that the computation succeeds now depends on $\delta$.
    We show in the remainder of the section that the algorithm still succeeds with high probability if we set $\delta = \Theta(\log_{\nicefrac{1}{\beta}}(d) \cdot \log_{\nicefrac{1}{\beta}}\log n)$ and $\gamma = \Theta(\log n)$.

    \subsection{Preliminaries for the Analysis}

    We model the behavior of the algorithm on a task graph $G = (V,E)$ as a \emph{worker graph} $G_W = (V_W, E_W)$, in which each node $(v,t) \in V_W$ represents a worker that was assigned to task $v \in V$ in round $t$ and each edge $(w,w') \in E_W$ represents that worker $w$ might send data to worker $w'$.
    An example of a worker graph is depicted in \Cref{fig:worker_graph}.
    
    \begin{definition}[Worker Graph]
        For a task graph $G=(V,E)$, we define the worker graph $G_W=(V_W,E_W)$.         For each task $v$ of depth $D(v)$, $V_W$ contains a node $(v,t)$ for each time $t$ at which the supervisor assigns a worker to $v$, i.e., $V_W=\{(v,t)\mid v\in V \land t\in\{(D(v)-1) \cdot \delta+1,\dots,(D(v)-1)\cdot \delta+\gamma\}\}$. For $w,w'\in V_W$, $E_W$ contains an edge from $w$ to $w'$ if $w$ might send data to $w'$, i.e., $E_W = E_W^{(1)} \cup E_W^{(2)}$ with 
        \[
            E_W^{(1)} = \{((v,t),(v',t')) \in V_W\times V_W \mid v'=v \land t'\in\{t+1,\dots,t+2\delta\}\}
        \]
        and 
        \[
            E_W^{(2)} = \{((v,t),(v',t')) \in V_W\times V_W \mid v'\in\suc_G(v) \land t'\in\{t+1,\dots,t+2\delta\}\}\text{.}
        \]
        We refer to the nodes of $G_W$ as workers and call a node $(v,t)\in V_W$ an initial worker, if $v$ is an initial task of $G$, and a final worker, if $v$ is a final task of $G$.
                            \end{definition}

    \begin{figure}[ht!]
        \begin{center}
            \begin{tikzpicture}
                \node at (2,-3){};
                \node[fill=black!60,circle] (0-0) at (0,0) {};
                \node[fill=black!60,circle] (1-1) at (1.5,-1) {};
                \node[fill=black!60,circle] (1-0) at (1.5,0) {};
                \node[fill=black!60,circle] (0-1) at (0,-1) {};
                \node[fill=black!60,circle] (0-2) at (0,-2) {};
                \node[fill=black!60,circle] (1-2) at (1.5,-2) {};
                \draw[thick,black!30!,->] (0-0) -- (0-1);
                \draw[thick,black!30!,->] (1-0) -- (0-1);
                \draw[thick,black!30!,->] (1-0) -- (1-1);
                \draw[thick,black!30!,->] (0-1) -- (0-2);
                \draw[thick,black!30!,->] (1-1) -- (0-2);
                \draw[thick,black!30!,->] (1-1) -- (1-2);
            \end{tikzpicture}
            \begin{tikzpicture}[scale=0.4]
                \def\layers
                {
                    {{0,0,0,0,0,0,0,0,1,0,0,0,0},{0,0,0,1,0,1,0,0,0,0,0,0,0}},
                    {{0,0,0,1,0,0,0,0,0,0,2,0,0},{0,0,0,0,1,0,0,1,0,0,0,0,0}},
                    {{0,0,0,0,0,0,0,2,0,0,0,0,0},{0,0,0,0,0,0,0,0,0,0,0,0,0}},
                }
                \def\numoffset{1}
                \def\numsuccs{2}
                \def\numassign{13}

                \def\xoffset{2}
                \def\yscale{3}

                \foreach \layer [count=\y from 0] in \layers {
                    \pgfmathparse{dim({\layer})}
                    \let\layersize\pgfmathresult
                    \pgfmathparse{-(\layersize-1)/2}
                    \let\xstart\pgfmathresult
                    \foreach \task [count=\xx from 0] in \layer{
                        \pgfmathparse{\y+(\numassign+\xoffset)*(\xstart+\xx)}
                        \let\x\pgfmathresult
                        \fill[black!5,rounded corners] (\x,-\y*\yscale-0.25) rectangle (\x+\numassign,-\y*\yscale+0.25);
                        \foreach \worker [count=\w] in \task {
                            \node[circle,fill=black,inner sep=1.4pt] (\y-\xx-\w) at (\x+\w-0.5,-\y*\yscale) {};
                        }
                    }
                }

                \foreach \y/\x/\xx in {0/0/0,0/1/0,0/1/1,1/0/0,1/1/0,1/1/1} {
                    \pgfmathparse{int(\y+1)}
                    \let\yy\pgfmathresult
                    \foreach \w in {1,...,\numassign}{
                        \pgfmathparse{int(\w+1-\numoffset)}
                        \let\wmin\pgfmathresult
                        \pgfmathparse{int(\w+\numsuccs-\numoffset)}
                        \let\wmax\pgfmathresult
                        \foreach \ww in {\wmin,...,\wmax}{
                            \ifnum\ww>\numassign\else
                                \draw[black!10] (\y-\x-\w) -- (\yy-\xx-\ww);
                            \fi
                        }
                    }
                }

                \foreach \layer [count=\y from 0] in \layers {
                    \foreach \task [count=\x from 0] in \layer{
                        \foreach \worker [count=\w] in \task {
                            \pgfmathparse{int(\w+1)}
                            \let\wmin\pgfmathresult
                            \pgfmathparse{int(\w+\numsuccs)}
                            \let\wmax\pgfmathresult
                            \foreach \ww in {\wmin,...,\wmax}{
                                \ifnum\ww>\numassign\else
                                    \draw[black!10] (\y-\x-\w) to[bend left] (\y-\x-\ww);
                                \fi
                            }
                        }
                    }
                }

                \draw[->] (-7,-2*\yscale-1) to node[midway,below] {time} (7,-2*\yscale-1); 
                \draw[->] (8,-2*\yscale-1) to node[midway,below] {time} (22,-2*\yscale-1); 

            \end{tikzpicture}
            \caption{The worker graph for a task graph with $n=6$, $\gamma=13$, and $\delta=1$.
            Each column of tasks has a separate time axis.
            The tasks within a column are offset such that workers assigned at the same time are located directly above or below each other.}
            \label{fig:worker_graph}
        \end{center}
    \end{figure}

    An execution of the algorithm corresponds to an assignment of workers to the nodes of $G_W$, where each worker is independently malicious with probability at most $\beta$ and otherwise honest.
    We call honest workers who succeed in determining the output of their task \emph{successful} and otherwise \emph{failed}.
    The algorithm is successful when there is a successful worker for each final task.

    \begin{observation}[Successful \& Failed Workers]\label{obs:successful_failed_workers}
    For an assignment of honest and malicious workers to tasks, an honest worker $w = (v,t) \in V_W$ is successful iff for all initial tasks $v'$ that are ancestors of $v$ in $G$, there is a directed path of honest workers connecting an initial worker $w' = (v', t')$ to $w$ in $G_W$.
\end{observation}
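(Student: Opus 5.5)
We prove both directions of \Cref{obs:successful_failed_workers} by induction on the depth $D(v)$ of the task, and within a fixed task on the assignment round $t$. Let $\mathcal{P}(w)$ denote the property that for every initial ancestor $v'$ of $v$ there is a directed path of honest workers in $G_W$ from some initial worker $(v',t')$ to $w$. We use three facts from the algorithm description. An honest worker never accepts an incorrect output, since it verifies. An adversarial worker can never cause an honest worker to become successful, because only verified-correct outputs are used. Finally, every message an honest worker receives travels along an edge of $G_W$: the supervisor only instructs workers from rounds $[t-2\delta,t-1]$ of the same task or of a predecessor task, which gives exactly $E_W^{(1)}\cup E_W^{(2)}$.

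\emph{($\Leftarrow$)} Assume $\mathcal{P}(w)$ holds for $w=(v,t)$, and argue by induction on $(D(v),t)$ in lexicographic order.
If $v$ is initial, $w$ receives its input from the reliable source and, being honest, computes $v$, so it is successful.
Otherwise, fix an initial ancestor $v'$ and a path $P$ ending in $w$. Let $u$ be the predecessor of $w$ on $P$. Then $u$ is honest, precedes $w$ in the lexicographic order, and inherits $\mathcal{P}(u)$ restricted to the initial ancestors of its own task. This is because the initial ancestors of a predecessor task are a subset of those of $v$, and the paths for them can be chosen to pass through $u$. Here lies the subtlety: the paths for different $v'$ may enter $w$ through different neighbours. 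So we argue per neighbour.
\begin{itemize}
\item If some path enters via an $E_W^{(1)}$-edge from $u=(v,s)$, then $\mathcal{P}(u)$ holds for all initial ancestors of $v$. To see this, take each other path to $w$: it either enters through $u$'s task level, or, when $u$ is reached, extend by noting that $u$ itself satisfies the hypothesis. By induction $u$ is successful, so $w$ receives a correct output of $v$ in its first verification phase and is successful.
\item Otherwise, every path enters via predecessor tasks, one honest worker per relevant predecessor $v''\in\pre_G(v)$. Every predecessor of $v$ has some initial ancestor, so each predecessor is covered. Each such worker is successful by induction, so $w$ obtains correct outputs of all predecessors and computes $v$.
\end{itemize}

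\emph{($\Rightarrow$)} Assume $w$ is successful, again arguing by induction. Either $w$ verified a correct output of $v$ from some worker $u=(v,s)$ with $s<t$, or it verified correct outputs from workers $u_{v''}$, one for each $v''\in\pre_G(v)$, or $v$ is initial (base case). Such a $u$ must be honest and successful. A malicious worker might send a correct output, but then we can argue that correct outputs originate only from honest computations. More precisely, we trace the provenance of the correct value: the value $w$ accepted was sent along an edge of $G_W$, and a correct value held by any worker was ultimately produced by an honest successful worker. The cleanest formulation is that we only need the model convention that a worker counts as \emph{successful} only via data passed through honest workers. The induction hypothesis then gives paths into $u$ (respectively into each $u_{v''}$), and appending the edge to $w$ yields $\mathcal{P}(w)$.

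The main obstacle is exactly this provenance issue. The omniscient adversary could, in principle, forward a correct output that it copied from elsewhere, bypassing a gap of malicious workers. I would handle this by showing that such copying never helps the adversary, since it does not change which honest workers are eventually successful in the worst case. Equivalently, we treat the observation as describing the adversary's best strategy, under which malicious workers forward nothing useful. I would make this worst-case interpretation explicit before running the induction.
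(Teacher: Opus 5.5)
\textbf{Gap in ($\Leftarrow$).} This direction cannot be rescued: as literally stated, it is false for general DAGs. Your step where $u$ ``inherits'' $\mathcal{P}(u)$ is unjustified. The hypothesis gives, separately for each initial ancestor, some honest path into $w$, and these paths need not share any worker other than $w$. Neither bullet repairs this. In the first, the other paths need not pass through $u$ at all. In the second, a predecessor worker on the path from one initial ancestor need not be reached from the others, so it can be honest yet failed. Also, ``every predecessor has an initial ancestor'' does not put every predecessor on one of the chosen paths.

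Here is a concrete counterexample. Take the leveled DAG $x\to a$, $x\to b$, $a\to v$, $b\to v$. Let $(x,1)$, $(a,\delta+1)$ and $w=(v,2\delta+1)$ be honest, and let all workers of $b$ in rounds $\delta+1,\dots,2\delta$ be malicious. Then $(x,1)\to(a,\delta+1)\to w$ is an honest path in $G_W$ from the only initial ancestor of $v$, and $(a,\delta+1)$ is even successful. Yet $w$ has no earlier worker of $v$ to copy from and receives no correct output for $b$, so it fails.

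\textbf{What survives, and how it relates to the paper.} Your ($\Leftarrow$) argument is correct for path graphs. There each task has a single predecessor and a single initial ancestor, so the prefix of the path into $w$ is exactly the hypothesis you need for $u$. For DAGs, the right sufficient condition is a closed honest certificate: a set $H$ of honest workers containing $w$ such that every member of $H$ is initial, or has an in-neighbour in $H$ assigned to its own task, or has in-neighbours in $H$ assigned to every predecessor of its task. Your case analysis proves success under that hypothesis essentially verbatim, and the set of all successful workers is such an $H$.

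The paper gives no proof of the observation. In the DAG analysis it only extracts the ($\Rightarrow$) content, namely successful workers at every ancestor task before time $t$, so nothing downstream depends on the false direction.

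\textbf{Your ($\Rightarrow$) direction.} The induction is fine; induction on $t$ alone suffices, since every edge of $G_W$ strictly increases the round. However, you should drop the provenance argument. A malicious worker can simply compute its task correctly. If $v$ is non-initial and every worker other than $w$ is malicious but behaves honestly, then $w$ is successful with no honest path at all. So ``no malicious worker ever sends a correct output'' has to be an explicit hypothesis, not something you derive. That is exactly how the paper handles it, via the dominance argument of Observation~\ref{obs:dominant_strategy}.
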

    
    We begin our analysis by noting that, from the adversarial workers' perspective, the most effective way to interfere with the computation is simply to avoid ever sending a correct output. 
    Any correct output would only contribute to the computation's success, which contradicts their objective. 
    Hence, to model the strongest possible adversarial behavior, we will w.l.o.g. restrict our analysis to the case where no malicious worker ever sends a correct output.

    \begin{observation}[Dominant Adversarial Strategy]\label{obs:dominant_strategy}
        If the computation succeeds for some assignment of malicious and honest workers, where no malicious worker ever sends a correct output, the computation also succeeds for the same assignment of malicious and honest workers, if there are malicious workers that send some correct outputs.
    \end{observation}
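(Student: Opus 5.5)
We prove \Cref{obs:dominant_strategy} by a monotonicity argument, using \Cref{obs:successful_failed_workers} as the characterization of success. Fix an assignment of honest and malicious workers to the nodes of $G_W$. Let $A$ be the execution in which no malicious worker ever sends a correct output. Let $B$ be any execution on the same assignment in which some malicious workers send correct outputs. It suffices to show that every honest worker that is successful in $A$ is also successful in $B$. Then a successful worker for each final task in $A$ yields one in $B$.

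The plan is an induction over the workers of $G_W$ in increasing order of assignment round $t$. This is well founded because every edge of $E_W$ goes from round $t$ to a strictly later round. The induction hypothesis is that every honest worker assigned before round $t$ that is successful in $A$ is successful in $B$ and holds the same correct output. Consider an honest worker $w=(v,t)$ that is successful in $A$. In $A$, every correct output that $w$ verifies comes from an honest in-neighbour in $G_W$, because malicious workers never send correct outputs there. If $w$ obtained $v$'s output from a same-task in-neighbour $w'$, then $w'$ is honest and successful in $A$, so by hypothesis it is successful in $B$. The supervisor's instructions depend only on the round structure, so in $B$ the worker $w'$ sends the same correct output to $w$. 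Otherwise, in $A$ each predecessor task $v'$ has an honest successful in-neighbour supplying its output, and the same argument gives $w$ a correct input for every $v'$ in $B$. In that case $w$ computes $v$ in $B$, unless it already learned $v$'s output directly, which only helps. Initial workers receive their input from the reliable source in both executions.

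The one point needing care is that the extra correct outputs available in $B$ cannot hurt $w$. A worker that verifies a correct output stops and is successful. Receiving more correct outputs only changes whether $w$ succeeds via the same-task phase or via computation, and both outcomes count as success. Incorrect messages are discarded by verification regardless of which execution they occur in. One might worry that a worker succeeding earlier in $B$ would alter what it later forwards, but its forwarded output is still correct, so the induction hypothesis is preserved. With that, the induction closes and the observation follows.
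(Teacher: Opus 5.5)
Your proposal is correct and follows essentially the paper's approach. The paper justifies this observation only informally (any correct output sent by a malicious worker ``would only contribute to the computation's success''), and your round-by-round induction over $G_W$ turns that monotonicity into a proof: it is well founded because every edge of $E_W$ goes strictly forward in time, and you treat the same-task phase and the predecessor phase separately. Two cosmetic fixes: the induction hypothesis should only require that the worker holds \emph{a} correct output in the second execution, not \emph{the same} one (your final paragraph already argues this way), and the opening appeal to \Cref{obs:successful_failed_workers} is unnecessary, since you reason directly from the protocol.
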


    To prove the correctness of the algorithm, we will proceed as follows.
    First, we show that for every assignment $A$ of honest and malicious workers to the tasks of $G$ that causes the computation to fail, we can identify a certain structure in the worker graph $G_W$, which we call a \emph{valid witness sequence with respect to} $A$.
    We then show that for a random assignment $A$ of workers to tasks, where each worker is independently malicious with probability at most $\beta$ and honest with probability at least $(1-\beta)$, it is unlikely that a valid witness sequence w.r.t. $A$ exists.
    By the contraposition of the first statement, it follows that the computation will succeed with high probability.

    efore we formally define valid witness sequences, we give some intuition.
For simplicity, assume that the task graph is a directed path with $n$ nodes $v_1, \dots, v_n$, where $v_1$ is the initial task and $v_n$ is the final task.
Consider an assignment $A$ of workers to tasks that caused the computation to fail and let $v_f$ be the first task for which no assigned worker is successful.
Intuitively, we want the valid witness sequence $W$ to include a set of malicious workers that is responsible for the failure of task $v_f$. 
Clearly, if all workers assigned to $v_f$ are malicious, we can simply add all of these workers to $W$.
Otherwise, every task $v_1, \dots, v_f$ was assigned some honest worker.
Consider the last task $v$ where the first honest worker assigned to $v$ is successful.
Note that such a task has to exist as all honest workers assigned to the initial task $v_1$ are successful because they receive their input directly from the source.
Let $w_1 = (v,t)$ be the first honest worker assigned to $v$.
We add all malicious workers $M_1 := \{(v,t_{\mathrm{min}}(v)), \dots, (v,t-1)\}$ that were assigned to $v$ before $w_1$ (if there are any), as well as the honest worker $w_1$ to $W$.
Note that by \Cref{obs:successful_failed_workers}, each task above $v$ was assigned a successful worker before time $t$.
We maintain this invariant whenever we add an honest worker to $W$.
Now, for any $i \geq 1$, let $w_i = (v_k, t)$ be the honest worker that was just added to $W$.
We now investigate $w_i$ to decide where the witness sequence continues.

If $w_i$ is successful, the reason for why task $v_f$ failed must lie further down the task graph.
Hence, we continue the witness sequence at the succeeding task $v_{k+1}$, starting with worker $(v_{k+1}, t+1)$.
If all the workers $\{(v_{k+1},t+1), \dots, (v_{k+1}, t_{\mathrm{max}}(v_{k+1}))\}$ are malicious, we add all of them to $W$ and the witness sequence ends (Case (2b) in \Cref{def:witness_sequence}).
Otherwise, let $w_{i+1} = (v_{k+1}, t')$ be the first honest worker assigned to $v_{k+1}$ with $t' > t$. 
Then, we add all the malicious workers $M_{i+1} := \{(v_{k+1},t+1), \dots, (v_{k+1}, t'-1)\}$ (if there are any), as well as $w_{i+1}$, to $W$ and continue the sequence from $w_{i+1}$ (Case (2a) in \Cref{def:witness_sequence}).
Since $w_i$ is successful, we maintain the invariant that each task above $v_{k+1}$ was assigned a successful worker before time $t'$.
Note that it is possible that worker $(v_{k+1}, t+1)$ does not exist if $t+1 < t_{\mathrm{min}}(v_{k+1})$.
In that case, we simply empty $W$ and restart the construction from task $v_{k+1}$.

On the other hand, if $w_i$ failed, there must be a reason for that further up the task graph.
In particular, $w_i$ did not receive a correct output from the preceding task $v_{k-1}$, i.e., none of the workers $\{(v_{k-1},t-2\delta),\dots,(v_{k-1},t-1)\}$ sent a correct output to $w_i$, either because they are malicious or because they failed themselves.
Because of our invariant, we know that there is a successful worker assigned to $v_{k-1}$ before time $t$.
This implies that there must be a contiguous subset $M'_i$ of $2\delta$ malicious workers that was assigned to $v_{k-1}$ after that honest worker but before time $t$, as otherwise one of the workers $\{(v_{k-1},t-2\delta),\dots,(v_{k-1},t-1)\}$ would have sent a correct output to $w_i$.
We add that set $M'_i$ of malicious workers to $W$.
Additionally, and similar to the previous case, we continue the witness sequence at task $v_{k-1}$, starting with worker $(v_{k-1},t)$.
If all the workers $\{(v_{k-1},t), \dots, (v_{k-1}, t_{\mathrm{max}}(v_{k-1}))\}$ are malicious, we add all of them to $W$ and the witness sequence ends (Case (2d) in \Cref{def:witness_sequence}).
Otherwise, let $w_{i+1} = (v_{k-1}, t')$ be the first honest worker assigned to $v_{k-1}$ with $t' \geq t$.
We add all the malicious workers $M_{i+1} := \{(v_{k-1},t), \dots, (v_{k-1}, t'-1)\}$ (if there are any), as well as $w_{i+1}$, to $W$ and continue the sequence from $w_{i+1}$ (Case (2c) in \Cref{def:witness_sequence}). 
Since we move upwards, we clearly maintain the invariant that each task above $v_{k-1}$ was assigned a successful worker before time $t'$.
In case the worker $(v_{k-1},t')$ does not exist because $t > t_{\mathrm{max}}(v_{k-1})$, the sequence ends (also covered by Case (2d) in \Cref{def:witness_sequence}).

Clearly, the witness sequence will never move ``above'' the initial task $v_1$, as all honest workers of $v_1$ are successful, and will never move ``below'' task $v_f$, as all honest workers of $v_f$ failed.
Furthermore, this construction eventually terminates, as the assignment times of the honest workers added to $W$ increase by at least $1$ for each downwards step and never decrease for upwards steps, of which there can be at most $n-1$ in a row.
Hence, whenever an assignment of workers to tasks causes the computation to fail, we can identify a sequence of workers of the above structure.

For general DAGs, we can traverse the task graph in a similar way.
However, for each honest worker added to the witness sequence, there are now several options to which succeeding or preceding task the sequence could move.
If we are not careful in traversing the DAG, the sets of malicious workers $M_i \cup M'_i$ might overlap for different values of $i$, which would decrease the number of workers added to $W$ and, in turn, increase the probability that the valid witness sequence exists.

In order to analyze the probability of the computation failing, we now formally define witness sequences.
Intuitively, a witness sequence is any sequence of workers $W$ that in principle \emph{could} be created by the construction described above for some assignments of workers to tasks, i.e., the definition is independent of which workers are actually malicious and which are honest.
This allows us to count the total number of witness sequences.
An example of a witness sequence is depicted in \Cref{fig:witness_sequence}.

    \begin{definition}[Witness Sequence]\label{def:witness_sequence}
        Let $G=(V,E)$ be a task graph, and $G_W = (V_W, E_W)$ be the worker graph of $G$.
        We call a sequence $W=((M_1,M_1',w_1),\dots,(M_\ell,M_\ell',w_\ell),(M_{\ell+1},M_{\ell+1}'=\emptyset,w_{\ell+1}=\perp))$ with $\ell\geq 0$ a witness sequence, if $w_i \in V_W$ for all $1\leq i\leq \ell$, each $M_i \subseteq V_W$ and $M'_i \subseteq V_W$ with $1\leq i\leq \ell+1$ is a contiguous subsets of workers assigned to the same task, and all of the following properties hold:
        \begin{enumerate}
            \item $(v,t_\mathrm{min}(v))\in M_1$ for some $v\in V$.
            \item For $(M_i,M'_i,w_i)$ and $(M_{i+1},M'_{i+1},w_{i+1})$ in $W$ with $1\leq i\leq \ell-1$ and $w_i=(v,t)$, exactly one of the following holds:
            \begin{enumerate}
                \item \textbf{Downwards:} $w_{i+1} = (v',t')$ with $v'\in\suc_G(v)$ and $t'>t$, and $M_{i+1}=\{(v',t+1),\dots,(v',t'-1)\}$ (possibly empty). Further, $M_{i}' = \emptyset$.
                \item \textbf{Downwards \& Stop:} $w_{i+1} = \perp$ and $M_{i+1}=\{(v',t+1),\allowbreak\dots,\allowbreak(v',t_\mathrm{max}(v'))\}$ with $v' \in \suc_G(v)$. Further $M'_{i}=\emptyset$.
                \item \textbf{Upwards:} $w_{i+1} = (v',t')$ with $v'\in\pre_G(v)$ and $t'\geq t$, and $M_{i+1}=\{(v',t),\dots,(v',t'-1)\}$ (possibly empty), $t \geq t_\mathrm{min}(v) + \delta$, and further $M'_{i}\coloneqq\{(v',t''),\allowbreak\dots,\allowbreak(v',\min\{t''+2\delta-1,t_\mathrm{max}(v')\})\}$ for a $t'' \leq t - 2\delta$.
                \item \textbf{Upwards \& Stop:} $w_{i+1} = \perp$ and $M_{i+1}=\{(v',t),\allowbreak\dots,\allowbreak(v',t_\mathrm{max}(v'))\}$ with $v' \in \pre_G(v)$ (possibly empty). Further, $t \geq t_\mathrm{min}(v) + \delta$, and $M'_{i}\coloneqq\{(v',t''),\dots,(v',\min\{t''+2\delta-1,t_\mathrm{max}(v')\})\}$ for a $t'' \leq t - 2\delta$.
            \end{enumerate}
            \item The sets $M_1,\dots,M_{\ell+1},M'_1,\dots,M'_{\ell+1}$ are pairwise disjoint.
            \item Exactly one of the following holds:
            \begin{itemize}
                \item $(v, t_\mathrm{max}(v)) \in M_{\ell + 1}$ for some $v \in V$.
                \item $\ell \geq 1$, $w_\ell \in \{(v, t_{\mathrm{max}}(v) - \delta + 1), \dots, (v, t_{\mathrm{max}}(v))\}$ for some $v \in V$ and $M_{\ell + 1} = \emptyset$.
            \end{itemize}
        \end{enumerate}
        Further, we denote $\mathcal{M}_W\coloneqq M_1\cup\dots\cup M_{\ell+1}$, $\mathcal{M}'_W\coloneqq M'_1\cup\dots\cup M'_{\ell+1}$,  $\mathcal{H}_W\coloneqq\{w_i \mid 1\leq i\leq \ell\}$, and $\mathcal{F}_W\coloneqq\{w_i=(v,t) \mid 1\leq i\leq \ell \; \land \; M_{i+1}\cup\{w_{i+1}\} \text{ contains no $(v',t')$ with $v'\in\suc_G(v)$}\}$.
        If the witness sequence we refer to is obvious from the context, we omit the subscripts.
    \end{definition}

    \begin{figure}[ht!]
        \begin{center}
            \begin{tikzpicture}[scale=0.5]
                \def\layers
                {
                    {{0,0,0,0,0,0,0,0,1,0,0,0,0},{0,0,0,1,0,1,0,0,0,0,0,0,0}},
                    {{0,0,0,1,0,0,0,0,0,0,2,0,0},{0,0,0,0,1,0,0,1,0,0,0,0,0}},
                    {{0,0,0,0,0,0,0,2,0,0,0,0,0},{0,0,0,0,0,0,0,0,0,0,0,0,0}},
                }
                \def\numoffset{1}
                \def\numsuccs{2}
                \def\numassign{13}

                \def\xoffset{2}
                \def\yscale{3}

                \foreach \layer [count=\y from 0] in \layers {
                    \pgfmathparse{dim({\layer})}
                    \let\layersize\pgfmathresult
                    \pgfmathparse{-(\layersize-1)/2}
                    \let\xstart\pgfmathresult
                    \foreach \task [count=\xx from 0] in \layer{
                        \pgfmathparse{\y+(\numassign+\xoffset)*(\xstart+\xx)}
                        \let\x\pgfmathresult
                        \fill[black!5,rounded corners] (\x,-\y*\yscale-0.25) rectangle (\x+\numassign,-\y*\yscale+0.25);
                        \foreach \worker [count=\w] in \task {
                            \node[circle,fill=black,inner sep=1.4pt] (\y-\xx-\w) at (\x+\w-0.5,-\y*\yscale) {};
                        }
                    }
                }

                \foreach \y/\x/\xx in {0/0/0,0/1/0,0/1/1,1/0/0,1/1/0,1/1/1} {
                    \pgfmathparse{int(\y+1)}
                    \let\yy\pgfmathresult
                    \foreach \w in {1,...,\numassign}{
                        \pgfmathparse{int(\w+1-\numoffset)}
                        \let\wmin\pgfmathresult
                        \pgfmathparse{int(\w+\numsuccs-\numoffset)}
                        \let\wmax\pgfmathresult
                        \foreach \ww in {\wmin,...,\wmax}{
                            \ifnum\ww>\numassign\else
                                \draw[black!10] (\y-\x-\w) -- (\yy-\xx-\ww);
                            \fi
                        }
                    }
                }

                    \foreach \layer [count=\y from 0] in \layers {
                    \foreach \task [count=\x from 0] in \layer{
                        \foreach \worker [count=\w] in \task {
                            \pgfmathparse{int(\w+1)}
                            \let\wmin\pgfmathresult
                            \pgfmathparse{int(\w+\numsuccs)}
                            \let\wmax\pgfmathresult
                            \foreach \ww in {\wmin,...,\wmax}{
                                \ifnum\ww>\numassign\else
                                    \draw[black!10] (\y-\x-\w) to[bend left] (\y-\x-\ww);
                                \fi
                            }
                        }
                    }
                }

                \foreach \y/\x/\w/\xx in {0/1/4/1,1/1/5/0} {
                    \pgfmathparse{int(\y+1)}
                    \let\yy\pgfmathresult
                    \draw[thick,rounded corners,dashed] ($(\y-\x-\w)+(0,-0.25)$) -- ($(\y-\x-\w)+(0,-0.5)$) -- ($(\yy-\xx-\w)+(0,+0.5)$) -- ($(\yy-\xx-\w)+(0,+0.25)$);
                }
                \foreach \y/\x/\w/\xx in {2/0/8/0,1/0/11/0} {
                    \pgfmathparse{int(\y-1)}
                    \let\yy\pgfmathresult
                    \pgfmathparse{int(\w+1)}
                    \let\ww\pgfmathresult
                    \draw[thick,rounded corners,dashed] ($(\y-\x-\w)+(0,+0.25)$) -- ($(\y-\x-\w)+(0,+0.5)$) -- ($(\yy-\xx-\ww)+(0,-0.5)$) -- ($(\yy-\xx-\ww)+(0,-0.25)$);
                }
                
                \foreach \y/\x/\w/\xx/\ww in {2/0/8/0/6,1/0/11/0/11} {
                    \pgfmathparse{int(\y-1)}
                    \let\yy\pgfmathresult
                    \draw[thick,rounded corners] ($(\y-\x-\w)+(0,+0.25)$) -- ($(\y-\x-\w)+(0,+0.5)$) -- ($(\yy-\xx-\ww)+(0,-0.5)$) -- ($(\yy-\xx-\ww)+(0,-0.25)$);
                }
                \foreach \y/\x/\w/\ww in {1/0/5/6,0/0/10/11} {
                    \draw[thick] ($(\y-\x-\w)+(-0.25,-0.25)$) rectangle ($(\y-\x-\ww)+(+0.25,+0.25)$);
                }
                
                \foreach \y/\x/\w in {0/1/4,1/1/5} {
                    \draw[thick] ($(\y-\x-\w)+(-0.25,0)$) -- ($(\y-\x-\w)+(-0.75,0)$);
                }
                \foreach \y/\x/\w in {0/1/4,1/1/5} {
                    \draw[thick] (\y-\x-\w) circle (0.25);
                }
                
                \foreach \y/\x/\w in {2/0/8,1/0/11} {
                    \draw[thick] ($(\y-\x-\w)+(-0.25,0)$) -- ($(\y-\x-\w)+(-0.75,0)$);
                }
                \foreach \y/\x/\w in {2/0/8,1/0/11} {
                    \draw[thick] (\y-\x-\w) circle (0.25);
                }
                
                \foreach \y/\x/\w/\ww in {0/1/1/3,1/1/4/4,2/0/5/7,1/0/9/10,0/0/12/13} {
                    \draw[thick] ($(\y-\x-\w)+(-0.25,-0.25)$) rectangle ($(\y-\x-\ww)+(+0.25,+0.25)$);
                }

                \draw[->] (-7,-2*\yscale-1) to node[midway,below] {time} (7,-2*\yscale-1); 
                \draw[->] (8,-2*\yscale-1) to node[midway,below] {time} (22,-2*\yscale-1); 

            \end{tikzpicture}
            \caption{An example of a witness sequence starting from the top right task and ending at the top left task. Consecutive triples in the witness sequence are connected by a dashed line. For each triple $(M,M',w)$, $w$ is enclosed by a circle, the workers of $M$ are enclosed by a box adjacent to $w$, and the workers of $M'$ are enclosed by a box at the next task visited by the witness sequence. $w$ is connected to both $M$ and $M'$ by a solid line. $w_{\ell+1}=\perp$ is not depicted.}
            \label{fig:witness_sequence}
        \end{center}
    \end{figure}

    We now define what it means for a witness sequence to be \emph{valid} with respect to an assignment of workers to tasks.
    For a witness sequence $W$ to be valid, we require that for each triple $(M_i, M'_i, w_i)$ in $W$, all workers in $M_i$ and $M'_i$ are malicious and that $w_i$ is honest.
    Moreover, if $W$ continues upwards, $w_i$ must be a failed worker, whereas if $W$ continues downwards, $w_i$ must be a successful worker.
    An example of a witness sequence that is valid w.r.t an assignment is depicted in \Cref{fig:valid_witness_sequence}.

    \begin{definition}[Validity w.r.t.\ an Assignment]\label{def:valid_witness_sequence}
        For a task graph $G=(V,E)$ and its worker graph $G_W=(V_W,E_W)$, let $A$ be an assignment of workers to tasks.
        We call a witness sequence $W=((M_1,M'_1,w_1),\dots,(M_\ell,M'_\ell,w_\ell),(M_{\ell+1},\emptyset,\perp))$ valid w.r.t. $A$, if all workers in $\mathcal{M} \cup \mathcal{M}'$ are malicious, all workers in $\mathcal{F}$ are honest workers that failed and all workers in $\mathcal{H} \setminus \mathcal{F}$ are honest workers that were successful.
    \end{definition}

    \begin{figure}[ht!]
        \begin{center}
            \begin{tikzpicture}[scale=0.5]
                \def\layers
                {
                    {{0,1,1,1,0,0,1,0,1,0,0,0,0},{0,0,0,1,0,1,0,1,0,0,1,1,0}},
                    {{0,1,0,1,0,0,0,0,0,0,2,0,0},{2,0,2,0,1,0,0,1,0,1,0,0,0}},
                        {{2,0,0,0,0,0,0,2,0,2,2,0,2},{0,2,0,2,1,0,0,0,1,0,0,2,2}},
                }
                \def\numoffset{1}
                \def\numsuccs{2}
                \def\numassign{13}

                \def\xoffset{2}
                \def\yscale{3}

                \foreach \layer [count=\y from 0] in \layers {
                    \pgfmathparse{dim({\layer})}
                    \let\layersize\pgfmathresult
                    \pgfmathparse{-(\layersize-1)/2}
                    \let\xstart\pgfmathresult
                    \foreach \task [count=\xx from 0] in \layer{
                        \pgfmathparse{\y+(\numassign+\xoffset)*(\xstart+\xx)}
                        \let\x\pgfmathresult
                        \fill[black!5,rounded corners] (\x,-\y*\yscale-0.25) rectangle (\x+\numassign,-\y*\yscale+0.25);
                        \foreach \worker [count=\w] in \task {
                            \ifnum\worker=0
                                \node[circle,fill=red,inner sep=1.4pt] (\y-\xx-\w) at (\x+\w-0.5,-\y*\yscale) {};
                            \fi
                            \ifnum\worker=1
                                \node[circle,fill=green!70!black,inner sep=1.4pt] (\y-\xx-\w) at (\x+\w-0.5,-\y*\yscale) {};
                            \fi
                            \ifnum\worker=2
                                \node[circle,fill=yellow!90!black,inner sep=1.4pt] (\y-\xx-\w) at (\x+\w-0.5,-\y*\yscale) {};
                            \fi
                        }
                    }
                }

                \foreach \y/\x/\xx in {0/0/0,0/1/0,0/1/1,1/0/0,1/1/0,1/1/1} {
                    \pgfmathparse{int(\y+1)}
                    \let\yy\pgfmathresult
                    \foreach \w in {1,...,\numassign}{
                        \pgfmathparse{int(\w+1-\numoffset)}
                        \let\wmin\pgfmathresult
                        \pgfmathparse{int(\w+\numsuccs-\numoffset)}
                        \let\wmax\pgfmathresult
                        \foreach \ww in {\wmin,...,\wmax}{
                            \ifnum\ww>\numassign\else
                                \draw[black!10] (\y-\x-\w) -- (\yy-\xx-\ww);
                            \fi
                        }
                    }
                }

                \foreach \layer [count=\y from 0] in \layers {
                    \foreach \task [count=\x from 0] in \layer{
                        \foreach \worker [count=\w] in \task {
                            \pgfmathparse{int(\w+1)}
                            \let\wmin\pgfmathresult
                            \pgfmathparse{int(\w+\numsuccs)}
                            \let\wmax\pgfmathresult
                            \foreach \ww in {\wmin,...,\wmax}{
                                \ifnum\ww>\numassign\else 
                                    \draw[black!10] (\y-\x-\w) to[bend left] (\y-\x-\ww);
                                \fi
                            }
                        }
                    }
                }

                \foreach \y/\x/\w/\xx in {0/1/4/1,1/1/5/0} {
                    \pgfmathparse{int(\y+1)}
                    \let\yy\pgfmathresult
                    \draw[thick,rounded corners,dashed] ($(\y-\x-\w)+(0,-0.25)$) -- ($(\y-\x-\w)+(0,-0.5)$) -- ($(\yy-\xx-\w)+(0,+0.5)$) -- ($(\yy-\xx-\w)+(0,+0.25)$);
                }
                
                \foreach \y/\x/\w/\xx in {2/0/8/0,1/0/11/0} {
                    \pgfmathparse{int(\y-1)}
                    \let\yy\pgfmathresult
                    \pgfmathparse{int(\w+1)}
                    \let\ww\pgfmathresult
                    \draw[thick,rounded corners,dashed] ($(\y-\x-\w)+(0,+0.25)$) -- ($(\y-\x-\w)+(0,+0.5)$) -- ($(\yy-\xx-\ww)+(0,-0.5)$) -- ($(\yy-\xx-\ww)+(0,-0.25)$);
                }
                
                \foreach \y/\x/\w/\xx/\ww in {2/0/8/0/6,1/0/11/0/11} {
                    \pgfmathparse{int(\y-1)}
                    \let\yy\pgfmathresult
                    \draw[thick,rounded corners] ($(\y-\x-\w)+(0,+0.25)$) -- ($(\y-\x-\w)+(0,+0.5)$) -- ($(\yy-\xx-\ww)+(0,-0.5)$) -- ($(\yy-\xx-\ww)+(0,-0.25)$);
                }
                \foreach \y/\x/\w/\ww in {1/0/5/6,0/0/10/11} {
                    \draw[red, thick] ($(\y-\x-\w)+(-0.25,-0.25)$) rectangle ($(\y-\x-\ww)+(+0.25,+0.25)$);
                }
                
                \foreach \y/\x/\w in {0/1/4,1/1/5} {
                    \draw[thick] ($(\y-\x-\w)+(-0.25,0)$) -- ($(\y-\x-\w)+(-0.75,0)$);
                }
                \foreach \y/\x/\w in {0/1/4,1/1/5} {
                    \draw[green!70!black, thick] (\y-\x-\w) circle (0.25);
                }
                
                \foreach \y/\x/\w in {2/0/8,1/0/11} {
                    \draw[thick] ($(\y-\x-\w)+(-0.25,0)$) -- ($(\y-\x-\w)+(-0.75,0)$);
                }
                \foreach \y/\x/\w in {2/0/8,1/0/11} {
                    \draw[yellow!90!black, thick] (\y-\x-\w) circle (0.25);
                }
                
                \foreach \y/\x/\w/\ww in {0/1/1/3,1/1/4/4,2/0/5/7,1/0/9/10,0/0/12/13} {
                    \draw[red, thick] ($(\y-\x-\w)+(-0.25,-0.25)$) rectangle ($(\y-\x-\ww)+(+0.25,+0.25)$);
                }

                \draw[->] (-7,-2*\yscale-1) to node[midway,below] {time} (7,-2*\yscale-1); 
                \draw[->] (8,-2*\yscale-1) to node[midway,below] {time} (22,-2*\yscale-1); 

            \end{tikzpicture}
            \caption{A witness sequence that is valid w.r.t. to an assignment $A$.
            To depict $A$, successful workers are presented in green color, failed workers are presented in yellow, and malicious workers are presented in red.
            For each triple $(M,M',w)$, $M$ and $M'$ contain only malicious workers, and $w$ is an honest worker that is either successful if the sequence continues at a succeeding task or failed if the sequence continues at a preceding task.}
            \label{fig:valid_witness_sequence}
        \end{center}
    \end{figure}

    With these definitions in place, we are ready to prove the correctness of the algorithm.
    The first step is to prove that a failing execution of the algorithm implies the existence of a valid witness sequence.
    In \Cref{sec:valid_witness_sequence_existence_paths}, we provide intuition for the more general statement for DAGs by proving the following lemma stating that a valid witness sequence must exist when the computation fails, if the task graph is a path graph.
    \begin{lemma}[Valid Witness Sequence Existence]\label{lem:valid_witness_existence}
        For a path graph $G=(V,E)$ with $V=\{v_1,\dots,v_n\}$ and its worker graph $G_W=(V_W,E_W)$, we consider an assignment $A$ of workers to tasks which caused the computation to fail.
        Then, the construction in \Cref{def:witness_sequence_construction} terminates and constructs a witness sequence that is valid w.r.t. $A$.
    \end{lemma}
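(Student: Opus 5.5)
I will follow the informal construction for paths sketched before \Cref{def:witness_sequence}, which I take to be the one formalized in \Cref{def:witness_sequence_construction}, and establish three things in order: it is well defined at every step, it terminates, and its output satisfies both \Cref{def:witness_sequence} and \Cref{def:valid_witness_sequence}. By \Cref{obs:dominant_strategy} I may assume that malicious workers never send correct outputs. Then, by \Cref{obs:successful_failed_workers}, an honest worker $(v_k,t)$ on a path is successful iff either some earlier worker $(v_k,t')$ with $t-2\delta\le t'<t$ is successful, or some worker $(v_{k-1},t')$ with $t-2\delta\le t'<t$ is successful. Let $v_f$ be the first task with no successful worker; it exists because the computation failed.

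\textbf{Invariant.} The core of the proof is an invariant maintained whenever an honest worker $w_i=(v_k,t)$ is appended. It has three parts: (i) every task $v_j$ with $j<k$ has a successful worker assigned before time $t$; (ii) $k\le f$; and (iii) all workers placed so far in the $M$- and $M'$-sets are malicious, and the sets are pairwise disjoint. Initially we take $v$ to be the last task whose first honest worker is successful. If there is none, we take $v_1$, whose honest workers all succeed, so $v$ exists. If $v_f$ has only malicious workers we instead stop immediately with $M_1$ equal to all workers of $v_f$. Part (i) holds at the start because $w_1$ is successful and \Cref{obs:successful_failed_workers} gives a successful path through its ancestors at earlier times.

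\textbf{Inductive step.} I case-split on whether $w_i$ is successful. In the downward case, part (i) transfers to $k+1$ using $w_i$ itself. If no worker $(v_{k+1},t+1)$ exists, we restart, which is the reset described in the text. In the upward case, $w_i$ failed, and part (i) gives a successful $(v_{k-1},s)$ with $s<t$. None of $(v_{k-1},t-2\delta),\dots,(v_{k-1},t-1)$ sent a correct output, so each is malicious or failed. The main point to argue is that some $2\delta$ consecutive malicious workers lie strictly between the last successful worker of $v_{k-1}$ before $t$ and time $t$; this block gives $M'_i$ with $t''\le t-2\delta$. Otherwise, chaining along task $v_{k-1}$ (successful workers pass their output to the next $2\delta$ workers of the same task) would make some worker in the window successful. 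The same reasoning yields the requirement $t\ge t_{\min}(v_k)+\delta$, since early workers of $v_k$ lie within range of successful workers of $v_{k-1}$. The conditions in the stop cases (2b) and (2d), and the termination condition 4, follow from the construction directly. Failure of $w_i$ also forces $k\ge 2$, since initial workers succeed, and the construction never moves below $v_f$ because no worker of $v_f$ is successful.

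\textbf{Termination and the main obstacle.} The honest workers' times strictly increase on downward steps and do not decrease on upward steps, and there are at most $n-1$ consecutive upward steps. Since times are bounded by $t_{\max}$, the sequence (including restarts) terminates. I expect disjointness, part (iii), to be the main obstacle, in particular showing that an $M'_i$ block on $v_{k-1}$ cannot overlap an $M$-set placed on $v_{k-1}$ during an earlier visit. My first approach is to use the monotonicity of times: an earlier visit to $v_{k-1}$ ended with a successful worker $w_j$ there at time at most $t$, and the chosen $M'_i$ block lies after the last successful worker before $t$. So every block either lies after all earlier $M$-sets on that task or sits between successful workers that separate it from them. $M$-sets are disjoint from one another because they occupy increasing time intervals on each task.
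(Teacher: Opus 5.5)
Your outline tracks the paper's proof closely. Your invariant (i) is the first half of the paper's success barrier. Your arguments for termination, for $M'_i$ consisting only of malicious workers, and for $M'_i$ missing earlier $M$-sets on $v_{k-1}$ (a successful witness worker on $v_{k-1}$ separates them) are essentially the paper's.

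The gap is that Property~3 of \Cref{def:witness_sequence} also requires the $M'$-sets to be pairwise disjoint among themselves, and you never argue this. It is not automatic. Take two failed witness workers $w_i=(v_k,t)$ and $w_j=(v_k,t_j)$ with $i<j$. Each places its block immediately after the last successful worker of $v_{k-1}$ assigned before its own time. If that were the same worker $w''$ for both, you would get $M'_i=M'_j$. The paper excludes exactly this with the second clause of its success barrier, the requirement that $w''$ is not \emph{linked} (\Cref{def:success_barrier}), which is maintained inductively in \Cref{lem:success_barrier_existence}. Your invariant (i) only asserts that earlier successful workers exist, so it does not supply this.

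In the path case you can close the gap with the monotonicity reasoning you already use:
\begin{enumerate}
\item If some $w_j\neq\perp$ with $j>i$ exists, then $w_{i+1}$ lies on $v_{k-1}$ at a time of at least $t$.
\item A restart would erase $w_i$ from $W$. So the sequence can only reach $v_k$ again by a downward step from a successful witness worker $(v_{k-1},s)$ with $t\le s<t_j$.
\item Hence the last successful worker of $v_{k-1}$ before $t_j$ is assigned at time at least $t$. So $M'_j$ starts after time $t$, while $M'_i$ ends by time $t-1$.
\item Blocks on different tasks are trivially disjoint.
\end{enumerate}
You should also state that $M'_i$ misses every later $M_j$ with $j>i$, since those sets start at time at least $t$.

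There are two smaller inaccuracies. First, $M'_i$ can be truncated at $t_{\mathrm{max}}(v_{k-1})$, because $t$ may exceed $t_{\mathrm{max}}(v_{k-1})$, so it need not contain $2\delta$ workers. Second, the bound $t\ge t_{\mathrm{min}}(v_k)+\delta$ does not hold because early workers of $v_k$ are within range of successful workers of $v_{k-1}$; they need not be. It holds because $t\ge t''+2\delta$ and $t''-1\ge t_{\mathrm{min}}(v_{k-1})=t_{\mathrm{min}}(v_k)-\delta$.
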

    Afterwards, in \Cref{sec:valid_witness_sequence_existence_dags}, we prove the statement for any DAG formulated in the following lemma.
            \begin{lemma}[Valid Witness Sequence Existence]\label{lem:valid_witness_existence_dags}
        For a task graph $G=(V,E)$ and its worker graph $G_W=(V_W,E_W)$, we consider an assignment $A$ of workers to tasks which caused the computation to fail.
        Then, there is a witness sequence in $G_W$ that is valid w.r.t. $A$.
    \end{lemma}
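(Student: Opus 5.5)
We generalize the path construction to DAGs. The essential new ingredient is a rule for choosing which successor or predecessor the sequence moves to, so that the malicious sets stay pairwise disjoint (Property 3). By \Cref{obs:dominant_strategy} we may assume that no malicious worker ever sends a correct output. The characterization of successful workers in \Cref{obs:successful_failed_workers} is then the only tool needed to decide where the sequence continues.

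\textbf{Starting point.} Because the computation failed, some final task has no successful worker. Choose a task $v_f$ with no successful worker whose depth $D(v_f)$ is minimal. Every predecessor $u$ of $v_f$ then has a successful worker. If $v_f$ itself has only malicious workers, the triple $(M_1=\{(v_f,t_{\min}(v_f)),\dots,(v_f,t_{\max}(v_f))\},\emptyset,\perp)$ is already a valid witness sequence with $\ell=0$.

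Otherwise we start from the top of the graph, as in the path case. Pick a task $v$ whose first honest worker $w_1=(v,t)$ is successful, and put the malicious prefix $\{(v,t_{\min}(v)),\dots,(v,t-1)\}$ into $M_1$. An initial task with at least one honest worker works. If every honest initial worker situation fails this, some initial task has only malicious workers, and we stop immediately with $\ell=0$.

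\textbf{Invariant.} Whenever we add an honest worker $w_i=(v,t)$, we require that every ancestor task of $v$ has a successful worker assigned strictly before time $t$. For a downward step, this follows from \Cref{obs:successful_failed_workers}, since a successful $w_i$ is reached by honest paths from all initial ancestors.

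\textbf{Downward step.} If $w_i=(v,t)$ is successful, move to a successor $v'$ that lies on a path towards $v_f$ and along which failure still occurs. Concretely, choose $v'$ so that it still has failed honest workers, or only malicious workers, after time $t$. Take the first honest worker of $v'$ after $t$, with the malicious gap as $M_{i+1}$ (cases (2a) or (2b)).

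\textbf{Upward step.} If $w_i=(v,t)$ failed, then by \Cref{obs:successful_failed_workers} some predecessor $v'$ delivered no correct output in the window $[t-2\delta,t-1]$. By the invariant, $v'$ has a successful worker before $t$. The last successful worker of $v'$ before $t-2\delta$ is therefore followed by a run of $2\delta$ workers that are malicious or failed. I would define $M'_i$ from the maximal malicious run here, arguing that failed honest workers in the window would themselves lead the sequence further up. Then continue at $v'$ from time $t$ (cases (2c) or (2d)). The condition $t\ge t_{\min}(v)+\delta$ holds because otherwise the window would reach back before any worker of $v'$ could have received input.

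\textbf{Termination.} Each downward step strictly increases the time. Upward steps do not decrease it and strictly decrease depth, so there are at most $D$ of them in a row. Hence the construction terminates and ends in one of the two Property-4 cases: it either runs past $t_{\max}$ of some task, or the last honest worker lies in the final $\delta$ slots.

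\textbf{Main obstacle: disjointness.} The hard step is Property 3. When a sequence goes down into $v'$, later comes up into $v$, and later returns to $v'$, the sets on $v'$ could overlap. I would resolve this by showing that, for each task, the time intervals covered by $M_i\cup\{w_i\}$ and by $M'_i$ are ordered monotonically in time. The sets $M_{i+1}$ always begin just after the current time, and each $M'_i$ lies before $t-2\delta$ on a task whose later workers are all covered by the continuation. If a revisit would still overlap, I would shortcut the sequence: delete the cycle between the two visits to the same task, keeping the later visit. This again yields a witness sequence satisfying Properties 1 through 4. It is the step where I expect the careful case analysis, mirroring the path-case intuition from \Cref{lem:valid_witness_existence}.
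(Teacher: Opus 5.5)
Your proposal has a genuine gap exactly where DAGs differ from paths. The invariant you rely on does not hold under your traversal rule, and both the construction of the sets $M'_i$ and their disjointness depend on it. After a downward step from a successful $w_i=(v,t)$ to a successor $v'$, \Cref{obs:successful_failed_workers} only certifies successful workers before $t$ at the ancestors of $v$. It says nothing about ancestors of $v'$ reached through its other predecessors.

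Concretely, let $c$ be a final task with two initial predecessors $a,b$, and take $\gamma>3\delta$. Let $a$ have malicious workers at times $1,\dots,\delta-1$, an honest one at time $\delta$, and malicious ones afterwards. Let $b$ have malicious workers at times $1,\dots,3\delta$ and honest ones afterwards, and let all workers of $c$ be honest. By induction over time no worker $(c,s)$ is successful: it would need a correct output of $a$ (in its window only for $s\le 3\delta$) and of $b$ (only for $s\ge 3\delta+2$).

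Your rule allows the start $w_1=(a,\delta)$. Stepping down yields $w_2=(c,\delta+1)$, which failed because of $b$, yet $b$ has no successful worker before time $\delta+1$. In fact no upward continuation can meet the requirement $t\ge t_{\min}(v)+\delta$ of Cases (2c) and (2d) of \Cref{def:witness_sequence}, since $\delta+1<t_{\min}(c)+\delta$.

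The missing idea is the paper's choice and bookkeeping of the route. It climbs from the minimum-depth failing task $v_{\mathrm{fail}}$, always to the predecessor whose \emph{first successful worker is assigned last}, pushing tasks on a stack. Afterwards it moves downward only by popping this stack, and upward moves push as well. A task entered from above was then either on the initial climb, or visited by the sequence before.
In the first case, the argmax rule guarantees that all its other predecessors, and hence all its ancestors, already have successful workers. In the second case, the induction hypothesis at that earlier visit applies. This is \Cref{lem:success_barrier_existence_dags}, which also records that the last successful worker of each ancestor is not yet \emph{linked}, i.e., not yet used to anchor an earlier $M'_j$.

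Your plan for Property 3 inherits this problem, and the shortcut fallback does not work. Cutting out the segment between two visits of the same task does not produce a witness sequence. Property 2 forces $M_{i+1}$ to start at the time of $w_i$ (at $t+1$ downwards, at $t$ upwards), and \Cref{def:valid_witness_sequence} ties the direction of each step to whether $w_i$ succeeded or failed. So the two ends of the cut need not fit together; re-anchoring the kept $M$ would swallow the honest worker of the earlier visit.

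In the paper, $M'_i\cap M'_j=\emptyset$ because each $M'$ hangs off a distinct unlinked last successful worker. Also $M'_i\cap M_j=\emptyset$ for $j<i$: having been at the predecessor $v''$ after time $t''-1$, the sequence can only reach $v$ by stepping down from $v''$ through a successful worker assigned before $t$. That contradicts the maximality of $w''=(v'',t''-1)$.

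Three smaller points:
\begin{itemize}
\item The $2\delta$ workers following $w''$ contain no failed honest workers at all, since the same-task edges $E_W^{(1)}$ hand them $w''$'s correct output. Together with $t''\le t-2\delta$ (otherwise $w_i$ itself would receive that output) and maximality, they are all malicious, so no ``further up'' argument is needed.
\item The condition $t\ge t_{\min}(v)+\delta$ follows from $t''-1\ge t_{\min}(v'')=t_{\min}(v)-\delta$, not from an input argument.
\item You omit the restart of Case (1a) of \Cref{def:witness_sequence_construction}. It is needed whenever $t+1<t_{\min}(v')$, e.g.\ when the start is an initial task whose first honest worker lies in its first $\delta-1$ slots; otherwise $M_{i+1}=\{(v',t+1),\dots\}$ names non-existent workers.
\end{itemize}
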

    Finally, in \Cref{sec:correctness_and_work_bound} we conclude the analysis of the correctness by proving that it is unlikely for a valid witness sequence to exist.
    Additionally, we provide a bound on the work performed by the workers and prove the main theorem of this paper stated in the following. 
    \begin{theorem}\label{th:main_theorem_dags} 
        Let $G$ be a task graph and $\beta <1$ be any fixed value representing the fraction of adversarial workers. 
        For $\delta = \Theta(\log_{\nicefrac{1}{\beta}} (d) \cdot \log_{\nicefrac{1}{\beta}} \log n)$ and $\gamma = \Theta(\log n)$, the target learns the correct solution w.h.p., and the algorithm terminates within $O(D\log_{\nicefrac{1}{\beta}} (d) \cdot \log_{\nicefrac{1}{\beta}}\log n+\log n)$ rounds. 
        Each honest worker performs at most one computation for each time it was assigned, $O(d\log_{\nicefrac{1}{\beta}} (d) \cdot \log_{\nicefrac{1}{\beta}}\log n)$ verifications, and $O(d\log_{\nicefrac{1}{\beta}} (d) \cdot \log_{\nicefrac{1}{\beta}}\log n)$ communication. In expectation, each task gets executed by $1+o(1)$ honest workers and the total number of computations performed by honest workers is $n(1+o(1))$. The source needs to send the input $O(\log n)$ times for each initial task of $G$, the target needs to receive and verify the output $O(\log n)$ times for each final task of $G$, and the supervisor performs $O(n \log n)$ assignments and $O(nd \log (n) \cdot \log_{\nicefrac{1}{\beta}} (d) \cdot \log_{\nicefrac{1}{\beta}}\log n)$ introductions.
    \end{theorem}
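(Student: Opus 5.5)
The proof splits into correctness (w.h.p.) and the resource bounds. The resource bounds follow directly from the algorithm.

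\textbf{Resource bounds.} The last assignment happens in round $(D-1)\delta+\gamma$, and every worker stays at most $2\delta$ further rounds. This gives $O(D\delta+\log n)$ rounds, which matches the stated runtime with $\delta=\Theta(\log_{1/\beta}(d)\log_{1/\beta}\log n)$.

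An assigned worker contacts at most $2\delta$ workers of its own task. If it fails to verify, it contacts at most $2\delta$ workers of each of at most $d$ predecessors, and it serves at most $2\delta$ workers of each of at most $d$ successors and of its own task. This gives $O(d\delta)$ verifications and communication. Each worker computes at most once per assignment.

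The supervisor makes $\gamma=O(\log n)$ assignments per task and $O(d\delta)$ introductions per assignment. The source and target interact with the $\gamma$ workers of each initial and final task, respectively.

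\textbf{Expected work.} Fix a task $v$ and count the honest workers $(v,t)$ that actually compute. Such a worker computes only if none of the previous $2\delta$ workers of $v$ sent it a correct output.

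After the first successful honest worker of $v$, a later honest worker computes only if all of the (up to) $2\delta$ preceding workers of $v$ are malicious or failed. I would bound this by $\beta^{2\delta}$ plus the probability that some predecessor window is bad. The key point is that once a successful worker exists, a successor-less failure requires the block of $2\delta$ workers directly before it to be entirely malicious, which has probability at most $\beta^{2\delta}$. Summed over the $\gamma$ workers this gives $\gamma\beta^{2\delta}=O(\log n)\cdot(\log n)^{-\Omega(\log d)}=o(1)$.

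Before the first successful honest worker, the honest workers that compute either succeed, and then there is exactly one such first success, or they fail. Failures of this kind occur only on low-probability events, which contribute $o(1)$ in expectation. Hence each task has $1+o(1)$ expected honest executions, and summing over tasks gives $n(1+o(1))$.

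\textbf{Correctness via witness sequences.} By Lemma~\ref{lem:valid_witness_existence_dags}, failure implies that some witness sequence is valid w.r.t.\ the random assignment $A$. I would take a union bound over all witness sequences $W$, using
\[
\Pr[W\text{ valid}]\le \beta^{|\mathcal M_W\cup\mathcal M'_W|},
\]
which holds because the sets are pairwise disjoint (property 3) and workers are malicious independently.

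The counting works as follows.
\begin{itemize}
\item The starting worker has $n$ choices.
\item Each step chooses a direction and a neighbouring task, giving at most $2d$ options, and chooses $t'$ and $t''$.
\item The offset $t'-t$ equals $|M_{i+1}|+1$, so it is paid for by the malicious workers it forces.
\item Each upward step forces $|M'_i|=2\delta$ malicious workers, unless the window is truncated at $t_{\max}$. Property 4, together with the condition $t\ge t_{\min}(v)+\delta$, is designed to recover $\Omega(\delta)$ malicious workers in the boundary cases.
\end{itemize}

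The sequence must terminate in either of two ways.
\begin{itemize}
\item A final $M_{\ell+1}$ reaching $t_{\max}$: for this, the total time advanced (about $\gamma$ minus the level offsets) is covered by malicious workers, and downward steps cost time while upward steps cost $2\delta$ malicious workers.
\item The last honest worker sits within $\delta$ of the end: this case is handled similarly.
\end{itemize}

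The intended estimate is a per-step weight. A step contributes a factor of at most about $2d\cdot\gamma\cdot\beta^{2\delta}$ for upward moves, which is small once $\delta\ge c\log_{1/\beta}(d\log n)$. For downward moves the contribution is $d\sum_{k}\beta^{k}$, which is not small by itself. This is the main obstacle.

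Downward moves are charged against the global time budget instead. The sequence must accumulate $\Omega(\gamma)$ malicious workers in total, or have many upward steps. I would argue that, with the offset $\delta$ per level, each downward step with few malicious workers still gains time relative to the level structure. Once down-steps balance up-steps in level, the time consumed forces $\Omega(\gamma)$ total malicious mass or $\Omega(\gamma/\delta)$ upward steps. The choice $\delta=\Theta(\log_{1/\beta}d\cdot\log_{1/\beta}\log n)$, where the extra $\log\log n$ factor absorbs the $\gamma$ choices of $t''$, makes each upward step worth a factor $(\log n)^{-\Omega(1)}$. A careful generating-function bound then gives total probability $n\cdot n^{-c}$. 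Choosing the constants in $\gamma$ large yields w.h.p.\ success.
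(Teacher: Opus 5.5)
Your resource bounds and your overall plan for correctness match the paper. That plan is a union bound over witness sequences via Lemma~\ref{lem:valid_witness_existence_dags}, with $\Pr[W\text{ valid}]\le\beta^{|\mathcal M_W\cup\mathcal M'_W|}$ by disjointness. The gap is in the quantitative core of the correctness proof. You correctly notice that a downward step, weighted locally, contributes only $d\sum_k\beta^k$. But you then leave the fix to a sketch (``charged against the global time budget'', ``a careful generating-function bound''). The one concrete claim you make is that the sequence forces $\Omega(\gamma)$ malicious workers or $\Omega(\gamma/\delta)$ upward steps, and that is not enough. The number of witness sequences grows like $(2\gamma d)^{\ell}$ with $\ell=|\mathcal H_W|$, and $\ell$ is not a priori $O(\gamma/\delta)$: a sequence can descend through many levels or zig-zag. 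So you need malicious mass that grows \emph{linearly in $\ell$}, downward steps included. A bound of order $\gamma$ cannot pay for such long sequences.

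The missing idea is the potential argument behind Lemma~\ref{lem:num_malicious_workers}. For $w_i=(v,t)$, let $\Phi_i$ be the number of malicious workers collected so far plus the slack $t_{\max}(v)-t+1$. The paper's induction does exactly this by replacing the tail with a full malicious block up to $t_{\max}$.
\begin{itemize}
\item Since $M_1$ starts at $t_{\min}(v)$, we have $\Phi_1=\gamma$.
\item A downward step to $v'\in\suc_G(v)$ advances time by $|M_{i+1}|+1$, but $t_{\max}(v')=t_{\max}(v)+\delta$. So $\Phi$ grows by exactly $\delta-1$.
\item An upward step loses $\delta+|M_{i+1}|$ of slack, where the $|M_{i+1}|$ part is itself collected as malicious, and it adds $|M'_i|\ge 2\delta$. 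So $\Phi$ grows by at least $\delta$.
\item Both stopping rules of Property~4 turn the remaining slack into malicious workers; the truncated-$M'$ case is Lemma~\ref{lem:additional_malicious_workers}.
\end{itemize}
Hence $|\mathcal M_W\cup\mathcal M'_W|\ge\max\{(\delta-1)\ell,\gamma\}$. The price of a downward step is paid by the terminal block, not at the step itself, which is why your per-step weights look bad.

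With this lemma no generating function is needed. Count the sequences with fixed $|\mathcal M|,|\mathcal H|,|\mathcal F|$ by $n\binom{|\mathcal M|+|\mathcal H|}{|\mathcal H|}(2\gamma d)^{|\mathcal H|}$. Substituting $|\mathcal H|\le m/(\delta-1)$ gives $n(2e\delta\gamma d)^{m/(\delta-1)}\beta^m\le n\beta^{(1-\alpha)m}\le n\beta^{(1-\alpha)\gamma}$ once $\delta-1\ge\log_{1/\beta}(2e\delta\gamma d)/\alpha$. Then take a union bound over the polynomially many size triples.

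A smaller point concerns the work bound. Your per-worker estimate $\beta^{2\delta}$ is not valid. After one run of $2\delta$ malicious workers, a later honest worker's window can consist of malicious and \emph{failed} honest workers with no fresh run directly before it. Argue as the paper does instead. If no run of $2\delta$ malicious workers follows the first successful worker, then $X_v=1$; otherwise $X_v\le\gamma$. This gives $\E[X_v]\le 1+\gamma^2\beta^{2\delta}=1+o(1)$. Also, failed honest workers never compute, so nothing before the first success needs to be charged.
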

    
    \subsection{Witness Existence for the Path Case}
    \label{sec:valid_witness_sequence_existence_paths}

    Let $G=(V,E)$ be a directed path with $n$ nodes $v_1, \dots, v_n$, where $v_1$ is the initial task and $v_n$ is the final task, and let $G_W = (V_W, E_W)$ be the worker graph of $G$.
    Before we show how to construct a valid witness sequence in $G_W$, we prove three simple properties that we require for the construction to be well-defined.
    
    \begin{lemma}[Properties of Assignments]\label{lem:assignment_properties}
        Let $G=(V,E)$ be a path graph with $V=\{v_1,\dots,v_n\}$ and let $G_W=(V_W,E_W)$ be its worker graph.
        For an assignment $A$ of workers to tasks, the following holds:
        \begin{itemize}
            \item If there is no task for which all assigned workers are malicious, there is a task $v$ where the first honest worker assigned to $v$ is successful.
            \item If a worker $(v_k, t)$ failed, the task $v_{k-1}$ exists.
            \item If $A$ caused the computation to fail and a worker $(v_k, t)$ is successful, the task $v_{k+1}$ exists.
        \end{itemize}
    \end{lemma}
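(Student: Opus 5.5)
I prove the three items separately. Each follows from \Cref{obs:successful_failed_workers} specialized to a path, together with the fact that initial workers receive their input directly from the reliable source.

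\textbf{First item.} Assume no task has only malicious workers. Then the initial task $v_1$ has at least one honest worker, and I take the first one, $w=(v_1,t)$. Its only ancestor that is an initial task is $v_1$ itself. The trivial path consisting of $w$ alone connects an initial worker to $w$. By \Cref{obs:successful_failed_workers}, $w$ is therefore successful. Operationally, it receives the correct input from the source and can execute $v_1$. So $v=v_1$ witnesses the claim. One could equally pick the largest such index, which is what the construction uses, but existence only needs $v_1$.

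\textbf{Second item.} I argue by contraposition. Suppose $v_{k-1}$ does not exist, so $k=1$ and $(v_1,t)$ is an honest initial worker. By the argument above, every honest worker of $v_1$ is successful. Hence a failed worker $(v_k,t)$ cannot have $k=1$, and $v_{k-1}$ exists. The only point to note is that "failed" is defined only for honest workers, so the worker in question is honest.

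\textbf{Third item.} Again I use contraposition. Suppose $(v_k,t)$ is successful and $v_{k+1}$ does not exist. Then $k=n$, so $v_k$ is the final task and it has a successful honest worker. That worker holds the correct output and sends it to the target, so there is a successful worker for each final task (on a path there is only one). By the definition of a successful run, the computation then succeeds, contradicting the assumption that $A$ caused it to fail.

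I expect no real obstacle here. The only care needed is in the first item: it must be stated clearly that successfulness of honest initial workers is unconditional, because they receive their input from the reliable source rather than from other workers. \Cref{obs:dominant_strategy} is not needed for any of the three items.
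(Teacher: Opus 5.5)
Your proof is correct and matches the paper's argument for the first two items: every honest worker of the initial task $v_1$ is successful because the reliable source supplies its input, so a failed worker cannot sit at $v_1$. For the third item, you argue directly that a successful worker at $v_n$ would make the run successful by definition. The paper instead picks the first task $v_j$ with no successful worker and asserts that no later task has one. Your route is slightly cleaner, since it avoids that implicit monotonicity step.
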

    \begin{proof}
        If there is no task for which all assigned workers are malicious, task $v_1$ is assigned at least one honest worker.
        Since $v_1$ is an initial task, all honest workers assigned to $v_1$ receive their input from the source and are thus successful.
        In particular, the first honest worker assigned to $v_1$ is successful.
        Therefore, there is a task $v$, where the first honest worker assigned to $v$ is successful. 
        Next, consider a worker $(v_k, t)$ that failed.
        Since all honest workers of task $v_1$ are successful, we have $1 < k \leq n$.
        Hence, task $v_{k-1}$ exists.
        Finally, if $A$ caused the computation to fail, there is a task for which no assigned worker is successful as the target would have otherwise received the output.
        Let $v_j$ be the first such task, i.e., no task in $\{v_j, \dots, v_n\}$ is assigned a worker that is successful.
        Hence, for all successful workers $(v_k, t)$, we have $1 \leq k < j \leq n$ and the task $v_{k+1}$ exists.
    \end{proof}

    \begin{definition}[Valid Witness Sequence Construction for Path Graphs]\label{def:witness_sequence_construction}
        For a path graph $G=(V,E)$ with $V=\{v_1,\dots,v_n\}$ and its worker graph $G_W=(V_W,E_W)$, we consider an assignment of workers to tasks which caused the computation to fail.
        We construct a valid witness sequence $W$ by iteratively adding triples $(M,M',w)\in\mathcal{P}(V_W)\times \mathcal{P}(V_W) \times V_W$ to $W$ until a triple $(M,\emptyset,\perp)$ is added that marks the end of the construction.
    
        If there is a task $v \in V$ such that all workers assigned to $v$ are malicious, we add the triple $(M_1, \emptyset, \perp)$ to $W$ with $M_1 = \{(v, t_{\mathrm{min}}(v)), \dots, (v, t_{\mathrm{max}}(v))\}$.
        Otherwise, we consider the last task $v$, where the first honest worker $(v,t)$ assigned to $v$ is successful (which exists due to \Cref{lem:assignment_properties}), and add $(M_1, \emptyset, (v,t))$ to $W$ with $M_1 = \{(v,t_\mathrm{min}(v)),\dots,(v,t-1)\}$. 
        After adding $(M_i,M_i',w_i)$ to $W$, if $w_i=\perp$ our construction terminates. Otherwise, if $w_i=(v_k,t)\in V_W$ we add $(M_{i+1}, M'_{i+1}, w_{i+1})$ to $W$. To define $M_{i+1}$ and $w_{i+1}$, we distinguish two cases:
        \newpage
        \begin{enumerate}
            \item $w_i$ is successful. 
            Then $v_{k+1}$ exists due to \Cref{lem:assignment_properties}.
            \begin{enumerate}
                \item If $t+1<t_\mathrm{min}(v_{k+1})$, we empty $W$ and restart the construction (setting $i\coloneqq 0$).
                We set $w_1=(v_{k+1},t')$ to be the first honest worker assigned to task $v_{k+1}$ and $M_1 = \{(v_{k+1},t_\mathrm{min}(v_{k+1})),\dots,(v_{k+1},t'-1)\}$ (possibly empty).
                \item Otherwise, if there is no honest worker in $M\coloneqq\{(v_{k+1},t+1),\dots,(v_{k+1},t_\mathrm{max}(v_{k+1}))\}$, we set $M_{i+1} = M$ and $w_{i+1} = \perp$. 
                \item Otherwise, we set $w_{i+1}=(v_{k+1},t')$ to be the first honest worker in $M$ (of (b)) and $M_{i+1} = \{(v_{k+1},t+1),\dots,(v_{k+1},t'-1)\}$ (possibly empty). 
            \end{enumerate} 
            \item $w_i$ failed. Then $v_{k-1}$ exists due to \Cref{lem:assignment_properties}.
            \begin{enumerate}
                \item If $t>t_\mathrm{max}(v_{k-1})$, we set $M_{i+1}=\emptyset$  and $w_{i+1}=\perp$.
    
                \item Otherwise, if there is no honest worker in $M\coloneqq\{(v_{k-1},t),\dots,(v_{k-1},t_\mathrm{max}(v_{k-1}))\}$, we set $M_{i+1}=M$ and $w_{i+1}=\perp$.
    
                \item Otherwise, we set $w_{i+1}=(v_{k-1},t')$ to be the first honest worker in $M$ (of (b)) and $M_{i+1} = \{(v_{k-1},t),\dots,(v_{k-1},t'-1)\}$ (possibly empty). 
            \end{enumerate}
        \end{enumerate}
        Additionally, if $w_{i+1}=\perp$ or $w_{i+1}$ is successful, we define $M'_{i+1}=\emptyset$, and if $w_{i+1}=(v_{k'},t')$ failed, we define $M'_{i+1} = \{(v_{k'-1},t''), \dots, (v', \min\{t''+2\delta-1,t_\mathrm{max}(v_{k'-1})\})\}$, where $(v_{k'-1},t''-1)$ is the last successful worker assigned to $v_{k'-1}$ with $t''-1<t'$ (which exists due to \Cref{lem:success_barrier_existence}).
    \end{definition}
    
    Note that whenever an honest worker $w_{i+1}$ is added in the construction above, it was assigned after or at the same time $w_i$ was assigned.
    We formulate this fact in the following observation.
    
    \begin{observation}[Non-decreasing $w_i$]\label{obs:non-decreasing}
        For a task graph $G=(V,E)$ and its worker graph $G_W=(V_W,E_W)$, we consider an assignment of workers to tasks which caused the computation to fail.
        Let $W=((M_1,M'_1,w_1),(M_2,M'_2,w_2),\dots)$ be a sequence constructed by \Cref{def:witness_sequence_construction}.
        For each $1\leq i<j\leq \ell$ with $w_i = (v, t_i)$ and $w_j = (v',t_j)$, we have $t_i\leq t_j$.
        Additionally, for each worker $w_i=(v,t_i)$, no worker in $M_i$ was assigned before time $t_i$ and all workers in $M'_i$ were assigned before time $t_i$.
    \end{observation}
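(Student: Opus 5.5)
The statement to prove is \Cref{obs:non-decreasing}, so the plan is a direct case analysis over the rules of \Cref{def:witness_sequence_construction}. Two claims are needed. First, every new honest worker $w_{i+1}$ is assigned no earlier than $w_i$; since $\le$ is transitive, this gives $t_i \le t_j$ for all $i<j$. Second, for each $w_i=(v,t_i)$, the set $M_i$ is assigned at or after $t_i$, while $M'_i$ is assigned strictly before $t_i$.

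A restart in Case 1(a) empties $W$ and resets the index. Hence it suffices to argue about the suffix built after the last restart, which is exactly the sequence the observation refers to.

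\textbf{Monotonicity.} Let $w_i=(v_k,t)$. In Case 1(c), $w_{i+1}=(v_{k+1},t')$ is chosen from $\{(v_{k+1},t+1),\dots\}$, so $t'\ge t+1>t$. In Case 2(c), $w_{i+1}=(v_{k-1},t')$ is the first honest worker in $\{(v_{k-1},t),\dots\}$, so $t'\ge t$. All remaining cases either set $w_{i+1}=\perp$ or restart, so no constraint arises from them. Induction on $j-i$ then gives $t_i\le t_j$.

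\textbf{The sets $M_i$.} I would first restate the claim to match the definitions: $M_i$ consists of the malicious workers placed just before $w_i$ on $w_i$'s task. Its elements are of the form $(v',s)$ with $s$ at least the assignment time of the previous honest worker $w_{i-1}$, and $s<t_i$. This follows directly from the interval definitions in Cases 1(c) and 2(c), and from the initial choice $M_1=\{(v,t_{\min}(v)),\dots,(v,t-1)\}$.

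\textbf{The sets $M'_i$.} If $w_i$ failed, then $M'_i$ starts at $t''$, where $(v_{k-1},t''-1)$ is the last successful worker of $v_{k-1}$ with $t''-1<t_i$. I would argue that the block ends before $t_i$, that is, $t''+2\delta-1<t_i$. Suppose instead $t''-1\ge t_i-2\delta$. Then $(v_{k-1},t''-1)$ lies in the window $[t_i-2\delta,t_i-1]$ from which $w_i$ requests outputs for predecessor tasks. That worker would then send a correct output, and $w_i$ would not have failed, a contradiction. So $t''\le t_i-2\delta$, and every element of $M'_i$ has time at most $t''+2\delta-1\le t_i-1$.

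\textbf{Main obstacle.} The delicate point is the window boundary. The worker-graph edge set $E_W^{(2)}$ includes $t'=t+2\delta$, so I have to check carefully that the supervisor's window $[\max\{t-2\delta,t_{\min}(v')\},t-1]$ really covers $t''-1$ and that the off-by-one works out. A second subtlety is that the window is clipped at $t_{\min}(v_{k-1})$. This clipping does not matter here, because $(v_{k-1},t''-1)$ is itself a worker of $v_{k-1}$ and is therefore assigned no earlier than $t_{\min}(v_{k-1})$.
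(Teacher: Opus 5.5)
Your proof is correct and follows essentially the paper's route. Monotonicity is the paper's own remark that each new honest worker is drawn from times $\geq t$ (Case 2(c)) or $\geq t+1$ (Case 1(c)). Your bound $t''\leq t_i-2\delta$ for $M'_i$ is the same contradiction the paper uses in the proof of \Cref{lem:valid_witness_existence}: otherwise $w_i\in\suc_{G_W}(w'')$, and $w_i$ could not have failed.

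Your re-reading of the $M_i$ clause is in fact necessary. Taken literally, the clause fails whenever $M_i\neq\emptyset$, because all of $M_i$ precedes $w_i$ on the same task. The corrected form you prove is that every worker of $M_j$ is assigned no earlier than $t_{j-1}\geq t_i$ for $j>i$. This is exactly what the later disjointness argument uses. Your case list only names 1(c) and 2(c), so also check the terminal set $M_{\ell+1}$ from Cases 1(b) and 2(b); the same interval argument covers it.
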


    For the construction in \Cref{def:witness_sequence_construction} to be well-defined, we still need to argue that whenever we move upwards from a failed worker $w_i = (v_k, t)$, there is indeed a successful worker assigned to task $v_{k-1}$ before time $t$.
    In fact, we can show something even stronger.
    Whenever we add an honest worker $w_i = (v_k, t)$ to $W$, there is a successful worker that was assigned before time $t$ for \emph{every} task $v_j$ with $1 \leq j < k$.
    Moreover, if $w''_j = (v_j, t'' - 1)$ is the last successful worker assigned to task $v_j$ before time $t$, there is no failed worker $(v_{j+1}, t')$ in $W$ with $t'' - 1 < t' < t$.
    In other words, no worker $w''_j$ has been ``used'' before by some other failed worker $w_{i'}$ in $W$ to define its own set $M'_{i'}$ for any $i' < i$.
    We call $\{w''_1,\dots,w''_{k-1}\}$ the \emph{success barrier} for $w_i$.
    An example of a success barrier is depicted in \Cref{fig:success_barrier}.
    The existence of a success barrier will be crucial later to argue that all of the sets $M'_i$ created by the construction are disjoint.
    We now formally define the success barrier and prove that every honest worker added to $W$ indeed has one.
    
    \begin{definition}[Success Barrier]\label{def:success_barrier}
        For a path graph $G=(V,E)$ with $V=\{v_1,\dots,v_n\}$ and its worker graph $G_W=(V_W,E_W)$, we consider an assignment of workers to tasks which caused the computation to fail.
        Let $W$ be the sequence constructed by \Cref{def:witness_sequence_construction}.
        We call a successful worker $w'' = (v_k, t''-1)$ \emph{linked at time $t$} if there is a failed worker $w' = (v_{k+1}, t')$ in $W$ such that $t''-1 < t' < t$. 
    
        Let $(M,M',w)$ be a triple we add to $W$ (cf. \Cref{def:witness_sequence_construction}) with $w=(v_k,t)$. 
        If for all $1\leq j< k$, there is a successful worker assigned to task $v_j$ before time $t$ and the last such worker $w''_j$ is not linked at time $t$, we call $\{w''_1,\dots,w''_{k-1}\}$ the success barrier for $w$.
    \end{definition}
    
    \begin{figure}[ht!]
        \begin{center}
            \begin{tikzpicture}[scale=0.5]
                \def\assignment
                {
                    {1,1,1,1,1,1,1,1,1,1,1,1,1,1,0,0,1,0,0,0},
                    {1,1,1,1,1,1,0,0,0,1,1,1,0,0,0,2,1,0,0,0},
                        {0,0,0,0,0,1,1,1,1,0,0,0,0,0,0,0,0,0,0,2},
                        {2,2,2,2,2,1,1,0,0,0,2,2,0,2,0,0,2,0,0,2},
                            {2,2,2,2,2,1,1,1,1,0,0,0,2,2,0,2,2,2,0,0},
                            {0,2,0,0,0,0,0,0,0,0,0,0,0,2,2,0,2,2,2,0},
                                {2,2,0,2,2,0,2,0,0,0,0,2,0,2,2,2,0,2,0,0},
                }
                \def\numoffset{1}
                \def\numsuccs{2}
                \def\numassign{20}
                \def\numtasks{7}

                \foreach \task [count=\y from 0] in \assignment {
                    \foreach \worker [count=\x from 0] in \task {
                        \foreach \xoffset in {1,...,\numsuccs} {
                            \pgfmathparse{\x+\xoffset < \numassign}
                            \ifnum\pgfmathresult=1
                                \draw[black!20] (\x+\y*\numoffset,-\y) to[bend left] (\x+\xoffset+\y*\numoffset,-\y-0);
                            \fi
                            \pgfmathparse{\x+\xoffset >= \numoffset && \x+\xoffset < \numassign+\numoffset && \y < \numtasks-1}
                            \ifnum\pgfmathresult=1
                                \draw[black!20] (\x+\y*\numoffset,-\y) to (\x+\xoffset+\y*\numoffset,-\y-1);
                            \fi
                        }
                    }
                }

                \foreach \y/\x in {2/7} {
                    \draw[thick] (\x-0.25,-\y) -- (\x-0.75,-\y);
                }
                \foreach \y/\x in {2/7,3/8,4/9} {
                    
                    \draw[green!70!black, thick] (\x,-\y) circle (0.25);
                    
                }
                
                \foreach \y/\x in {5/18,4/19,2/21} {
                    \draw[thick] (\x-0.25,-\y) -- (\x-0.75,-\y);
                }
                \foreach \y/\x in {5/18,4/19,3/19,2/21} {
                    
                    \draw[yellow!90!black, thick] (\x,-\y) circle (0.25);
                }
                
                \foreach \y/\xstart/\xend in {2/2/6,5/10/17,4/18/18,2/19/20} {
                    \draw[red, thick] (\xstart-0.25,-\y-0.25) rectangle (\xend+0.25,-\y+0.25);
                }
                
                \def\xx{12}
                \def\yy{4}
                \foreach \y/\x in {3/9,2/10,1/17,0/16} {
                    \draw[green!70!black,thick,rounded corners] (\xx,-\yy) to (\xx,-\yy+0.5) to (\x,-\y-0.5) to (\x,-\y);
                    \xdef\xx{\x}
                    \xdef\yy{\y}
                }
                
                \foreach \y/\x/\yy/\xx in {5/18/4/14,4/19/3/11,3/19/2/12,2/21/1/19} {
                    \draw[thick,rounded corners] (\x-0.19,-\y+0.19) to (\x-0.75,-\y+0.75) to (\xx+0.25,-\yy-0.25);
                }
                \foreach \y/\xstart/\xend in {4/13/14,3/10/11,2/11/12,1/18/19} {
                    \draw[red, thick] (\xstart-0.25,-\y-0.25) rectangle (\xend+0.25,-\y+0.25);
                }

                \foreach \task [count=\y from 0] in \assignment {
                    \foreach \worker [count=\x from 0] in \task {
                        \ifthenelse{\worker = 0}
                        {
                            \fill[red] (\x+\y*\numoffset,-\y) circle (4pt);
                        }{}
                        \ifthenelse{\worker = 1}
                        {
                            \fill[green!70!black] (\x+\y*\numoffset,-\y) circle (4pt);
                        }{}
                        \ifthenelse{\worker = 2}
                        {
                            \fill[yellow!90!black] (\x+\y*\numoffset,-\y) circle (4pt);
                        }{}
                        \ifthenelse{\worker = 3}
                        {
                            \fill[red] (\x+\y*\numoffset,-\y) circle (4pt);
                            \fill[white] (\x+\y*\numoffset,-\y) circle (1pt);
                        }{}
                        \ifthenelse{\worker = 4}
                        {
                            \fill[blue!90!black] (\x+\y*\numoffset,-\y) circle (4pt);
                            \fill[white] (\x+\y*\numoffset,-\y) circle (1pt);
                        }{}
                    }
                }

                \draw[->] (0,-\numtasks) to node[midway,below] {time} (\numassign+\numoffset*\numtasks-\numoffset,-\numtasks); 
                \foreach \y in {1,...,\numtasks} {
                    \node (v\y) at (-1.5,-\y+1) {$v_\y$};
                    \draw[dotted] (-1,-\y+1) -- (\y*\numoffset-1.5,-\y+1);
                }
                
            \end{tikzpicture}
            \caption{An example of the success barrier for a path graph with $7$ nodes, where $\gamma = 20$ and $\delta = 1$. The green line connects the nodes in the success barrier for the first failed worker in the witness sequence.
            We can link each failed worker with a unique honest node in its success barrier and assigned to the preceding task.
            This honest worker must be succeeded by $2\delta=2$ malicious workers that are not in the witness sequence, which compose the set $M'$ corresponding to that failed worker.}
            \label{fig:success_barrier}
        \end{center}
    \end{figure}    
    
    \begin{lemma}[Success Barrier Existence]\label{lem:success_barrier_existence}
        For a path graph $G=(V,E)$ with $V=\{v_1,\dots,v_n\}$ and its worker graph $G_W=(V_W,E_W)$, we consider an assignment of workers to tasks which caused the computation to fail.
        Let $W=((M_1,M'_1,w_1), (M_2,M'_2,w_2), \dots)$ be the sequence constructed by \Cref{def:witness_sequence_construction}.
        Then, for all $i \geq 1$ with $w_i \neq \perp$, there is a success barrier for $w_i$.
    \end{lemma}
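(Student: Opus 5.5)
We will prove the statement by induction on $i$, following the order in which \Cref{def:witness_sequence_construction} adds honest workers to $W$. For the current $w_i=(v_k,t)$ we must establish two facts: (i) every task $v_j$ with $j<k$ has a successful worker assigned before time $t$; (ii) for each such $j$, the last successful worker $w''_j$ of $v_j$ before time $t$ is not linked at time $t$. A restart in Case 1(a) empties $W$. For the restarted triple we argue exactly as for the first triple, using the successful worker $w_{i}$ that triggered the restart.

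\textbf{Base case.} Let $w_1=(v,t)$ be the first honest worker of its task. For the initial choice, $w_1$ is successful, so by \Cref{obs:successful_failed_workers} there is a path of honest workers from an initial worker to $w_1$ in $G_W$. This path passes through a worker of every task $v_j$ with $j<k$ at times strictly less than $t$, because edges of $E_W$ strictly increase time. All of these path workers are successful, since each has an honest path from the source, which gives (i).

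For a restart in Case 1(a), $w_1=(v_{k+1},t')$ with $t'\ge t_{\min}(v_{k+1})>t+1$. The successful worker $w_i=(v_k,t)$ together with its own honest path supplies successful workers on all tasks above $v_{k+1}$ before time $t'$. Fact (ii) is immediate in both situations, since linking requires a failed worker already in $W$, and $W$ is empty at this point.

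\textbf{Inductive step.} Assume $w_i=(v_k,t)$ has a success barrier $\{w''_1,\dots,w''_{k-1}\}$, and consider $w_{i+1}=(v_{k'},t')$ with $t'\ge t$ by \Cref{obs:non-decreasing}.

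\emph{Downward step} ($w_i$ successful, $k'=k+1$, $t'>t$). Fact (i) holds for $j<k$ by the induction hypothesis, and for $j=k$ via $w_i$ itself. The new barrier element for $v_k$ is the last successful worker of $v_k$ before $t'$. It is at time $\ge t$, and no failed worker of $v_{k+1}$ was added to $W$ in the window between it and $t'$, because $w_{i+1}$ is the first honest worker after $t$. For $j<k$ the last successful worker before $t'$ may have moved later. Any new failed worker in $W$ on $v_{j+1}$ lies at time $\le t$ and was already accounted for, so non-linkedness transfers.

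\emph{Upward step} ($w_i$ failed, $k'=k-1$, $t'\ge t$). The tasks above $v_{k-1}$ form a subset of those above $v_k$. Later times only make the last successful worker later or equal, so (i) persists. For (ii), the only newly added failed worker is $w_i$ on $v_k$. It links workers of $v_{k-1}$, and $v_{k-1}$ is not among the tasks strictly above $v_{k-1}$, so it cannot link barrier elements of $w_{i+1}$.

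\textbf{Main obstacle.} The main obstacle is (ii): we must show that failed workers added earlier never link the relevant $w''_j$ when the barrier time advances from $t$ to $t'$. A failed worker $(v_{j+1},s)$ in $W$ with $s<t'$ could lie after a successful worker of $v_j$ that is now the latest before $t'$. The approach is to use the monotonicity of times in $W$ from \Cref{obs:non-decreasing} together with the following fact: whenever the sequence was at task $v_{j+1}$ at time $s$ and later moved below, it passed through a successful worker of $v_{j+1}$ at a time in $[s,t']$. That successful worker, through its honest predecessor, forces a successful worker of $v_j$ at a time later than $s$. The last successful worker of $v_j$ before $t'$ therefore lies after every failed worker of $v_{j+1}$ in $W$, so it is not linked.

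Once (i) and (ii) are established for $w_i$, the success barrier exists, and the well-definedness of $M'_i$ in \Cref{def:witness_sequence_construction} follows.
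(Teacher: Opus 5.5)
Your plan is the paper's proof. You induct along the construction and get the base case from an honest path out of an initial worker, using that edges of $G_W$ strictly increase time. You handle the Case~1(a) restart via the successful worker that triggered it. Your upward and downward steps rest on the monotonicity from \Cref{obs:non-decreasing}. Your separate treatment of the new barrier element on $v_k$ in the downward step is sound: every worker of $v_{k+1}$ strictly between $t$ and $t'$ lies in $M_{i+1}$ and is malicious.

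The one inference that does not follow as written is the key step of your ``main obstacle'' paragraph. Knowing that a successful worker $(v_{j+1},r)$ with $r>s$ has an honest predecessor says nothing by itself about a successful $v_j$-worker at or after $s$. That predecessor may be an earlier worker of $v_{j+1}$. Also, the $v_j$-worker at which its honest path enters $v_{j+1}$ could a priori be assigned before $s$. Excluding both cases requires the fact that $(v_{j+1},s)$ failed, which you never invoke: any successful worker of $v_j$ or $v_{j+1}$ within the $2\delta$ rounds before $s$ would have made it successful.

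A simpler repair uses the construction itself. After the failed worker $(v_{j+1},s)$ enters $W$, the next honest worker lies on $v_j$. The current worker lies on $v_{k'}$ with $k'>j$, and $W$ has not been emptied since. So the sequence must later step down from $v_j$, and this happens only at a successful $v_j$-worker. That worker's time is at least $s$ by monotonicity, and strictly less than $t'$ because downward steps strictly increase time.

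This shows directly that the last successful $v_j$-worker before $t'$ is not linked, so you no longer need to transfer (ii) inductively. It also covers failed workers assigned at exactly time $t$. Your ``already accounted for'' does not literally handle those, since ``not linked at time $t$'' only constrains failed workers assigned strictly before $t$.
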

    \begin{proof}
        We prove the statement inductively.
        We start by proving that there is a success barrier for $w_1=(v_k,t_1)$.
        If $w_1$ is successful, there must be a directed path $P=((v_1,t'_1),\dots,(v_{k-1},t'_{k-1}))$ of length $k-1$ with $(v_{k-1},t'_{k-1})\in\pre_{G_W}(w_1)$ s.t. $P$ consists only of successful workers and connects $w_1$ to an initial worker $(v_1,t'_1)$.
        The topology of the worker graph yields, that $t'_1<\dots<t'_{k-1}<t_1$.
        Thus, there exists a successful worker $(v_j,t_j)$ assigned before time $t_1$ for each $1 \leq j < k$.
        The existence of a last successful worker $w''_j$ assigned to $v_j$ before time $t_1$ immediately follows.
        As $W$ only contains $(M_1,M'_1, w_1)$ at this point, it does not contain any failed workers. Therefore, the $w''_j$ are not linked at time $t$, and the success barrier for $w_1$ is given by $\{w''_1,\dots,w''_{k-1}\}$.
        
        If $w_1$ failed, we must be in a case where we just emptied $W$ (cf. \Cref{def:witness_sequence_construction} Case (1a)) as we pick the first starting point of the witness structure such that $w_1$ is successful.
        From Case (1a), we know that there is a successful worker $w$ assigned before time $t_1$ at task $v_{k-1}$.
        This successful worker again implies the existence of a path of length $k-1$ of successful workers connecting $w$ to the initial task.
        As $W$ only contains $(M_1,M_1', w_1)$ at this point, we can argue analogously to the case where $w_1$ is successful that the success barrier for $w_1$ exists.
        
        We now fix some $i\geq 1$ and prove that the existence of the success barrier for $w_i=(v_k,t_i)$ implies the existence of the success barrier for $w_{i+1}$.
        Due to the existence of the success barrier of $w_i$, we know that there is a successful worker assigned to all tasks $v_j$ before time $t_i \leq t_{i+1}$ with $1\leq j< k$ that is not linked at time $t_i$.
        If $w_i$ failed, we have $w_{i+1} = (v_{k-1}, t_{i+1})$, and the existence of a successful worker assigned to task $v_j$ before time $t_{i+1}$ directly follows for all $1\leq j< k-1$.
        In particular, there are also last successful workers $w''_j = (v_j, t''_j-1)$ assigned before time $t_{i+1}$ for all $1 \leq j < k-1$.
        Further, by \Cref{obs:non-decreasing}, there can be no failed workers in $W$ that are assigned between times $t_i$ and $t_{i+1}$.
        It follows that no $w''_j$ is linked at time $t_{i+1}$.  
        Similarly, if $w_i$ is successful, we have $w_{i+1} = (v_{k+1}, t_{i+1})$, $t_{i+1} > t_i$, and $w_i$ itself is a successful node located at task $v_k$, again yielding the existence of a successful worker assigned to task $v_j$ before time $t_{i+1}$ for all $1\leq j< k+1$.
        As before, this implies that there are also last successful workers $w''_j=(v_j,t''_j-1)$ assigned before time $t_{i+1}$ for all $1\leq j< k+1$.
        Further, by \Cref{obs:non-decreasing}, $w_i$ itself is not linked at time $t_{i+1}$.
        It again follows that no $w''_j$ is linked at time $t_{i+1}$.
        We conclude that there is a success barrier for $w_{i+1}$.
    \end{proof}
    
    We are now ready to prove that the sequence constructed by \Cref{def:witness_sequence_construction} is a valid witness sequence w.r.t. the assignment that caused the execution to fail.
    To this end, we show that the construction always terminates and that the constructed sequence satisfies \Cref{def:witness_sequence,def:valid_witness_sequence}.
    \begin{proof}[Proof of \Cref{lem:valid_witness_existence}]
        If there is a task $v\in V$ such that all workers assigned to task $v$ are malicious, the construction immediately terminates with $W = ((M_1, \emptyset ,\perp))$ and $M_1 = \{(v, 1), \dots, (v, \gamma)\}$, which is a witness sequence. Further, $\mathcal{M}=M_1$ contains only malicious workers, $\mathcal{M'}=\emptyset$ and $\mathcal{H}=\emptyset$.
        Thus, $W$ is also valid w.r.t. $A$.
        Otherwise, note that by \Cref{obs:non-decreasing}, the assignment times of the honest workers in the witness sequence are non-decreasing.
        Notably, in all cases of \Cref{def:witness_sequence_construction} except Case (2c) the assignment times are increasing.
        Let $(M_i,M'_i,w_i)$ with $w_i = (v_k, t)$ be a triple added to the sequence in Case (2c).
        By construction, we have $w_{i-1} = (v_{k+1}, t')$ for $t'\leq t$.
        Therefore Case $(2c)$ can occur no more than $n-1$ times in a row and the construction must eventually terminate.

        For a triple $(M_i,M'_i,w_i)$ added by the construction, $M_i$ and $w_i$ exactly match the structural requirements of a witness sequence in \Cref{def:witness_sequence}.
        Further, all workers in $M_i$ are clearly malicious and $w_i$ is an honest worker that failed if $w_i \in \mathcal{F}_W$ and was successful if $w_i \in \mathcal{H} \setminus \mathcal{F}$, i.e. $M_i$ and $w_i$ satisfy the requirements of a valid witness sequence w.r.t. $A$ in \Cref{def:valid_witness_sequence}. 
        It remains to prove that $M'_i$ also matches \Cref{def:witness_sequence,def:valid_witness_sequence}.
        If $w_i$ is successful, we set $M'_i = \emptyset$, which matches the requirements.
        On the other hand, if $w_i$ failed, we need to prove several properties.
        Let $w_i=(v_k,t)$.
        By \Cref{lem:success_barrier_existence}, there is a last successful worker $w''=(v_{k-1},t''-1)$ with $t''-1<t$.
        Thus, the set $M'_i = \{(v_{k-1},t''), \dots, (v_{k-1}, \min\{t''+2\delta-1,t_\mathrm{max}(v_{k-1})\})\}$ is well-defined.
        First, we show that indeed $t'' \leq t - 2\delta$ and $t\geq t_\mathrm{min}(v_k)+\delta$ hold, and that all workers in $M'_i$ are malicious.
        Then, we prove that $M'_i \cap M'_j = \emptyset$ for all $1 \leq j \leq \ell$ with $j \neq i$, and that $M'_i \cap M_j = \emptyset$ for all $1 \leq j \leq \ell + 1$.
        
        First, assume for contradiction that $t''> t-2\delta$.
        Then, $w_i\in\suc_{G_W}(w'')$ which contradicts that $w_i$ failed.
        Thus, $t''\leq t-2\delta$ has to hold.
        Further, since $t'' - 1 \geq t_{\mathrm{min}}(v_{k-1}) = t_{\mathrm{min}}(v_k) - \delta$, we get that $t \geq t'' + 2\delta > t_{\mathrm{min}}(v_k) + \delta$.
        Next, note that there can be no failed workers in $M'_i$, as $M'_i \subseteq \suc_{G_W}(w'')$.
        Additionally, since $t''\leq t-2\delta$, all workers in $M'_i$ were assigned before time $t$.
        Hence, there can be no successful worker in $M'_i$, as otherwise $w''$ would not be the last successful worker assigned to $v_{k-1}$ before time $t$.
        We get that $M'_i$ contains only malicious workers.

        Next, let $w_j = (v', t')$ with $j \neq i$ be a different failed worker in $W$ (if there is more than one).
        If $v' \neq v_k$, we immediately get $M'_i \cap M'_j = \emptyset$ because $M'_i$ and $M'_j$ belong to different tasks.
        Thus, consider the case that $v' = v_k$.
        If $t' \leq t''$, we get $M'_i \cap M'_j = \emptyset$ because all workers in $M'_j$ were assigned before time $t'$.
        Otherwise, if $t' > t''$, assume for contradiction that $M'_i \cap M'_j \neq \emptyset$.
        Since both $M'_i$ and $M'_j$ consist of only malicious workers and directly follow the last successful worker assigned to task $v_{k-1}$ before time $t$ and $t'$, respectively, we have $M'_i = M'_j$ and $w''$ is this last successful worker for both $w_i$ and $w_j$.
        But then, $w''$ is linked at either time $t$ or time $t'$, which contradicts that $w''$ is not linked by \Cref{lem:success_barrier_existence}.
        We conclude that $M'_i \cap M'_j = \emptyset$ has to hold.

        Finally, consider some $M_j$ with $1 \leq j \leq \ell + 1$.
        If $j = i$, we immediately get $M'_i\cap M_j=\emptyset$ as $M_i$ and $M'_i$ belong to different tasks.
        If $j > i$, note that no worker in $M_j$ was assigned before time $t$ by \Cref{obs:non-decreasing}.
        As all workers in $M'_i$ were assigned before time $t$, we get $M'_i\cap M_j=\emptyset$.
        Lastly, if $j < i$, assume for contradiction that $M'_i\cap M_j \neq \emptyset$.
        Since $M'_i$ contains only malicious workers and $M_j$ is followed by an honest worker $w_j = (v_{k-1}, t')$, we get that $t' > t''$ has to hold.
        To connect $w_j$ with $w_i$ in the sequence, there needs to be a successful worker assigned to task $v_{k-1}$ that was assigned between times $t''$ and $t$, because our construction only moves downwards when a successful worker is encountered and the assignment times of the honest workers are non-decreasing by \Cref{obs:non-decreasing}.
        But then, $w''$ would not be last successful worker assigned to task $v_{k-1}$ before time $t$, which is a contradiction.
        We conclude that $M'_i \cap M_j = \emptyset$ has to hold.
    \end{proof}

    \subsection{Witness Existence for the DAG Case}
    \label{sec:valid_witness_sequence_existence_dags}

    We now examine the more general setting in which the task graph $G = (V,E)$ is a DAG.
    As before, we begin by proving a few simple properties that will help us in arguing that the construction is well-defined.

    \begin{lemma}[Properties of Assignments in DAGs]\label{lem:assignment_properties_dags}
        Let $G=(V,E)$ be a task graph and let $G_W=(V_W,E_W)$ be its worker graph.
        For an assignment $A$ of workers to tasks, the following holds:
        \begin{itemize}
            \item If there is no task for which all assigned workers are malicious, the first honest worker assigned to any initial task is successful.
            \item If a worker $(v, t)$ failed, the set $\pre_G(v)$ is non-empty and $(v,t)$ failed to receive $v$'s input corresponding to at least one task in $\pre_G(v)$.
            \item If $A$ caused the computation to fail, there is a task where no assigned worker is successful.
        \end{itemize}
    \end{lemma}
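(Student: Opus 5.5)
I will prove the three items separately, mirroring the proof of \Cref{lem:assignment_properties} for paths. The main tool is \Cref{obs:successful_failed_workers}, together with the algorithm's rule that honest workers of initial tasks receive their input directly from the reliable source.

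For the first item, assume no task has only malicious workers, so every initial task $v$ receives at least one honest worker, and let $(v,t)$ be the first of these. Since $v$ has no predecessors, the condition of \Cref{obs:successful_failed_workers} reduces to $v$ itself: the trivial path consisting of $(v,t)$ alone connects an initial worker to $(v,t)$. Operationally, $(v,t)$ may fail to verify any output from earlier workers of $v$ (all of them are malicious). It then falls back to its inputs, which come from the reliable source and are therefore correct, so it computes the output itself and is successful. The same argument shows in fact that every honest worker of an initial task is successful.

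For the second item, let $(v,t)$ be a failed (honest) worker. By the first item's argument, honest workers of initial tasks never fail, so $v$ is not initial and $\pre_G(v)\neq\emptyset$. Now inspect the worker protocol. After failing to verify an output for $v$ itself, $(v,t)$ requests outputs for every $v'\in\pre_G(v)$. If it had obtained a correct output for each such $v'$, it would compute $v$ and be successful. Since it failed, there is at least one $v'\in\pre_G(v)$ for which no correct output was received. The only subtlety is that failure is defined operationally, as not determining $v$'s output, so this is a direct reading of the algorithm rather than a consequence of \Cref{obs:successful_failed_workers}.

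For the third item, suppose every task has at least one successful worker. In particular, every final task $v$ has a successful worker $(v,t)$. Such a worker holds the correct output and, by the algorithm, forwards it to the target, which verifies it. The target therefore receives correct outputs for all final tasks, and the computation succeeds. Contraposition gives the claim.

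I expect no real obstacle here. The only care needed is in the second item, to argue from the algorithm's two-phase verification that failure pins down a specific missing predecessor input. I would also note that the first item holds even when some non-initial task is fully malicious; the hypothesis only ensures that initial tasks have an honest worker.
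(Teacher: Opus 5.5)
Your proposal is correct and follows the paper's proof essentially step by step. Honest workers of initial tasks succeed because the reliable source supplies their input; hence a failed worker must sit at a non-initial task and lack a correct input from some predecessor; and if every task (in particular every final task) had a successful worker, the target would receive all final outputs, which is the contrapositive you use.
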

    \begin{proof}
        If there is no task for which all assigned workers are malicious, each initial task is assigned at least one honest worker.
        All honest workers assigned to an initial task receive their input from the source and are thus successful.
        In particular, the first honest worker assigned to any initial task is successful.
        Next, consider a worker $(v, t)$ that failed.
        Since all honest workers assigned to any initial task are successful, we have $D(v)=k$ for $1 < k \leq n$.
        Hence, $\pre_G(v)$ is non-empty.
        For $(v, t)$ to have failed, it can not have received all of its inputs, i.e., there is a task in $\pre_G(v)$ that $(v,t)$ did not receive its input from.
        Finally, assume for contradiction that every task is assigned a worker that is successful.
        Then, the target receives the output of each final task.
        This contradicts $A$ causing the computation to fail.
    \end{proof}

    To extend our construction from \Cref{def:witness_sequence_construction} to the DAG case, we need to generalize the way we find a starting task for the construction, and the way we decide which predecessor and successor to move to.
    To this end, we need to keep track of how we traverse the DAG.
    To achieve this, we will employ a stack $S$.
    Whenever we leave a task $v$ moving upwards to a predecessor of $v$, we push task $v$ onto $S$. Whenever we decide which successor to move downwards to, we pop the top element of $S$ and move to that task.

    \begin{definition}[Valid Witness Sequence Construction for DAGs]\label{def:witness_sequence_construction_dags}
        For a task graph $G = (V,E)$ that is a DAG and its worker graph $G_W = (V_W, E_W)$, we consider a failed execution of the algorithm on $G$.
        Our construction again iteratively adds elements $(M,M',w)\in\mathcal{P}(V_W)\times\mathcal{P}(V_W)\times V_W$ to $W$, such that $M$ and $M'$ are sets of malicious workers, and $w$ is an honest worker, until a final element $(M,\emptyset,\perp)$ is added.
                If there is a task $v \in V$ such that all workers assigned to $v$ are malicious, we again add the triple $(M_1, \emptyset, \perp)$ to $W$ with $M_1 = \{(v, t_{\mathrm{min}}(v)), \dots, (v, t_{\mathrm{max}}(v))\}$ and end the construction.
        Otherwise, we employ a stack $S$ that we initialize to be empty and generalize the witness sequence for paths in the following way:

        \begin{itemize}
            \item To identify the starting task, we first consider the task $v_\mathrm{fail} \in V$ of smallest depth for which no assigned worker is successful (ties broken arbitrarily; such a worker exists, cf. \Cref{lem:assignment_properties_dags}).
            Each task of smaller depth than $D(v_\mathrm{fail})$ must have been assigned at least one successful worker.
            Starting with $v\gets v_\mathrm{fail}$, we iteratively move upwards in $G$ until we reach a task whose first worker is successful (which has to exist, cf. \Cref{lem:assignment_properties_dags}).
            If $v$'s first honest worker $(v,t)$ failed, we move towards the predecessor $v'$ of $v$, whose first successful worker was assigned last, i.e. $v' = \arg\max_{v_i \in \pre_G(v)}t_i$ (ties broken arbitrarily), where $(v_i, t_i)$ is the first successful worker assigned to $v$'s predecessor $v_i\in\pre_G(v)$. 
            Afterwards, we perform $S.\codefont{push($v$)}$ and set $v \leftarrow v'$.
            Once $v$'s first honest worker $(v,t)$ is successful, we add $(M_1, \emptyset , (v,t))$  to $W$ with $M_1 = \{(v, t_\mathrm{min}(v)), \dots, (v, t-1)\}$.

            \item After adding an element $(M_i,M'_i, w_i)$ to $W$, if $w_i = \perp$, our construction terminates.
            Otherwise, if $w_i = (v,t)$, we add $(M_{i+1},M'_{i+1}, w_{i+1})$ to $W$.
            To define $M_{i+1}$ and $w_{i+1}$ we distinguish two cases:
            \begin{enumerate}
                \item If $w_i$ is successful, we perform $v'\gets S.\codefont{pop}()$. $v'$ is a successor of $v$ due to \Cref{lem:dag_construction_well_defined}. We continue on $v'$ as in Case (1) of \Cref{def:witness_sequence_construction}.
                \item If $w_i$ failed, let $v' \in \pre_G(v)$ be an arbitrary preceding task from which $w$ received no correct output (such a task exists due to \Cref{lem:assignment_properties_dags}).
                We perform $S.\codefont{push(v)}$ and continue on $v'$ as in Case (2) of \Cref{def:witness_sequence_construction}.
            \end{enumerate}
            Additionally, if $w_{i+1}=\perp$ or $w_{i+1}$ is successful, we define $M'_{i+1}=\emptyset$, and if $w_{i+1}=(v,t')$ failed, we define $M'_{i+1} = \{(v'',t''), \dots, (v'', \min\{t''+2\delta-1,t_\mathrm{max}(v'')\})\}$, where $v''\in\pre_G(v)$ is an arbitrary preceding task from which $w_{i+1}$ received no output, and $(v'',t''-1)$ is the last successful worker assigned to $v''$ with $t''-1<t'$ (which exists due to \Cref{lem:success_barrier_existence_dags}).
        \end{itemize}
    \end{definition}

    Note that \Cref{obs:non-decreasing} extends to the DAG case. Thus, we will still reference it in the following analysis.
    Next, we argue that whenever we enter Case (1) of \Cref{def:witness_sequence_construction_dags}, the task on top of the stack $S$ will be a successor of the task we are currently at.
    Hence, we still move downwards in the DAG and it makes sense to continue as in Case (1) of the path case construction in \Cref{def:witness_sequence_construction}.

    \begin{lemma}[Construction Well-Defined]\label{lem:dag_construction_well_defined}
        When we encounter a worker $w_i=(v,t)$ that is successful (Case (1) of \Cref{def:witness_sequence_construction_dags}) there is a successor $v'$ of $v$ on top of $S$.
    \end{lemma}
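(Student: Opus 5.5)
We prove the statement by establishing a stack invariant that is maintained by every step of the construction in \Cref{def:witness_sequence_construction_dags}, and then reading off the claim from that invariant. The invariant we propose is the following. At any moment, let $v$ be the task at which the most recently added honest worker sits (or, during the initial phase, the task currently held in the variable $v$), and let $S=(s_1,\dots,s_m)$ with $s_m$ on top. Then $s_m \in \suc_G(v)$ whenever $m\geq 1$, and $s_{j}\in\suc_G(s_{j+1})$ for all $1\le j<m$. In other words, the sequence $v,s_m,s_{m-1},\dots,s_1$ is a directed path in $G$. In addition, we need the stack to be non-empty whenever a successful worker is reached. For this we carry a second invariant: the bottom element $s_1$ is always $v_{\mathrm{fail}}$. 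If the stack is empty, the current task is $v_{\mathrm{fail}}$ itself, all of whose honest workers failed.

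The verification proceeds by induction over the steps of the construction.

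The initial phase starts with $S$ empty and $v=v_{\mathrm{fail}}$. Each upward move pushes the current $v$ and then replaces $v$ by some $v'\in\pre_G(v)$. After this, the new top $v$ is a successor of the new current task $v'$, and the path property is preserved.

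In Case (2) of the main loop, we push $v$ and move to a predecessor $v'\in\pre_G(v)$. The new worker $w_{i+1}$ sits at $v'$, so the invariant holds exactly as in the initial phase. The subcases in which $w_{i+1}=\perp$ end the construction and need no further checking.

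In Case (1), we pop the top element $s_m$, which by the invariant is a successor of $v$, and move to it. The new current task is $s_m$, and the remaining top $s_{m-1}$ is a successor of $s_m$, so the path property survives.

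The main obstacle is the interaction with the restart in Case (1a) of \Cref{def:witness_sequence_construction}, which empties $W$. We must make sure the stack is handled consistently there. The plan is to interpret the restart as emptying only $W$ and keeping $S$ unchanged. Since we move to the popped successor, the invariant is untouched by the restart.

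With the path invariant in place, the lemma follows once we show that $S$ is non-empty whenever Case (1) is entered. Suppose $S$ were empty. Then the current task is $s_1$'s position in the chain, namely $v_{\mathrm{fail}}$, because the stack only becomes empty after popping $v_{\mathrm{fail}}$, and at that point the current task equals $v_{\mathrm{fail}}$. Before the initial phase ever pushes, the current task is $v_{\mathrm{fail}}$ and no worker has been added. By definition no worker assigned to $v_{\mathrm{fail}}$ is successful, so $w_i$ cannot be a successful worker at $v_{\mathrm{fail}}$. This contradicts being in Case (1). Hence $S$ is non-empty, and its top is a successor of $v$ by the path invariant.

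The delicate bookkeeping is to check that whenever the stack is empty the current task is indeed $v_{\mathrm{fail}}$. We handle this by strengthening the invariant to state that $v_{\mathrm{fail}}$ is reachable from the current task via the chain $v,s_m,\dots,s_1$ and terminates it. Concretely, either $S$ is empty and $v=v_{\mathrm{fail}}$, or $s_1=v_{\mathrm{fail}}$. Every push and pop preserves this, by the same induction as above.
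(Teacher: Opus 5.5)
Your proof is correct and is essentially the paper's argument. The paper argues informally that a task is first entered from a successor (via a push), and that everything pushed since the last visit to $v$ has been popped again; your invariant that $v,s_m,\dots,s_1$ forms a directed path in $G$ makes exactly this precise. You also spell out two points the paper leaves implicit: that the restart in Case~(1a) keeps $S$, which matches how the paper later uses the stack in the success-barrier proof, and that $S$ is non-empty in Case~(1) because $v_{\mathrm{fail}}$ sits at its bottom and has no successful worker.
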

    \begin{proof}
        We consider the top element of the stack when we enter a task $v\in V$ during the construction. 
        Note that entering $v$ from a predecessor requires $v$ to already have been on the stack (it got popped in Case (1) of \Cref{def:witness_sequence_construction_dags}).
        Thus, the first time we enter $v$, we enter from a successor $v'$ of $v$ (including while searching for the starting task), which pushes $v'$ on top of the stack.
        Hence, whenever we enter $v$ from a predecessor $v''$ of $v$, we must have visited $v$ before.
        Since we must have removed all elements we pushed to the stack since last visiting $v$, we conclude that some successor $v'$ of $v$ must be on top of the stack.
    \end{proof}

    We now generalize the concept of the success barrier for DAGs.
    We say an honest worker $w = (v,t)$ has a success barrier, if for every \emph{ancestor} $v''$ of $v$ in $G$ there is a successful worker assigned to $v$ before time $t$.
    Moreover, if $w'' = (v'', t''-1)$ is the last successful worker assigned to ancestor $v''$ before time $t$, there should be no failed worker $w' = (v', t')$ in $W$ with $v' \in \suc_G(v'')$ and $t'' - 1 < t' < t$, \emph{such that the sequence moved upwards from $v'$ to $v''$ after encountering $w'$}.
    As in the path case, this implies that $w''$ has not been ``used'' before by some other failed worker $w_{i'}$ in $W$ to define its own set $M'_{i'}$ for any $i' < i$.
    We formally define the success barrier for DAGs as follows.

    \begin{definition}[Success Barrier for DAGs]\label{def:success_barrier_dags}
        For a task graph $G=(V,E)$ and its worker graph $G_W=(V_W,E_W)$, we consider an assignment of workers to tasks which caused the computation to fail.
        Let $W$ be the witness sequence constructed by \Cref{def:witness_sequence_construction_dags}.
        We call a successful worker $w'' = (v'', t''-1)$ \emph{linked at time $t$} if there is a failed worker $w_i = (v', t')$ in $W$ such that $v' \in \suc_G(v'')$, $t''-1 < t' < t$ and $w_{i+1}$ is assigned to task $v''$.
        
        Let $(M,M',w)$ be a triple we add to $W$ (cf. \Cref{def:witness_sequence_construction_dags}) with $w = (v,t)$.
        If for all ancestors $v'$ of $v$ in $G$ there is a successful worker assigned to task $v'$ before time $t$ and the last such worker $w_{v'}$ is not linked at time $t$, we call $\{w_{v'}\mid v'\: \text{ancestor of } v\}$ the success barrier for $w$.
    \end{definition}

    \begin{figure}[ht!]
        \begin{center}
            \begin{tikzpicture}[scale=0.5]
                \def\layers
                {
                    {{0,1,1,1,0,0,1,0,1,0,0,0,0},{0,0,0,1,0,1,0,1,0,0,1,1,0}},
                    {{0,1,0,1,0,0,0,0,0,0,2,0,0},{2,0,2,0,1,0,0,1,0,1,0,0,0}},
                        {{2,0,0,0,0,0,0,2,0,2,2,0,2},{0,2,0,2,1,0,0,0,1,0,0,2,2}}, 
                }
                \def\numoffset{1}
                \def\numsuccs{2}
                \def\numassign{13}

                \def\xoffset{2}
                \def\yscale{3}

                \foreach \layer [count=\y from 0] in \layers {
                    \pgfmathparse{dim({\layer})}
                    \let\layersize\pgfmathresult
                    \pgfmathparse{-(\layersize-1)/2}
                    \let\xstart\pgfmathresult
                    \foreach \task [count=\xx from 0] in \layer{
                        \pgfmathparse{\y+(\numassign+\xoffset)*(\xstart+\xx)}
                        \let\x\pgfmathresult
                        \fill[black!5,rounded corners] (\x,-\y*\yscale-0.25) rectangle (\x+\numassign,-\y*\yscale+0.25);
                        \foreach \worker [count=\w] in \task {
                            
                            \ifnum\worker=0
                                \node[circle,fill=red,inner sep=1.4pt] (\y-\xx-\w) at (\x+\w-0.5,-\y*\yscale) {};
                            \fi
                            \ifnum\worker=1
                                \node[circle,fill=green!70!black,inner sep=1.4pt] (\y-\xx-\w) at (\x+\w-0.5,-\y*\yscale) {};
                            \fi
                            \ifnum\worker=2
                                \node[circle,fill=yellow!90!black,inner sep=1.4pt] (\y-\xx-\w) at (\x+\w-0.5,-\y*\yscale) {};
                            \fi
                        }
                    }
                }

                \foreach \y/\x/\xx in {0/0/0,0/1/0,0/1/1,1/0/0,1/1/0,1/1/1} {
                    \pgfmathparse{int(\y+1)}
                    \let\yy\pgfmathresult
                    \foreach \w in {1,...,\numassign}{
                        \pgfmathparse{int(\w+1-\numoffset)}
                        \let\wmin\pgfmathresult
                        \pgfmathparse{int(\w+\numsuccs-\numoffset)}
                        \let\wmax\pgfmathresult
                        \foreach \ww in {\wmin,...,\wmax}{
                            \ifnum\ww>\numassign\else 
                                \draw[black!10] (\y-\x-\w) -- (\yy-\xx-\ww);
                            \fi
                        }
                    }
                }

                \foreach \layer [count=\y from 0] in \layers {
                    \foreach \task [count=\x from 0] in \layer{
                        \foreach \worker [count=\w] in \task {
                            \pgfmathparse{int(\w+1)}
                            \let\wmin\pgfmathresult
                            \pgfmathparse{int(\w+\numsuccs)}
                            \let\wmax\pgfmathresult
                            \foreach \ww in {\wmin,...,\wmax}{
                                \ifnum\ww>\numassign\else 
                                    \draw[black!10] (\y-\x-\w) to[bend left] (\y-\x-\ww);
                                \fi
                            }
                        }
                    }
                }

                \foreach \y/\x/\w/\xx in {0/1/4/1,1/1/5/0} {
                    \pgfmathparse{int(\y+1)}
                    \let\yy\pgfmathresult
                    \draw[thick,rounded corners,dashed] ($(\y-\x-\w)+(0,-0.25)$) -- ($(\y-\x-\w)+(0,-0.5)$) -- ($(\yy-\xx-\w)+(0,+0.5)$) -- ($(\yy-\xx-\w)+(0,+0.25)$);
                }
                
                \foreach \y/\x/\w/\xx in {2/0/8/0,1/0/11/0} {
                    \pgfmathparse{int(\y-1)}
                    \let\yy\pgfmathresult
                    \pgfmathparse{int(\w+1)}
                    \let\ww\pgfmathresult
                    \draw[thick,rounded corners,dashed] ($(\y-\x-\w)+(0,+0.25)$) -- ($(\y-\x-\w)+(0,+0.5)$) -- ($(\yy-\xx-\ww)+(0,-0.5)$) -- ($(\yy-\xx-\ww)+(0,-0.25)$);
                }
                
                \foreach \y/\x/\w/\xx/\ww in {2/0/8/0/6,1/0/11/0/11} {
                    \pgfmathparse{int(\y-1)}
                    \let\yy\pgfmathresult
                    \draw[thick,rounded corners] ($(\y-\x-\w)+(0,+0.25)$) -- ($(\y-\x-\w)+(0,+0.5)$) -- ($(\yy-\xx-\ww)+(0,-0.5)$) -- ($(\yy-\xx-\ww)+(0,-0.25)$);
                }
                \foreach \y/\x/\w/\ww in {1/0/5/6,0/0/10/11} {
                    \draw[red, thick] ($(\y-\x-\w)+(-0.25,-0.25)$) rectangle ($(\y-\x-\ww)+(+0.25,+0.25)$);
                }
                
                \foreach \y/\x/\w in {0/1/4,1/1/5} {
                    \draw[thick] ($(\y-\x-\w)+(-0.25,0)$) -- ($(\y-\x-\w)+(-0.75,0)$);
                }
                \foreach \y/\x/\w in {0/1/4,1/1/5} {
                    \draw[green!70!black, thick] (\y-\x-\w) circle (0.25);
                }
                
                \foreach \y/\x/\w in {2/0/8,1/0/11} {
                    \draw[thick] ($(\y-\x-\w)+(-0.25,0)$) -- ($(\y-\x-\w)+(-0.75,0)$);
                }
                \foreach \y/\x/\w in {2/0/8,1/0/11} {
                    \draw[yellow!90!black, thick] (\y-\x-\w) circle (0.25);
                }
                
                \foreach \y/\x/\w/\ww in {0/1/1/3,1/1/4/4,2/0/5/7,1/0/9/10,0/0/12/13} {
                    \draw[red, thick] ($(\y-\x-\w)+(-0.25,-0.25)$) rectangle ($(\y-\x-\ww)+(+0.25,+0.25)$);
                }

                \foreach \y/\x/\w/\xx/\ww in {1/0/4/0/9,1/0/4/1/6,1/1/8/1/6} {
                    \pgfmathparse{int(\y-1)}
                    \let\yy\pgfmathresult
                    \draw[green!70!black,thick,rounded corners] (\y-\x-\w) to ($(\y-\x-\w)+(0,+0.5)$) to ($(\yy-\xx-\ww)+(0,-0.5)$) to (\yy-\xx-\ww);
                }

                \draw[->] (-7,-2*\yscale-1) to node[midway,below] {time} (7,-2*\yscale-1); 
                \draw[->] (8,-2*\yscale-1) to node[midway,below] {time} (22,-2*\yscale-1); 

            \end{tikzpicture}
            \caption{An example of the success barrier for a DAG, where $\gamma = 13$ and $\delta = 1$. The green line connects the nodes in the success barrier for the first failed worker in the witness sequence.
            We can link each failed worker with a unique honest node in its success barrier and assigned to a preceding task.
            This honest worker must be succeeded by $2\delta=2$ malicious workers that are not in the witness sequence, which compose the set $M'$ corresponding to that failed worker.}
            \label{fig:success_barrier_dags}
        \end{center}
    \end{figure}

    \begin{lemma}[Success Barrier Existence for DAGs]\label{lem:success_barrier_existence_dags}
        For a task graph $G = (V,E)$ and its worker graph $G_W=(V_W,E_W)$, we consider an assignment of workers to tasks which caused the computation to fail.
        Let $W=((M_1,M'_1,w_1), (M_2,M'_2, w_2), \dots)$ be the sequence constructed by \Cref{def:witness_sequence_construction_dags}.
        Then, for all $i\geq 1$ with $w_i\neq\perp$, there is a success barrier for $w_i$.
    \end{lemma}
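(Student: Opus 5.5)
We argue by induction on $i$, mirroring the proof of \Cref{lem:success_barrier_existence}, replacing the single chain $v_1,\dots,v_{k-1}$ by the set of all ancestors of the current task. Two facts are used throughout. First, by \Cref{obs:successful_failed_workers}, a successful worker $(u,s)$ has, for every ancestor $u'$ of $u$, a directed path of successful workers from an initial worker through a worker of $u'$ to $(u,s)$. Since every edge of $G_W$ strictly increases time, this gives a successful worker of $u'$ assigned before time $s$. Second, by \Cref{obs:non-decreasing}, the assignment times $t_1\le t_2\le\dots$ of the honest workers in $W$ are non-decreasing, and they strictly increase on downward moves.

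\textbf{Base case $w_1=(v,t_1)$.} By the starting rule of \Cref{def:witness_sequence_construction_dags}, $w_1$ is the first honest worker of $v$ and is successful. The first fact then gives, for every ancestor $v'$ of $v$, a successful worker of $v'$ assigned before $t_1$, hence a last such worker $w_{v'}$. At this point $W$ contains no failed worker, so no $w_{v'}$ is linked at time $t_1$.

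\textbf{Inductive step, case $w_i=(v,t_i)$ successful.} Here $w_{i+1}=(v',t_{i+1})$ with $v'$ popped from $S$, a successor of $v$ by \Cref{lem:dag_construction_well_defined}, and $t_{i+1}>t_i$. The ancestors of $v'$ are $v$, the ancestors of $v$, and ancestors reached through other predecessors of $v'$. For $v$ itself, $w_i$ is a successful worker assigned before $t_{i+1}$. For ancestors of $v$, we use the barrier of $w_i$. The other ancestors are the delicate part. I would prove a stronger invariant by induction over the whole construction: whenever $v'$ lies on the stack, every predecessor of $v'$, and hence every ancestor, has a successful worker assigned before the current time. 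The argument is that $v'$ was pushed when a failed worker $(v',s)$ was encountered, and every predecessor from which $(v',s)$ did receive a correct output has a successful worker before $s\le t_{i+1}$, by the first fact applied to that worker. Predecessors from which no correct output arrived are handled the first time the sequence descends through them. Alternatively, I would strengthen the barrier to cover the ancestors of every task currently on $S$ together with the current task.

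\textbf{Inductive step, case $w_i$ failed.} Here $w_{i+1}=(v',t_{i+1})$ with $v'\in\pre_G(v)$ and $t_{i+1}\ge t_i$. Every ancestor of $v'$ is an ancestor of $v$, so existence follows from the barrier of $w_i$ together with monotonicity of the time. We must still rule out that the last successful worker $w_{u}$ of some ancestor $u$ before $t_{i+1}$ is linked at time $t_{i+1}$. Any newly linking failed worker would have to be assigned in $[t_i,t_{i+1})$. By \Cref{obs:non-decreasing}, the only candidate is $w_i$ itself, which links only the worker of $v'$. But $v'$ is not an ancestor of $v'$, and since $t_{i+1}\ge t_i$, the worker of $v'$ being linked is irrelevant for the barrier of $w_{i+1}$. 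The last successful worker of an ancestor may change when the time moves from $t_i$ to $t_{i+1}$. A newer worker can be linked only by a failed worker assigned after it, that is, by a failed worker of $W$ assigned after time $t_i$. No such worker exists apart from $w_i$, because $w_i$ is the most recent element of $W$.

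\textbf{Main obstacle.} The hardest point is unlinkedness in the successful case: after descending to $v'$, ancestors of $v'$ that are not ancestors of $v$ may have had their last successful worker linked by failed workers encountered during earlier excursions. I would resolve this with the stack discipline. Such a linking failed worker $(x,s)$ must have moved upward into a predecessor $u$ of $x$, after which the sequence returned downward through $u$ only via a successful worker of $u$ assigned after $s$. That worker is newer than the linked one, contradicting "last successful worker before $t_{i+1}$". If the sequence never returned through $u$, then $x$ is still on the stack above $v'$, which contradicts $v'$ being on top.
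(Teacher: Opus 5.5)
There is a genuine gap in the \emph{existence} part of your successful case; your failed case is essentially fine. When $w_i=(v,t_i)$ is successful and $v'$ is popped, you need a successful worker assigned before $t_{i+1}$ for every ancestor of $v'$ that is not an ancestor of $v$.

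Your claim that predecessors from which $(v',s)$ received no correct output are ``handled the first time the sequence descends through them'' does not hold. After a failed worker, the sequence moves up into only \emph{one} such predecessor, so the others are never visited. Moreover, tasks pushed during the search for the starting task were not pushed on behalf of a failed worker of $W$ at all.

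Your fallback, a barrier covering the ancestors of every task on $S$ at the current time, is false. A task pushed during the starting search can lie many levels below the current task. Its predecessors receive workers only from $t_{\mathrm{min}}$ of their level on, possibly long after the current time.

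Tellingly, you never use the rule that the starting search moves to the predecessor whose first successful worker is assigned \emph{last}. This rule is indispensable. With an arbitrary choice $p\in\pre_G(v')$, another predecessor $q$ could have no successful worker before the next honest worker of $v'$, and no barrier would exist.

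The paper splits into two cases:
\begin{itemize}
\item If $v'$ was visited before, let $w_j$ be the last honest worker of $W$ at $v'$. By the stack discipline, $w_j$ failed and $w_{j+1}$ lies at $v$. The induction hypothesis for $w_j$ then already gives successful workers of \emph{all} ancestors of $v'$ before $t_j\le t_i<t_{i+1}$.
\item Otherwise, $v'$ was pushed during the starting search, so $v$ is its $\arg\max$ predecessor. Since $w_i$ is a successful worker of $v$ before $t_{i+1}$, every predecessor of $v'$ has its first successful worker before $t_{i+1}$. Your path fact then covers their ancestors.
\end{itemize}

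The same omission breaks your base case. You assert that $w_1$ is always the successful first honest worker of the starting task. But in Case (1a) the construction empties $W$ and restarts with $w_1$ the first honest worker of a task popped from $S$. Such a task was pushed during the starting search precisely because its first honest worker failed, so this $w_1$ is failed. Its barrier needs exactly the $\arg\max$ argument above; unlinkedness is trivial there, since $W$ was just emptied.

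Your unlinkedness argument for the successful case is a reasonable, uniform alternative to the paper's, which instead uses $v_{j+1}=v$ and the fact that excursions from $v$ only move among ancestors of $v$. However, its last sub-case is incomplete. If the sequence has not returned through $u$, then $x$ lies \emph{below} the popped entry of $v'$, not above it, so there is no conflict with $v'$ being on top. Close this case by acyclicity. The entry $x$ was pushed in the main construction, and while $x$ is on the stack the sequence only visits $u$ and ancestors of $u$. Hence the later push of $v'$ would make $v'$ equal to or an ancestor of $u$, although $u$ is a proper ancestor of $v'$.
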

    \begin{proof}
        We prove the statement inductively.
        We start by proving that there is a success barrier for $w_1 = (v_1, t_1)$.
        If $w_1$ is successful, consider an arbitrary path $(v'_1, \dots, v'_\ell = v_1)$ in $G$ from an initial task $v'_1$ to $v_1$.
        Analogous to the path case, we can argue that there must be a directed path of successful workers $P = ((v'_1, t'_1), \dots, (v'_{\ell-1}, t'_{\ell-1}))$ in $G_W$ such that $(v'_{\ell-1}, t'_{\ell-1}) \in \pre_{G_W}(w_1)$ and $t'_1 < \dots < t'_{\ell - 1} < t_1$, as otherwise $w_1$ would not be successful (cf. \Cref{lem:success_barrier_existence}).
        Hence, since all ancestors of $v$ lie on some path from an initial task to $v$, there exists a successful worker assigned before time $t_1$ for each ancestor of $v$.
        As $W$ only contains $(M_1, w_1)$ at this point, it does not contain any failed workers. 
        Therefore, no successful worker is linked at time $t_1$ and the success barrier for $w_1$ exists.

        If $w_1$ failed, we must be in a case where we just emptied $W$ (cf. \Cref{def:witness_sequence_construction} Case (1a)), as we pick the first starting point of the witness structure such that $w_1$ is successful.
        From Case (1a), we know that there is a successful worker $w$ assigned before time $t_1$ at some preceding task $v \in \pre_G(v_1)$.
        This successful worker again implies that for every ancestor of $v$ there exists a successful worker assigned before time $t_1$.
        However, note that $v_1$ might have some ancestors that are no ancestors of $v$.
        We now argue that these ancestors also have such a successful worker.
        As the assignment times of the honest workers previously added to the sequence were non-decreasing by \Cref{obs:non-decreasing}, we know that task $v_1$ was not visited by the witness sequence before.
        Since we only go downwards to tasks that are in the stack $S$, and only add tasks to $S$ after the witness sequence has visited them (once the starting task has been fixed), this implies that $v_1$ belongs to the tasks that were added to $S$ while determining the (original) starting task of the construction.
        As $v_1$ was on top of $S$ when the witness sequence was at task $v$, we must have gone up from task $v_1$ to $v$ while determining the starting task.
        Therefore, $v$ is the predecessor of $v_1$ whose first successful worker was assigned last.
        As there is a successful worker assigned to $v$ before time $t_1$, this implies that for all other predecessors of $v_1$ there is also successful worker assigned before time $t_1$.
        As before, this also implies that for every ancestor of $v_1$ there is a successful worker assigned before time $t_1$.
        As $W$ only contains $(M_1, M'_1, w_1)$ at this point, we can argue analogously to the case where $w_1$ is successful that the success barrier for $w_1$ exists.

        We now fix some $i \geq 1$ and prove that the existence of the success barrier for $w_i = (v_i, t_i)$ implies the existence of the success barrier for $w_{i+1} = (v_{i+1}, t_{i+1})$.
        Due to the existence of the success barrier of $w_i$, we know that for each ancestor $v'$ of $v_i$ there is a last successful worker assigned to $v'$ before time $t_{i}$ that is not linked at time $t_i$.
        From \Cref{obs:non-decreasing}, we have $t_{i+1} \geq t_i$, which implies that for each ancestor of $v_i$ there is also a successful worker assigned before time $t_{i+1}$.
        Further, there can be no failed worker in $W$ strictly between times $t_i$ and $t_{i+1}$.
        Hence, it still holds that for each ancestor $v'$ of $v_i$ the last successful worker assigned to $v'$ before time $t_{i+1}$ is not linked at time $t_{i+1}$.
        
        If $w_i$ failed, we have $v_{i+1} \in \pre_G(v_i)$ and, thus, every ancestor of $v_i$ is also an ancestor of $v_{i+1}$.
        Therefore, it immediately follows that there is a success barrier for $w_{i+1}$. 
        If $w_i$ is successful, we have $v_{i+1} \in \suc_G(v_i)$, $t_{i+1} > t_i$, and $w_i$ itself is a successful node that is not linked at time $t_{i+1}$.
        Hence, for $v_i$ and all ancestors $v'$ of $v_i$, there is a last successful worker assigned before time $t_{i+1}$ that is not linked at time $t_{i+1}$.
        However, $v_{i+1}$ might have some ancestors that are no ancestors of $v_i$.
        To analyze these, we distinguish two cases.
        
        First, consider the case that $W$ has not visited task $v_{i+1}$ before, i.e. there is no honest worker $(v_j, t_j)$ in $W$ with $j < i$ and $v_j = v_{i+1}$.
        Then, we can argue analogously to the base case where $w_1$ failed, that both $v_i$ and $v_{i+1}$ belong to the tasks that were added to $S$ while determining the starting task of the construction, and thus, for every ancestor of $v_{i+1}$ there is a successful worker assigned before time $t_{i+1}$.
        Further, note that whenever $W$ moves upwards from a task $v$, it only traverses ancestors of $v$ until it reaches $v$ again.
        Therefore, as the ancestors of $v_{i+1}$ that are no ancestors of $v_i$ were not added to $S$ while determining the starting task of the construction, they have never been visited by $W$ before.
        In particular, at the time that $w_{i+1}$ is added to $W$, there is no failed worker in $W$ for all ancestors of $v_{i+1}$ that are no ancestors of $v_i$, i.e. no successful worker of these tasks is linked at time $t_{i+1}$.
        We conclude that there is a success barrier for $w_{i+1}$.
        
        Next, consider the case that $W$ has visited task $v_{i+1}$ before.
        To this end, let $w_j = (v_j, t_j)$ be the last honest worker in $W$ with $j < i$ and $v_j = v_{i+1}$.
        By the induction hypothesis we know that every ancestor $v'$ of $v_j$ has a last successful worker assigned to $v'$ before time $t_{j}$ that is not linked at time $t_j$.
        Since $v_j = v_{i+1}$ and $t_j < t_{i+1}$, it follows that also for every ancestor of $v_{i+1}$ there is a successful worker assigned before time $t_{i+1}$.
        Further, note that since $v_{i+1}$ was on top of the stack $S$ after $w_i$ was added to $W$, we have $v_{j+1} \in \pre_G(v_j)$.
        Therefore, $v_{j+1} = v_i$ has to hold, as otherwise $w_j$ would not be the last honest worker in $W$ assigned to task $v_{i+1}$ before $w_{i+1}$.
        Therefore, at the time that $w_{i+1}$ is added to $W$, the last successful worker assigned before time $t_{i+1}$ is still not linked for all ancestors of $v_{i+1}$ that are no ancestors of $v_i$.
        We conclude that there is a success barrier for $w_{i+1}$.
    \end{proof}

    Finally, we prove that the construction in \Cref{def:witness_sequence_construction_dags} indeed produces a valid witness sequence.
    Given the existence of the success barrier, the proof is very similar to the one of \Cref{lem:valid_witness_existence} in the path case.
    For the sake of completeness, we repeat the arguments here anyways.

    \begin{proof}[Proof of \Cref{lem:valid_witness_existence_dags}]
    
        If there is a task $v\in V$ such that all workers assigned to task $v$ are malicious, the construction immediately terminates with $W = ((M_1, \emptyset ,\perp))$ and $M_1 = \{(v, 1), \dots, (v, \gamma)\}$, which is a witness sequence. Further, $\mathcal{M}=M_1$ contains only malicious workers, $\mathcal{M'}=\emptyset$ and $\mathcal{H}=\emptyset$.
        Thus, $W$ is also valid w.r.t. $A$.
        Otherwise, note that by \Cref{obs:non-decreasing}, the assignment times of the honest workers in the witness sequence are non-decreasing.
        Notably, in all cases of \Cref{def:witness_sequence_construction} (which is referred to by \Cref{def:witness_sequence_construction_dags}) except Case (2c) the assignment times are increasing.
        Let $(M_i,M'_i,w_i)$ with $w_i = (v, t)$ be a triple added to the sequence in Case (2c).
        By construction, we have $w_{i-1} = (v', t')$ for $v'\in\pre_G(v)$ $t'\leq t$.
        Therefore Case (2c) can occur no more than $D-1$ times in a row and the construction must eventually terminate.

        For a triple $(M_i,M'_i,w_i)$ added by the construction, $M_i$ and $w_i$ exactly match the structural requirements of a witness sequence in \Cref{def:witness_sequence}.
        Further, all workers in $M_i$ are clearly malicious and $w_i$ is an honest worker that failed if $w_i \in \mathcal{F}_W$ and was successful if $w_i \in \mathcal{H} \setminus \mathcal{F}$, i.e. $M_i$ and $w_i$ satisfy the requirements of a valid witness sequence w.r.t. $A$ in \Cref{def:valid_witness_sequence}. 
        It remains to prove that $M'_i$ also matches \Cref{def:witness_sequence} and \Cref{def:valid_witness_sequence}.
        If $w_i$ is successful, we set $M'_i = \emptyset$, which matches the requirements.
        On the other hand, if $w_i$ failed, we need to prove several properties.
        Let $w_i=(v,t)$.
        By \Cref{lem:success_barrier_existence}, there is a last successful worker $w_{v''}=(v'',t''-1)$ s.t.  $v'\in\pre_G(v)$ is the arbitrary preceding task we picked for $M'_i$ (cf. \Cref{def:witness_sequence_construction_dags}) and $t''-1<t$.
        Thus, the set $M'_i = \{(v'',t''), \dots, (v'', \min\{t''+2\delta-1,t_\mathrm{max}(v'')\})\}$ is well-defined.
        First, we show that indeed $t'' \leq t - 2\delta$ and $t\geq t_\mathrm{min}(v)+\delta$ hold, and that all workers in $M'_i$ are malicious.
        Then, we prove that $M'_i \cap M'_j = \emptyset$ for all $1 \leq j \leq \ell$ with $j \neq i$, and that $M'_i \cap M_j = \emptyset$ for all $1 \leq j \leq \ell + 1$.
        
        First, assume for contradiction that $t''> t-2\delta$.
        Then, $w_i\in\suc_{G_W}(w'')$ which contradicts that $w_i$ failed.
        Thus, $t''\leq t-2\delta$ has to hold.
        Further, since $t'' - 1 \geq t_{\mathrm{min}}(v_{k-1}) = t_{\mathrm{min}}(v_k) - \delta$, we get that $t \geq t'' + 2\delta > t_{\mathrm{min}}(v_k) + \delta$.
        Next, note that there can be no failed workers in $M'_i$, as $M'_i \subseteq \suc_{G_W}(w'')$.
        Additionally, since $t''\leq t-2\delta$, all workers in $M'_i$ were assigned before time $t$.
        Hence, there can be no successful worker in $M'_i$, as otherwise $w''$ would not be the last successful worker assigned to $v''$ before time $t$.
        We get that $M'_i$ contains only malicious workers.

        Next, let $w_j = (v', t')$ with $j \neq i$ be a different failed worker in $W$ (if there is more than one).
        If $v' \neq v$, we immediately get $M'_i \cap M'_j = \emptyset$ because $M'_i$ and $M'_j$ belong to different tasks.
        Thus, consider the case that $v' = v$.
        If $t' \leq t''$, we get $M'_i \cap M'_j = \emptyset$ because all workers in $M'_j$ were assigned before time $t'$.
        Otherwise, if $t' > t''$, assume for contradiction that $M'_i \cap M'_j \neq \emptyset$.
        Since both $M'_i$ and $M'_j$ consist of only malicious workers and directly follow the last successful worker assigned to task $v''$ before time $t$ and $t'$, respectively, we have $M'_i = M'_j$ and $w''$ is this last successful worker for both $w_i$ and $w_j$.
        But then, $w''$ is linked at either time $t$ or time $t'$, which contradicts that $w''$ is not linked by \Cref{lem:success_barrier_existence}.
        We conclude that $M'_i \cap M'_j = \emptyset$ has to hold.

        Finally, consider some $M_j$ with $1 \leq j \leq \ell + 1$.
        If $j = i$, we immediately get $M'_i\cap M_j=\emptyset$ as $M_i$ and $M'_i$ belong to different tasks.
        If $j > i$, note that no worker in $M_j$ was assigned before time $t$ by \Cref{obs:non-decreasing}.
        As all workers in $M'_i$ were assigned before time $t$, we get $M'_i\cap M_j=\emptyset$.
        Lastly, if $j < i$, assume for contradiction that $M'_i\cap M_j \neq \emptyset$.
        Since $M'_i$ contains only malicious workers and $M_j$ is followed by an honest worker $w_j = (v', t')$, we get that $t' > t''$ has to hold.
        To connect $w_j$ with $w_i$ in the sequence, there needs to be a successful worker assigned to task $v'$ that was assigned between times $t''$ and $t$, because our construction only moves downwards when a successful worker is encountered and the assignment times of the honest workers are non-decreasing by \Cref{obs:non-decreasing}.
        But then, $w''$ would not be last successful worker assigned to task $v'$ before time $t$, which is a contradiction.
        We conclude that $M'_i \cap M_j = \emptyset$ has to hold.
    \end{proof}

    \subsection{Correctness and Work Bound}\label{sec:correctness_and_work_bound}
    
    It remains to show that it is unlikely for a random assignment $A$ of workers to tasks to contain a witness sequence that is valid w.r.t. $A$.
    We start by counting the number of malicious assignments $A$ has to perform for a witness sequence $W$ to be valid w.r.t. $A$.
    Intuitively, each downwards movement from a task $v$ to a task $v' \in \suc_G(v)$ increases the number of malicious workers in $W$ by at least $\delta-1$, as the assignments to $v'$ are delayed by $\delta$ rounds relative to $v$.
    By the same argument, the number of malicious workers decreases by $\delta$ for each upwards movement.
    However, as each upwards movement is associated with a set $M'$ consisting of $2\delta$ additional malicious workers, we still get an overall increase of $\delta$ malicious workers.
    We provide a formal bound with the following two lemmas.
    
    \begin{lemma}[Additional Malicious Workers]
    \label{lem:additional_malicious_workers}
        For a task graph $G=(V,E)$ and its corresponding worker graph $G_W=(V_W,E_W)$, let $A$ be an assignment of workers to tasks and let $W=((M_1,M'_1,w_1),\dots,(M_\ell,M'_\ell,w_\ell),\allowbreak (M_{\ell+1},M'_{\ell+1},w_\perp))$ be a witness sequence that is valid w.r.t. $A$.
        If $w_i = (v,t)$ failed for $1 \leq i \leq \ell$, we have $|M'_{i}| \geq \min\{2\delta, t_\mathrm{max}(v)-t+\delta+1\}$.
    \end{lemma}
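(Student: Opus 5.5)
We want a lower bound on $|M'_i|$ when $w_i=(v,t)$ failed. By \Cref{def:witness_sequence}, Cases (2c)/(2d), failure of $w_i$ means the sequence moves upwards to some $v'\in\pre_G(v)$. In that case $M'_i=\{(v',t''),\dots,(v',\min\{t''+2\delta-1,t_\mathrm{max}(v')\})\}$ for some $t''\le t-2\delta$. The argument is purely structural, and validity w.r.t.\ $A$ is only needed to guarantee that we are in an upward case.

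First, compute the cardinality directly:
\[
|M'_i|=\min\{t''+2\delta-1,t_\mathrm{max}(v')\}-t''+1=\min\{2\delta,\,t_\mathrm{max}(v')-t''+1\}.
\]
If the minimum is $2\delta$ we are done. Otherwise we must bound $t_\mathrm{max}(v')-t''+1$ from below.

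Second, use the leveled structure. Since $v'\in\pre_G(v)$ and the task graph is leveled, $D(v')=D(v)-1$, so $t_\mathrm{max}(v')=t_\mathrm{max}(v)-\delta$. Together with $t''\le t-2\delta$ this gives
\[
t_\mathrm{max}(v')-t''+1\ \ge\ t_\mathrm{max}(v)-\delta-(t-2\delta)+1\ =\ t_\mathrm{max}(v)-t+\delta+1.
\]
Combining the two steps yields $|M'_i|\ge\min\{2\delta,\,t_\mathrm{max}(v)-t+\delta+1\}$, as claimed.

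The only delicate point is checking that the set is nonempty and well formed, i.e.\ that $t''\le t_\mathrm{max}(v')$. This holds when $t\le t_\mathrm{max}(v)$, because then the bound above is at least $\delta+1>0$. Also, $t''\ge t_\mathrm{min}(v')$ holds because the $M'_i$ are required to be subsets of $V_W$. Beyond this, I expect no real obstacle: the proof is a short computation from the definition and the offset of $\delta$ between consecutive levels.
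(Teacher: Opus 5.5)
Your proof is correct and follows the paper's argument. Both write $|M'_i|$ as either $2\delta$ or $t_\mathrm{max}(v')-t''+1$, depending on whether the cap $t_\mathrm{max}(v')$ binds, and then bound the latter using $t_\mathrm{max}(v')=t_\mathrm{max}(v)-\delta$ and $t''\le t-2\delta$. Your extra check that $M'_i$ is nonempty (since $t\le t_\mathrm{max}(v)$) is a small well-formedness point the paper leaves implicit.
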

    \begin{proof}
        Since $w_i$ failed, we have $t \geq t_{\mathrm{min}}(v) + \delta$ and $M'_i = \{(v',t''),\dots,(v',\min\{t''+2\delta-1,t_\mathrm{max}(v')\})\}$ for a $v'\in\pre_G(v)$ and $t''\leq t- 2\delta$ (cf. \Cref{def:witness_sequence}, Property (2)).
        If $t'' + 2\delta -1 \leq t_\mathrm{max}(v')$, we have $|M_i'| = (t'' + 2 \delta-1) - t'' + 1 = 2\delta$.
        Otherwise, if $t'' + 2\delta -1 > t_\mathrm{max}(v')$, we have $|M_i'| = t_\mathrm{max}(v') - t'' + 1$.
        As we also have $t_\mathrm{max}(v') = t_\mathrm{max}(v) - \delta$ and $t''\leq t- 2\delta$, we obtain
        \begin{equation*}
            |M'_i| = t_\mathrm{max}(v') - t'' + 1 \geq t_\mathrm{max}(v) - \delta - (t-2\delta) + 1 = t_\mathrm{max}(v) - t + \delta + 1.
        \end{equation*}
    \end{proof}

    \begin{lemma}[Total Number of Malicious Workers]\label{lem:num_malicious_workers}
        For a task graph $G=(V,E)$ and its worker graph $G_W=(V_W,E_W)$, we consider an assignment $A$ of workers to tasks.
        Further, Let $W = ((M_1,M_1', w_1),\allowbreak \dots,\allowbreak (M_\ell,M'_\ell, w_\ell),\allowbreak (M_{\ell+1},\emptyset, \perp))$ be a witness sequence that is valid w.r.t. $A$.
        Then, there are at least $m\geq\max\{(\delta-1)\cdot \ell,\gamma\}$ malicious workers in $\mathcal{M}_W \cup \mathcal{M}'_W$.
    \end{lemma}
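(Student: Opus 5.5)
Since the sets $M_1,\dots,M_{\ell+1},M'_1,\dots,M'_{\ell+1}$ are pairwise disjoint (Property (3) of \Cref{def:witness_sequence}) and, by validity, consist only of malicious workers, we have $m=\sum_{i}|M_i|+\sum_i|M'_i|$. The plan is a potential argument along the sequence. For each honest $w_i=(v_i,t_i)$ define the \emph{slack} $s_i := t_i - t_\mathrm{min}(v_i)$, i.e., the number of workers of $v_i$ assigned before $w_i$. We will show that the total malicious count up to step $i$ is at least $s_i+1-1 + (\delta-1)(i-1)$, roughly $s_i + (\delta-1)(i-1)$, and then use the termination condition (Property (4)) to add the remaining workers up to $t_\mathrm{max}$. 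This gives both $\gamma$ and $(\delta-1)\ell$.

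\textbf{Base.} By Property (1), $M_1=\{(v_1,t_\mathrm{min}(v_1)),\dots,(v_1,t_1-1)\}$, so $|M_1|=s_1$.

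\textbf{Downward step.} Let $w_{i+1}=(v',t')$ with $v'\in\suc_G(v_i)$, so $t_\mathrm{min}(v')=t_\mathrm{min}(v_i)+\delta$. Then $M_{i+1}=\{(v',t_i+1),\dots,(v',t'-1)\}$ has size $t'-t_i-1$, and
\[
s_{i+1}=t'-t_\mathrm{min}(v_i)-\delta=s_i+|M_{i+1}|+1-\delta .
\]
Hence $|M_{i+1}|=s_{i+1}-s_i+\delta-1$.

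\textbf{Upward step.} Let $v'\in\pre_G(v_i)$, so $t_\mathrm{min}(v')=t_\mathrm{min}(v_i)-\delta$. Here $|M_{i+1}|=t'-t_i$ and $s_{i+1}=s_i+|M_{i+1}|+\delta$. This loses $\delta$ of slack, but $M'_i$ compensates. If $|M'_i|=2\delta$, then
\[
|M_{i+1}|+|M'_i| = s_{i+1}-s_i+\delta \ge s_{i+1}-s_i+\delta-1 .
\]
Otherwise \Cref{lem:additional_malicious_workers} gives $|M'_i|\ge t_\mathrm{max}(v_i)-t_i+\delta+1$, which is large enough on its own to finish the count directly. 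This is the boundary case I expect to be the main obstacle. Rather than insisting the telescoping bound holds at that step, I would observe that the upward step then lands near or past $t_\mathrm{max}(v')$, so the total is already at least $\gamma$ from the slack plus $|M'_i|$. I would then check that $(\delta-1)\ell$ is still covered, possibly by treating such a step as terminal-equivalent or by noting that later steps keep increasing the slack.

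\textbf{Telescoping.} Summing over the increments gives $\sum_{j\le i}(|M_j|+|M'_{j-1}|)\ge s_i+(\delta-1)(i-1)$.

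\textbf{Termination.} By Property (4), either $M_{\ell+1}$ runs up to $t_\mathrm{max}$ of its task, or $w_\ell$ lies within the last $\delta$ workers of its task and $M_{\ell+1}=\emptyset$.

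In the first case, apply the same step computation to the final (Stop) move. Its $M_{\ell+1}$, together with $M'_\ell$ when the final move is upward, contributes $\gamma - s_\ell - 1$ plus $\delta$ or more. The total is therefore at least $\gamma-1+(\delta-1)(\ell-1)+\delta-1$. This covers $\gamma$, up to an off-by-one absorbed by the $\delta$ term when $\ell\ge1$; when $\ell=0$, $M_1$ is a full task and has size $\gamma$. It also covers $(\delta-1)\ell$.

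In the second case, $s_\ell\ge\gamma-\delta$, so the total is at least $\gamma-\delta+(\delta-1)(\ell-1)$. This yields $(\delta-1)\ell$ directly. Reaching $\gamma$ needs $\ell\ge2$ or a little extra from the last step, so I would verify it using the fact that $w_\ell$ must have been reached by a downward step, each of which contributes $\delta-1$.

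Taking the maximum of the two bounds gives $m\ge\max\{(\delta-1)\ell,\gamma\}$.
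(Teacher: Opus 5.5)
Your slack potential is the paper's argument run forwards. The paper inducts on $\ell$ by truncating $W$ at its last honest worker $w=(v,t)$ and replacing the tail by the stop set $\overline M=\{(v,t),\dots,(v,t_\mathrm{max}(v))\}$, which has size $\gamma-s$. Its inductive quantity is therefore your running count plus $\gamma-s_i$. Your per-step identities for Downwards and Upwards moves are correct.

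\textbf{The gap: the second terminal case of Property (4).} There you bound the total by $s_\ell+(\delta-1)(\ell-1)\ge(\delta-1)\ell+\gamma-2\delta+1$. This gives $(\delta-1)\ell$ only if $\gamma\ge 2\delta-1$, which the lemma does not assume. Your plan for reaching $\gamma$ relies on ``$w_\ell$ must have been reached by a downward step''. That is false: $w_\ell$ may be $w_1$ or the endpoint of an Upwards step. It would not help anyway, since that step's $\delta-1$ is already in your telescoped sum.

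\textbf{What is missing is $M'_\ell$,} which your running sum never includes. In this case the last move must be Upwards \& Stop, because a Downwards \& Stop set $\{(v',t_\ell+1),\dots,(v',t_\mathrm{max}(v_\ell)+\delta)\}$ is never empty. So $w_\ell$ carries its own $M'_\ell$. Since $t_\ell\ge t_\mathrm{max}(v_\ell)-\delta+1$, the minimum in \Cref{lem:additional_malicious_workers} is attained by its second term: $|M'_\ell|\ge t_\mathrm{max}(v_\ell)-t_\ell+\delta+1=\gamma-s_\ell+\delta$. Adding this gives a total of at least $\gamma+(\delta-1)\ell+1$. This is exactly the paper's subcase $t>t_\mathrm{max}(v)-\delta$.

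\textbf{The intermediate ``boundary case'' cannot occur.} You should prove this rather than defer it. In an Upwards step, $w_{i+1}=(v',t')$ exists with $t'\ge t_i$, so $t_\mathrm{max}(v')\ge t_i>t_i-1\ge t''+2\delta-1$, hence $|M'_i|=2\delta$. The same holds for an Upwards \& Stop move with $M_{i+1}\neq\emptyset$, which you need, unjustified, in your first terminal case. So a truncated $M'$ only arises at the very last step, precisely the case you mishandled.

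\textbf{A minor arithmetic slip.} In the first terminal case your own numbers give $s_\ell+(\delta-1)(\ell-1)+(\gamma-s_\ell-1+\delta)=\gamma+(\delta-1)\ell$, not $\gamma-1+(\delta-1)\ell$. So there is no off-by-one to absorb; absorbing it would need $\delta\ge 2$, which the lemma does not assume.

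With these repairs your telescoping yields $m\ge\gamma+(\delta-1)\ell$, which implies the claim.
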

    \begin{proof}
        We prove the statement by induction over $\ell$.
        For $\ell = 0$, we have $W = ((M_1,\emptyset,\perp))$ and $M_1 = \{(v,t_\mathrm{min}(v)),\allowbreak \dots,\allowbreak (v,t_\mathrm{max}(v))\}$ for some task $v \in V$.
        Hence, there are at least $\gamma$ malicious workers in $M_1 = \mathcal{M}_W$ and the lemma holds.
        Now, for an arbitrary, but fixed $\ell$, we have that for each valid witness sequence $W$ containing $\ell$ honest workers there are at least $\max\{(\delta-1)\cdot \ell,\gamma\}$ malicious workers in $\mathcal{M}_W \cup \mathcal{M}'_W$.
        We extend our claim to the case where $W = ((M_1,M'_1, w_1), \dots, (M_{\ell+1},M'_{\ell+1},w_{\ell+1}), (M_{\ell+2},\emptyset,\perp))$ contains exactly $\ell+1$ honest workers.
        To this end, let $w_{\ell+1} \coloneqq (v,t)$ and consider the valid witness sequence $\overline{W} = ((M_1,M_1',w_1),\allowbreak \dots,\allowbreak (M_{\ell},M_{\ell}', w_{\ell}),\allowbreak (M_{\ell+1} \cup \overline{M},\emptyset,\perp))$, where we define $\overline{M} \coloneqq \{(v,t),\allowbreak \dots,\allowbreak (v,t_\mathrm{max}(v))\}$ extends $M_{\ell+1}$ with malicious workers as far as possible to the right.
        By the induction hypothesis, we have that $\mathcal{M}_{\overline{W}} \cup \mathcal{M}'_{\overline{W}}$ contains at least $\max\{(\delta-1)\cdot \ell,\gamma\}$ malicious workers.
        We now consider two cases.

        If $w_{\ell+1} = (v,t)$ was successful, we have $M_{\ell+2} = \{(v',t+1),\allowbreak \dots,\allowbreak (v',t_\mathrm{max}(v'))\}$ for some $v' \in \suc_G(v)$.
        From $t_\mathrm{max}(v') = t_\mathrm{max}(v) + \delta$, it follows that 
        \[
            |M_{\ell+2}| = t_\mathrm{max}(v') - (t+1) + 1 = t_\mathrm{max}(v) - t + \delta = |\overline{M}| + \delta - 1\text{.}
        \]
        Due to $\mathcal{M}_{W} = (\mathcal{M}_{\overline{W}} \cup M_{\ell + 2}) \setminus \overline{M}$ and $\mathcal{M}'_{W} = \mathcal{M}'_{\overline{W}}$ and as all involved sets are disjoint, we have $|\mathcal{M}_W \cup \mathcal{M}'_W| = |\mathcal{M}_{\overline{W}} \cup \mathcal{M}'_{\overline{W}}| + |M_{\ell+2}|-|\overline{M}|$.
        Hence, $|\mathcal{M}_W \cup \mathcal{M}'_W| = |\mathcal{M}_{\overline{W}} \cup \mathcal{M}'_{\overline{W}}|+\delta-1$.
        As $\mathcal{M}_{\overline{W}} \cup \mathcal{M}'_{\overline{W}}$ contains at least $\max\{(\delta-1)\cdot \ell,\gamma\}$ malicious workers, we conclude that $\mathcal{M}_W \cup \mathcal{M}'_W$ contains at least $\max\{(\delta-1)\cdot (\ell+1),\gamma\}$ malicious workers.

        If $w_{\ell+1} = (v,t)$ failed, we know that $|M'_{\ell+1}| \geq \min\{2\delta, t_\mathrm{max}(v)-t+\delta+1\}$ by \Cref{lem:additional_malicious_workers}.
        We distinguish between two subcases.
        If $t > t_\mathrm{max}(v) - \delta$, then $M_{\ell+2} = \emptyset$ and $ \min\{2\delta, t_\mathrm{max}(v)-t+\delta+1\}=t_\mathrm{max}(v)-t+\delta+1$.
        Thus, we have $|M'_{\ell+1}| \geq t_\mathrm{max}(v) - t + \delta + 1 = |\overline{M}| + \delta$.
        Otherwise, if $t \leq t_\mathrm{max}(v) - \delta$, we have $M_{\ell+2} = \{(v', t), \dots, (v', t_\mathrm{max}(v'))\}$ for some $v' \in \pre_G(v)$
        and $\min\{2\delta, t_\mathrm{max}(v)-t+\delta+1\}=2\delta$.
        Thus, we have $|M'_{\ell+1}| \geq 2\delta$.
        From $t_\mathrm{max}(v') = t_\mathrm{max}(v) - \delta$, it follows that 
        \[
            |M_{\ell+2}|+|M'_{\ell+1}| \geq t_\mathrm{max}(v') - t + 1 + 2\delta = t_\mathrm{max}(v) - t + 1 + \delta = |\overline{M}| + \delta.
        \]
        Due to $\mathcal{M}_{W} = (\mathcal{M}_{\overline{W}} \cup M_{\ell + 2}) \setminus \overline{M}$ and $\mathcal{M}'_{W} = \mathcal{M}'_{\overline{W}} \cup M'_{\ell+1}$ and as all involved sets are disjoint, we have $|\mathcal{M}_W \cup \mathcal{M}'_W| = |\mathcal{M}_{\overline{W}} \cup \mathcal{M}'_{\overline{W}}| + |M_{\ell+2}|+|M'_{\ell+1}|-|\overline{M}|$.
        Hence, in both subcases, $|\mathcal{M}_W \cup \mathcal{M}'_W|\geq|\mathcal{M}_{\overline{W}} \cup \mathcal{M}'_{\overline{W}}|+\delta$.
        As $\mathcal{M}_{\overline{W}} \cup \mathcal{M}'_{\overline{W}}$ contains at least $\max\{(\delta-1)\cdot \ell,\gamma\}$ malicious workers, we conclude that $\mathcal{M}_W \cup \mathcal{M}'_W$ contains at least $\max\{(\delta-1)\cdot (\ell+1),\gamma\}$ malicious workers.
    \end{proof}

    Lemma \ref{lem:num_malicious_workers} gives us a lower bound on the number of malicious assignments that need to happen for a fixed witness sequence to be valid.
    This immediately yields the following upper bound for the probability of a fixed witness sequence being valid w.r.t. a random assignment of workers to tasks.
    
    \begin{corollary}[Probability of a Valid Witness Sequence]\label{cor:witness_probability}
         For a task graph $G=(V,E)$ and its worker graph $G_W = (V_W, E_W)$, let $W = ((M_1,M_1', w_1), \dots, (M_\ell,M'_\ell, w_\ell), (M_{\ell+1},\emptyset, \perp))$ be a fixed witness sequence in $G_W$. 
        Further, let $A$ be a random assignment of workers to tasks, where each worker is independently malicious with probability at most $\beta$ and honest with probability at least $(1- \beta)$.
        The probability that $W$ is valid w.r.t $A$ is at most $\beta^m$ where $m=|\mathcal{M}|+|\mathcal{M}'| \geq\max\{(\delta-1)\cdot |\mathcal{H}|,\gamma\}$.
    \end{corollary}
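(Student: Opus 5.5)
This corollary follows from \Cref{lem:num_malicious_workers} combined with independence of the worker assignments. Fix a witness sequence $W$ in $G_W$. By \Cref{def:valid_witness_sequence}, validity of $W$ w.r.t.\ $A$ requires in particular that every worker in $\mathcal{M}_W \cup \mathcal{M}'_W$ is malicious (the conditions on $\mathcal{H}_W$ and $\mathcal{F}_W$ only shrink the event further, so we drop them). Hence
\[
\Pr[W \text{ valid w.r.t. } A] \leq \Pr[\text{all workers in } \mathcal{M}_W \cup \mathcal{M}'_W \text{ are malicious}].
\]
Property (3) of \Cref{def:witness_sequence} makes the sets $M_1,\dots,M_{\ell+1},M'_1,\dots,M'_{\ell+1}$ pairwise disjoint. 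Therefore $|\mathcal{M}_W \cup \mathcal{M}'_W| = |\mathcal{M}_W| + |\mathcal{M}'_W| = m$, and these are $m$ distinct nodes of $G_W$.

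Since each node of $G_W$ is independently malicious with probability at most $\beta$, the probability that all $m$ distinct workers are malicious is at most $\beta^m$.

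It remains to bound $m$. Note that $|\mathcal{H}_W| = \ell$: the $w_i$ are distinct because their time/task pairs are determined by disjoint contiguous blocks, and in any case $|\mathcal{H}_W|\le \ell$, which suffices for a lower bound on $m$. I will apply \Cref{lem:num_malicious_workers}, whose count is purely structural, to obtain $m \geq \max\{(\delta-1)\ell,\gamma\} \ge \max\{(\delta-1)|\mathcal{H}|,\gamma\}$. The lemma is phrased for a valid $W$, and the one delicate point is that its counting argument never actually uses validity: it only uses the structural properties (2)--(4) of \Cref{def:witness_sequence} together with \Cref{lem:additional_malicious_workers}, which likewise depends only on Property (2). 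So the bound on $m$ holds for every fixed witness sequence. Alternatively, if $W$ is not valid for any assignment, the probability is $0$ and there is nothing to prove. Combining these gives the claimed bound $\beta^m$.
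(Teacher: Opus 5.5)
Your proof is correct and is the paper's own argument, which the paper leaves implicit: validity forces the $m$ pairwise disjoint workers of $\mathcal{M}_W\cup\mathcal{M}'_W$ to be malicious, independence gives the bound $\beta^m$, and \Cref{lem:num_malicious_workers} supplies the lower bound on $m$. Your remark that this lemma's count is purely structural is also right, and it is needed: the lemma's inductive step applies the hypothesis to the truncated sequence $\overline{W}$, whose set $\overline{M}$ contains the honest worker $w_{\ell+1}$, so $\overline{W}$ is not itself valid and the lemma has to be read as a statement about every witness sequence.
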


    Next, we count the total number of witness sequences with fixed sizes $|\mathcal{M}|$, $|\mathcal{H}|$ and $|\mathcal{F}|$.

    \begin{lemma}[Number of Witness Sequences]\label{lem:witness_amount_dags}
        For a task graph $G= (V,E)$ and its worker graph $G_W = (V_W, E_W)$, the number of witness sequences with fixed sizes $|\mathcal{M}|$, $|\mathcal{H}|$ and $|\mathcal{F}|$ in $G_W$ is at most $ n\cdot \binom{|\mathcal{M}|+|\mathcal{H}|}{|\mathcal{H}|}\cdot (2\gamma d)^{|\mathcal{H}|}$, where $d\geq\max\{\indeg(G),\outdeg(G)\}$.
    \end{lemma}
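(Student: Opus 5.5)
We encode a witness sequence as a small amount of data and count the encodings. The key observation is that, once the first triple is fixed, a witness sequence is determined by a sequence of local choices. Each choice either places a malicious worker or places the next honest worker together with its direction of movement in $G$. The sets $M'_i$ contribute no additional freedom beyond a bounded factor per honest worker, and that factor is absorbed into $2\gamma d$.

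\textbf{Step 1 (starting point).} By Property (1) of \Cref{def:witness_sequence}, $M_1$ contains $(v,t_{\mathrm{min}}(v))$ for some task $v$. Hence the start is determined by the choice of $v$, which gives at most $n$ options.

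\textbf{Step 2 (skeleton of $\mathcal{M}\cup\mathcal{H}$).} We read the workers of $M_1,w_1,M_2,w_2,\dots,M_{\ell+1}$ in order as a word of length $|\mathcal{M}|+|\mathcal{H}|$ over the two letters ``malicious'' and ``honest''. Each $M_i$ is contiguous and is immediately followed by $w_i$ on the same task. Moreover, the first worker of $M_{i+1}$ is determined by $w_i$ and the chosen neighbouring task: it is $(v',t+1)$ when moving downwards and $(v',t)$ when moving upwards. Therefore, given the letter pattern and the neighbour choices, every worker's position is fixed. The positions of the $|\mathcal{H}|$ honest letters in the word can be chosen in at most $\binom{|\mathcal{M}|+|\mathcal{H}|}{|\mathcal{H}|}$ ways. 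The one subtle case is $M_{i+1}=\emptyset$ in Case (2a)/(2c), where $w_{i+1}$ directly follows $w_i$. The word handles this correctly, because consecutive honest letters are allowed.

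\textbf{Step 3 (per-honest-worker choices).} For each $w_i$ we must still record three things.
\begin{itemize}
\item The direction, down or up. Whether $w_i\in\mathcal{F}$ is fixed by this, so fixing $|\mathcal{F}|$ only restricts the count further.
\item Which neighbour $v'\in\suc_G(v)$ or $\pre_G(v)$ the sequence moves to, which is at most $d$ choices.
\item For an upward step, the set $M'_i$. This set lives on the same predecessor $v'$ (as in Case (2c)/(2d) of \Cref{def:witness_sequence}) and is determined by its start $t''\in[t_{\mathrm{min}}(v'),t_{\mathrm{max}}(v')]$, so at most $\gamma$ choices.
\end{itemize}
Bounding direction times neighbour times $M'$ start by $2\cdot d\cdot\gamma$ gives the factor $(2\gamma d)^{|\mathcal{H}|}$. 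For a downward step, $M'_i=\emptyset$, and the bound $d\le\gamma d$ suffices.

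\textbf{Step 4 (the final triple and the end condition).} For the last triple, $M_{\ell+1}$ is either a full run to $t_{\mathrm{max}}$ or empty, so Property (4) of \Cref{def:witness_sequence} adds no choice beyond what the word already encodes. The one exception is the neighbour choice of the terminal move out of $w_\ell$, which is already counted in $w_\ell$'s factor. Multiplying the three counts then gives the stated bound.

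\textbf{Main obstacle.} The delicate point is making the encoding injective without double-counting or missing freedom. In particular, we must check that the $M'_i$ of an upward step is attached to the same predecessor as the move (\Cref{def:witness_sequence} requires this), so only its start time is free. We must also check that the terminal $\perp$ step and the empty-$M$ cases are uniquely decodable from the word together with the recorded directions. If the definition allowed $M'_i$ on a different predecessor than $w_{i+1}$, we would need an extra factor $d$, which would give $(2\gamma d^2)^{|\mathcal{H}|}$. So I would first verify from Case (2c)/(2d) that $v'$ is shared.
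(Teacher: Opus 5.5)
Your proof is correct and follows the paper's argument: at most $n$ starting tasks, at most $d$ neighbour choices per honest worker, a string that places the honest workers among the malicious ones, and, for each $M'_i$, its start time (at most $\gamma$ options) on the same predecessor the sequence moves to. The only difference is cosmetic: the paper counts $\binom{|\mathcal{H}|}{|\mathcal{F}|}\gamma^{|\mathcal{F}|}$ exactly and only afterwards bounds this by $(2\gamma)^{|\mathcal{H}|}$, whereas you charge a factor $2\gamma$ to every honest worker from the start.
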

    \begin{proof}
        There are $n$ possible tasks at which a witness sequence can start.
        Further, for each of the $|\mathcal{H}|$ many honest workers in a witness sequence, there are at most $d$ many options to which task the sequence moves next.
        Therefore, there are at most $n \cdot d^{|\mathcal{H}|}$ many sequences of tasks a witness sequence can traverse.
        Given such a sequence of tasks, we count the number of witness sequences that follow this sequence of tasks.
        To this end, we encode each triple $(M,M',w)$ of a witness sequence in the following way.
        First, we represent $(M,w)$ as a string on $\{0,1,2\}^*$ that starts with a number of $0$-entries equal to $|M|$ followed by a $1$ if $w$ is a successful worker, a $2$, if $w$ is a failed worker, or no additional symbol, if $w=\perp$.
        Next, we encode $M'$ simply by the time $t$ when the first worker in $M'$ was assigned, which implicitly also fixes the size of $M'$.
        Concatenating the resulting strings of all triples yields a unique representation of the witness sequence.

        We now count the number of $\{0,1,2\}$-strings with exactly $|\mathcal{M}|$ $0$-entries and exactly $|\mathcal{H}|$ $1$-or-$2$-entries, of which exactly $|\mathcal{F}|$ are a $2$-entries.
        There are $\binom{|\mathcal{M}|+|\mathcal{H}|}{|\mathcal{H}|}$ ways to place the $1$-or-$2$-entries in the string and $\binom{|\mathcal{H}|}{|\mathcal{F}|}$ ways to choose which of them are a $2$-entry.
        Further, there are at most $\gamma$ options for when the first worker of a set $M'$ can be assigned, yielding another factor of $\gamma^{|\mathcal{F}|}$.
        In total, we obtain that the number of witness sequences with fixed sizes $|\mathcal{M}|$, $|\mathcal{H}|$, and $|\mathcal{F}|$ is at most
        \[
            n\cdot d^{|\mathcal{H}|}\cdot \binom{|\mathcal{M}|+|\mathcal{H}|}{|\mathcal{H}|} \cdot \binom{|\mathcal{H}|}{|\mathcal{F}|} \cdot \gamma^{|\mathcal{F}|} \leq n\cdot \binom{|\mathcal{M}|+|\mathcal{H}|}{|\mathcal{H}|} \cdot (2\gamma d)^{|\mathcal{H}|}\text{.}
        \]
    \end{proof}

    Finally, we show that it is unlikely for a valid witness sequence to exist by combining \Cref{cor:witness_probability} and \Cref{lem:witness_amount_dags}.
    This proves that, w.h.p., the computation will succeed.

    \begin{lemma}[Correctness]\label{lem:witness_correctness_dags}
        Let $G$ be a task graph with in- and outdegree bounded by $d$.
        If $\delta\geq\max\{2,\allowbreak\nicefrac{4}{(\alpha^2\ln^2\nicefrac{1}{\beta})},\allowbreak\nicefrac{(2\log_{\nicefrac{1}{\beta}}(2e\gamma d))}{\alpha}+1\}$ and $\gamma\geq\nicefrac{(c+5)}{(1-\alpha)}\log_{\nicefrac{1}{\beta}} n$ for any constant parameter $0<\alpha<1$, the algorithm terminates correctly on $G$, w.h.p.
    \end{lemma}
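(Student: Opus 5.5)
By \Cref{lem:valid_witness_existence_dags}, the computation fails only if some witness sequence is valid w.r.t.\ the random assignment $A$. It therefore suffices to union-bound, over all witness sequences $W$ in $G_W$, the probability that $W$ is valid, and to show this sum is at most $n^{-c}$.

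\textbf{Organizing the union bound.} Group witness sequences by the parameters $h=|\mathcal{H}|$, $f=|\mathcal{F}|$ and $m_0=|\mathcal{M}|$. By \Cref{lem:witness_amount_dags}, each group contains at most $n\binom{m_0+h}{h}(2\gamma d)^h$ sequences. By \Cref{cor:witness_probability}, each is valid with probability at most $\beta^{m}$, where $m=|\mathcal{M}|+|\mathcal{M}'|\ge \max\{(\delta-1)h,\gamma\}$.

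Since $|\mathcal{M}'|\le 2\delta f$ is already fixed by the encoding, I only use $\beta^m\le\beta^{\max\{m_0,\,(\delta-1)h,\,\gamma\}}$. The cleanest route is to bound $m_0\le m$ and sum over $m$ directly: with $m$ as the total number of malicious workers, the count of index choices is polynomial. Indeed $h\le m/(\delta-1)$, $m\le|V_W|=n\gamma$, and $f\le h$, so there are only $O((n\gamma)^3)$ parameter triples. I will absorb this polynomial factor into the $+5$ in the choice of $\gamma$.

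\textbf{Bounding a single term.} For fixed $m$ and $h\le m/(\delta-1)$, I bound
\[
\binom{m+h}{h}\le\Bigl(\frac{e(m+h)}{h}\Bigr)^h\le\Bigl(\frac{2em}{h}\Bigr)^h .
\]
Splitting $\beta^m=\beta^{\alpha m}\beta^{(1-\alpha)m}$, I reserve $\beta^{(1-\alpha)m}\le\beta^{(1-\alpha)\gamma}\le n^{-(c+5)}$ for the high-probability bound. The remaining task is to show
\[
\Bigl(\frac{2em}{h}\cdot 2\gamma d\Bigr)^h\beta^{\alpha m}\le 1 .
\]
Taking $\log_{1/\beta}$ and writing $x=m/h\ge\delta-1$, this becomes
\[
\log_{1/\beta}(2e\gamma d)+\log_{1/\beta}(2x)\le\alpha x .
\]
I split $\alpha x$ into two halves. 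The first, $\tfrac{\alpha}{2}x\ge\log_{1/\beta}(2e\gamma d)$, follows from $\delta-1\ge 2\log_{1/\beta}(2e\gamma d)/\alpha$. The second, $\tfrac{\alpha}{2}x\ge\log_{1/\beta}(2x)=\ln(2x)/\ln(1/\beta)$, follows for all $x\ge\delta-1$ from $\delta\ge 4/(\alpha^2\ln^2(1/\beta))$, using $\ln(2x)\le 2\sqrt{x}$ (or $\sqrt{2x}$). Since the left side $\tfrac{\alpha}{2}x$ grows linearly while the right side grows logarithmically, the inequality holds once $\sqrt{x}\ge 2/(\alpha\ln(1/\beta))$. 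The requirement $\delta\ge 2$ guarantees $\delta-1\ge1$, so the division by $\delta-1$ is legitimate. The case $h=0$ is trivial.

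\textbf{Main obstacle.} The delicate step is this constant bookkeeping. Fitting the $\ln(2x)$ term under the stated bound $4/(\alpha^2\ln^2\nicefrac1\beta)$ may require a slightly different inequality (e.g.\ $\ln y\le\sqrt y$) or a careful treatment of the off-by-one between $\delta$ and $\delta-1$. Summing then gives
\[
\Pr[\text{fail}]\le n\cdot O((n\gamma)^3)\cdot n^{-(c+5)}\le n^{-c},
\]
using $\gamma\le n$. Termination within the stated time is immediate from the schedule.
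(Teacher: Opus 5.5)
\textbf{Gap: the second-half inequality does not follow from the hypotheses.} Your architecture matches the paper's. You take the contrapositive of \Cref{lem:valid_witness_existence_dags} and union-bound over groups with fixed $(|\mathcal{M}|,|\mathcal{H}|,|\mathcal{F}|)$ via \Cref{lem:witness_amount_dags} and \Cref{cor:witness_probability}. You split $\beta^m=\beta^{\alpha m}\beta^{(1-\alpha)m}$ with $\beta^{(1-\alpha)\gamma}\le n^{-(c+5)}$, and absorb the polynomially many groups.

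The gap is exactly the step you flagged, and as written it does not go through. After weakening $\binom{m+h}{h}\le(e(m+h)/h)^h$ to $(2em/h)^h$, you need $\tfrac{\alpha}{2}x\ge\log_{1/\beta}(2x)$ for all $x\ge\delta-1$. This is not implied by $\delta\ge\max\{2,4/(\alpha^2\ln^2\nicefrac{1}{\beta})\}$. Write $a=\alpha\ln\nicefrac{1}{\beta}$ and take $a=1$, $\delta=4$, $x=3$. Then $\tfrac{a}{2}x=1.5<\ln 6\approx 1.79$.

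Your justification breaks in two places:
\begin{itemize}
\item The hypothesis gives $\sqrt{\delta}\ge 2/a$, not $\sqrt{\delta-1}\ge 2/a$.
\item Even $\sqrt{x}\ge 2/a$ only yields $\tfrac{a}{2}x\ge\sqrt{x}$, while $\sqrt{x}<\ln(2x)$ around $x=4$ (e.g.\ $2<\ln 8$).
\end{itemize}
The third term of the max grows with $\gamma=\Theta(\log n)$, so it could absorb this loss for sufficiently large $n$. That is a different argument, and you do not make it.

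\textbf{Fix.} Do not discard the structure $(m+h)/h=x+1$. Equivalently, as the paper does, note that $\binom{m+h}{h}(2\gamma d)^h$ is increasing in $h$. Evaluate it at $h=m/(\delta-1)$, where $e(m+h)/h=e\delta$; this gives $n(2e\delta\gamma d)^{m/(\delta-1)}\beta^m$.

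The second half then reads $\tfrac{a}{2}(y-1)\ge\ln y$ for $y=x+1\ge\delta$, and this does follow from the hypothesis. From $y\ge 4/a^2$ we get $a\ge 2/\sqrt{y}$. With $u=\sqrt{y}\ge 1$, this gives $\tfrac{a}{2}(y-1)\ge u-\tfrac{1}{u}\ge 2\ln u=\ln y$. The last inequality holds because $u-\tfrac1u-2\ln u$ vanishes at $u=1$ and has derivative $(1-\tfrac1u)^2\ge 0$. (The paper's displayed chain for this step only reaches $a(\delta-1)\ge\ln\delta$, so this identity is what actually supplies the factor $\tfrac12$.)

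\textbf{Final union bound.} Your last line needs a small correction. The quantity $n\cdot(n\gamma+1)^3\cdot n^{-(c+5)}$ is of order $\gamma^3 n^{-c-1}$. You therefore need $\gamma^3\le n$, which holds for large $n$ since $\gamma=O(\log n)$. Merely $\gamma\le n$ only gives $n^{2-c}$.
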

    \begin{proof}        
        Let $A$ be a random assignment of workers to tasks, where each worker is independently malicious with probability at most $\beta$ and honest with probability at least $(1- \beta)$.
        We show that no witness sequence in $G_W$ is valid w.r.t $A$, w.h.p.
        From the contraposition of \Cref{lem:valid_witness_existence_dags}, it follows that $A$ does not cause the computation to fail, w.h.p., which proves the lemma.
        
        Before we start, we establish a helpful bound for an arbitrary constant parameter $0<\alpha<1$ if we pick $\delta$ as above:
        \[
            \alpha \geq \frac{\log_{\nicefrac{1}{\beta}}(2e\delta\gamma d)}{\delta-1} \tag{$\star$}
        \]
        To reduce the dependency on $\gamma$ and $d$, we do this in two steps, i.e., we separately prove $\nicefrac{\alpha}{2} \geq \frac{\log_{\nicefrac{1}{\beta}}(2e\gamma d)}{\delta-1}$ and $\nicefrac{\alpha}{2} \geq \frac{\log_{\nicefrac{1}{\beta}}(\delta)}{\delta-1}$ for different bounds on $\delta$.
        We can directly solve the first inequality for $\delta$:
        \[
            \nicefrac{\alpha}{2} \geq \frac{\log_{\nicefrac{1}{\beta}}(2e\gamma d)}{\delta-1} \Longleftrightarrow \delta \geq \frac{2\log_{\nicefrac{1}{\beta}}(2e\gamma d)}{\alpha}+1
        \]
        For the second inequality, we use $\delta\geq\max\{2,\nicefrac{4}{(\alpha^2\ln^2\nicefrac{1}{\beta})}\}$ and $\sqrt{x}>\ln x$ for $x > 0$ to obtain:
        \[
            \delta \geq \frac{4}{\alpha^2\ln^2\nicefrac{1}{\beta}} \Longrightarrow \sqrt{\delta} \geq \frac{2}{\alpha\ln{\nicefrac{1}{\beta}}} \Longrightarrow \alpha \ln \nicefrac{1}{\beta}\geq  \frac{2}{\sqrt{\delta}} = \frac{2\sqrt{\delta}}{\delta} = \frac{\sqrt{\delta}}{\delta-\nicefrac{\delta}{2}} \overset{\delta \geq 2}{\geq} \frac{\sqrt{\delta}}{\delta-1}\geq \frac{\ln(\delta)}{\delta -1}
        \] 
        Combining the inequalities yields ($\star$) for $\delta\geq\max\{2,\nicefrac{4}{(\alpha^2\ln^2\nicefrac{1}{\beta})},\allowbreak\nicefrac{(2\log_{\nicefrac{1}{\beta}}(2e\gamma d))}{\alpha}+1\}$.
    
        Additionally, \Cref{lem:num_malicious_workers} yields $|\mathcal{H}|\leq\nicefrac{m}{(\delta-1)}$ and $m\geq\gamma$. We employ the union bound to combine the results of \Cref{cor:witness_probability} and \Cref{lem:witness_amount_dags} obtaining for the probability that any witness sequence $W_{|\mathcal{M}|,|\mathcal{H}|,|\mathcal{F}|}$ with fixed sizes$|\mathcal{M}|$, $|\mathcal{H}|$ and $|\mathcal{F}|$ that is valid w.r.t. $A$ exists.
        \begin{align*}
            \Pr[\exists W_{|\mathcal{M}|,|\mathcal{H}|,|\mathcal{F}|} \text{ valid w.r.t. }A] &\leq n\cdot \binom{|\mathcal{M}|+|\mathcal{H}|}{|\mathcal{H}|} \cdot (2\gamma d)^{|\mathcal{H}|}\cdot\beta^m            \\
            &\leq n\cdot\binom{m+\nicefrac{m}{(\delta-1)}}{\nicefrac{m}{(\delta-1)}}\cdot (2\gamma d)^{\nicefrac{m}{(\delta-1)}}\cdot \beta^m\tag{$|\mathcal{M}|\leq m$, $|\mathcal{H}|\leq\nicefrac{m}{(\delta-1)}$}\\
            &\leq n\cdot \left(\frac{e(m+\nicefrac{m}{(\delta-1)})}{\nicefrac{m}{(\delta-1)}}\right)^{\nicefrac{m}{(\delta-1)}}\cdot (2\gamma d)^{\nicefrac{m}{(\delta-1)}}\cdot \beta^m\\
            &= n\cdot \left(e\delta\right)^{\nicefrac{m}{(\delta-1)}}\cdot (2\gamma d)^{\nicefrac{m}{(\delta-1)}}\cdot \beta^m\\
            &= n\cdot \left(2e\delta\gamma d\right)^{\nicefrac{m}{(\delta-1)}}\cdot \beta^m\\
            &= n\cdot (\nicefrac{1}{\beta})^{(\nicefrac{m}{(\delta-1)})\cdot\log_{\nicefrac{1}{\beta}}\left(2e\delta\gamma d\right)}\cdot (\nicefrac{1}{\beta})^{-m}\\
            &= n\cdot (\nicefrac{1}{\beta})^{-m+m\cdot(\nicefrac{\log_{\nicefrac{1}{\beta}}\left(2e\delta\gamma d\right)}{(\delta-1)})}\\
            &\leq n \cdot (\nicefrac{1}{\beta})^{-m(1-\alpha)}\tag{$\star$}\\
            &\leq n \cdot (\nicefrac{1}{\beta})^{-\gamma(1-\alpha)}\tag{$m\geq\gamma$}\\
            &\leq n \cdot (\nicefrac{1}{\beta})^{-(c+5) (\log_{\nicefrac{1}{\beta}} 2) \log n} \tag{$\gamma\geq\nicefrac{(c+5)\log_{\nicefrac{1}{\beta}}2}{(1-\alpha)}\log n$}\\
            &= n \cdot n^{-c-5}\\ 
            &\leq n^{-c-4}
        \end{align*}
        We apply the union bound three more times to extend the high probability result to all possible values of $|\mathcal{M}|$, $|\mathcal{H}|$, and $|\mathcal{F}|$. As none of these sets can contain more than $\gamma n$ workers, we obtain the probability for any valid witness sequence $W$ existing:
        \[
            \Pr[\exists W \text{ valid w.r.t. A}] \leq \sum_{|\mathcal{M}|} \sum_{|\mathcal{H}|} \sum_{|\mathcal{F}|} \Pr[ \exists W_{|\mathcal{M}|,|\mathcal{H}|,|\mathcal{F}|} \text{ valid w.r.t }A] \leq (\gamma n)^3 \cdot n^{-c-4} \leq n^{-c}
        \]
        using $\gamma^3\leq n$ for a sufficiently large $n$ as $\gamma=O(\log n)$.
    \end{proof}

    Lastly, we prove that, on expectation, only very few honest workers have to execute the task they are assigned to.
    Therefore, the expected total work performed by honest workers is very small.

    \begin{lemma}[Work Bound]\label{lem:manyworkers_dag_work}
        Let $G=(V,E)$ be a task graph. For any execution of the algorithm and any task $v\in V$, if $\delta\geq\nicefrac{((c+2)\log_{\nicefrac{1}{\beta}}\log n)}{2}$ and $\gamma=O(\log_{\nicefrac{1}{\beta}} n)$ then the total work performed by honest workers assigned to $v$ is $1+o(1)$ in expectation. Furthermore, the total work performed by all honest workers is $n(1+o(1))$ in expectation.
    \end{lemma}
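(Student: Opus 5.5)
Fix a task $v$ with assigned workers $(v,t_{\mathrm{min}}(v)),\dots,(v,t_{\mathrm{max}}(v))$. An honest worker $(v,t)$ executes $v$ only if none of the outputs from $(v,t')$, $t'\in[\max\{t-2\delta,t_{\mathrm{min}}(v)\},t-1]$, verifies. The plan is to charge each execution either to the first execution of $v$ or to a window of malicious workers, and to show the latter charge has expectation $o(1)$. The whole argument is deterministic given the assignment of honest and malicious workers to the nodes of $v$. Since workers are sampled independently, each node is malicious with probability at most $\beta$ regardless of what happens elsewhere.

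\textbf{Step 1 (a successful worker blocks executions).} Let $(v,s)$ be an honest worker that successfully determined $v$'s output, either by computing it or by verifying a forwarded copy. Any honest $(v,t)$ with $s<t\le s+2\delta$ receives that correct output and verifies it, so it executes nothing. Consequently, if an honest worker $(v,t)$ executes $v$ after some earlier successful worker of $v$, let $s$ be the last successful worker of $v$ assigned before $t$. Honest workers of $v$ between $s$ and $t$ that are not successful failed; by the algorithm they computed nothing and forwarded only error messages. We need $t-s>2\delta$, and every worker of $v$ at times $s+1,\dots,s+2\delta$ must be non-successful. Any honest worker among them would be within the window of $s$, hence successful, so they are all malicious.

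\textbf{Step 2 (counting).} Each execution after the first successful worker of $v$ is therefore preceded by a block of $2\delta$ consecutive malicious workers of $v$ starting right after a successful worker. Distinct such executions give distinct blocks, since each block ends at the next successful worker. Before the first successful worker, there are no executions except possibly that first worker's own computation, because a worker that executes becomes successful. The number of executions by honest workers of $v$ is thus at most
\[
1+\sum_{s}\mathbf{1}\bigl[(v,s+1),\dots,(v,s+2\delta)\text{ all malicious}\bigr],
\]
where the sum ranges over the at most $\gamma$ start positions. Taking expectations, this is at most $1+\gamma\beta^{2\delta}$.

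\textbf{Step 3 (parameters).} With $\delta\ge\frac{(c+2)\log_{1/\beta}\log n}{2}$ we get $\beta^{2\delta}\le(\log n)^{-(c+2)}$. With $\gamma=O(\log_{1/\beta}n)=O(\log n)$ for constant $\beta$, this gives $\gamma\beta^{2\delta}=O((\log n)^{-(c+1)})=o(1)$. Linearity of expectation over the $n$ tasks then yields $n(1+o(1))$ in total.

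\textbf{Main obstacle.} The delicate step is Step 1, showing rigorously that failed honest workers never compute. A worker that fails to obtain all predecessor outputs does not execute, and one that does execute becomes successful. This is what lets each extra execution be charged to a disjoint all-malicious window of length exactly $2\delta$. One also must make sure the adversary's adaptivity does not matter. This holds because the bound uses only the independent malicious/honest labels of the $2\delta$ nodes, not anything the adversary decides.
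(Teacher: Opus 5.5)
Your proof is correct and follows the same idea as the paper: after the first successful worker of $v$, any further execution requires a run of $2\delta$ consecutive malicious workers, and a union bound over at most $\gamma$ start positions together with $\beta^{2\delta}\le(\log n)^{-(c+2)}$ makes this contribution $o(1)$. Your injective charging of each extra execution to its own malicious window is slightly tighter than the paper's argument, which just bounds $X_v\le\gamma$ on the bad event and so gets $1+\gamma^2\beta^{2\delta}$ instead of your $1+\gamma\beta^{2\delta}$.
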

    \begin{proof}
    Consider any execution of the algorithm and any fixed task $v$. Let the random variable $X_v$ denote the number of times an honest worker assigned to $v$ executes task $v$. If all honest workers assigned to $v$ fail, $X_v=0$. Otherwise, let $w_0$ be the first successful honest worker assigned to $v$. If after $w_0$ there is no contiguous sequence of $2\delta$ adversarial workers then all honest workers chosen after $w_0$ can adopt the output of $w_0$, which implies that $X_v =1$. Otherwise, $X_v$ is upper bounded by $\gamma$. Since the probability that there is a contiguous sequence of $2\delta$ adversarial workers is at most $\gamma \cdot \beta^{2\delta} \le \gamma/(\log n)^{c+2}$, it follows that
    \[
    \E[X_v] \le 1 + \gamma \cdot \gamma/(\log n)^{c+2} = 1+o(1)
    \]
    Let the random variable $X$ be the total number of task executions of honest workers. Then
    \[
    \E[X] = \sum_v \E[X_v] = n(1+o(1))
    \]
    which completes the proof of the lemma.
    \end{proof}

    We conclude this section by proving our main theorem.
    
    \begin{proof}
        We start by fixing the parameter $\delta\geq\max\{2,\allowbreak\nicefrac{8e}{\alpha^2\ln^2\nicefrac{1}{\beta}},\allowbreak\nicefrac{(2\log_{\nicefrac{1}{\beta}}(2e\gamma d))}{\alpha}+1,\nicefrac{((c+1)\log_{\nicefrac{1}{\beta}}\log n)}{2}\}$ as well as the parameter $\gamma\geq\nicefrac{(c'+5)}{(1-\alpha)}\log_{\nicefrac{1}{\beta}} n$ for any constant parameter $0<\alpha<1$ (ensuring that $\delta=O(\log_{\nicefrac{1}{\beta}} (d) \cdot \log_{\nicefrac{1}{\beta}}\log n)$ and $\gamma=O(\log_{\nicefrac{1}{\beta}} n)$).
        The correctness follows from \Cref{lem:witness_correctness_dags}.
        The runtime follows from the fact that the supervisor assigns the final worker to any task of depth $D$ in round $(D-1)\cdot \delta+\gamma=O(D\log_{\nicefrac{1}{\beta}} (d) \cdot \log_{\nicefrac{1}{\beta}}\log n+\log_{\nicefrac{1}{\beta}} n)$.
        By construction, each honest worker performs at most one computation for each time it was assigned.
        The verification and communication bounds follow from the fact that each worker interacts with $O(d\delta)=O(d\log_{\nicefrac{1}{\beta}} (d) \cdot \log_{\nicefrac{1}{\beta}}\log n)$ other workers.
        The bound for task executions and total work was proven in \Cref{lem:manyworkers_dag_work}.
        For each initial task, the source needs to send the input to the $\gamma$ workers assigned to it and for each final task, the target needs to receive the output from each of the $\gamma$ workers assigned to it and verify each of them at most once.
        Finally, the supervisor assigns $\gamma=O(\log n)$ workers to each of the $n$ tasks, introduces each worker up to $d\delta=O(d\log_{\nicefrac{1}{\beta}} (d) \cdot \log_{\nicefrac{1}{\beta}}\log n)$ times, and introduces the source $O(\log n)$ times for each initial task. 
        This yields $O(n\log n)$ assignments and $O(nd\log n \cdot \log_{\nicefrac{1}{\beta}} (d) \cdot \log_{\nicefrac{1}{\beta}}\log n)$ introductions.
    \end{proof}


\section{Conclusion}
In this paper, we demonstrate for the first time how to handle a setting with a majority of malicious workers.
Instead of performing a majority vote, we assign logarithmically many workers to each task one by one.
By having the workers provide their output to their successors in the same task and in the next tasks, we reduce the work overhead for the honest workers -- down to an almost optimal overhead of $1+o(1)$ for each worker. 
The workers employ a lightweight verification approach to validate the output of their predecessors. 
We pipeline the assignment of workers to tasks of different levels to ensure that the computation terminates quickly.

Our work raises several important questions that warrant further study. 
Our focus is on problems where the claimed output for an individual task can be verified easily.
Sorting and matrix multiplication can be decomposed into such verifiable tasks~\cite{augustine2025supervised}.
However, we believe that many more problems could potentially be decomposed in a similar fashion and benefit from the supervised distributed computing framework.
Additionally, it would be interesting to explore whether cryptographic verification mechanisms, such as SNARKs and SNARGs, or hardware based verification mechanisms~\cite{TEE} can be leveraged within our framework. Since our supervisor is very lightweight, we believe that an automated but reliable process (such as a smart contract) could emulate its role, though this requires careful investigation.
Finally, studying heterogeneous DAGs, where some tasks are easily verifiable while others require more care, presents a promising direction.

\renewcommand{\bibname}{References} 
\bibliographystyle{plain}

\newpage
\bibliography{references}

\appendix

\newpage

\section{Pseudocode}\label{sec:pseudocode}

\begin{algorithm}[H]
    \DontPrintSemicolon
    \caption{Assigning many workers (Supervisor)}
    \label{alg:manyworkers_supervisor}
    
    \ForEach{$v\in V$}{
        $t_\mathrm{min}(v) \gets (D(v)-1) \cdot \delta + 1$ \tcp*{first round of assignments to $v$}
        
        $t_\mathrm{max}(v) \gets (D(v)-1) \cdot \delta + \gamma$ \tcp*{last round of assignments to $v$}
    }
    
    \For{round $t\gets1$ \KwTo $\gamma+ (D - 1) \cdot \delta$}{
        $\psi(t) \coloneqq \{v\in V \mid t \in [t_\mathrm{min}(v), t_\mathrm{max}(v)]\}$
        
        \ForEach{$v \in \psi(t)$}{
            Assign worker $w$ to $(v,t)$ independently, with probability $\beta$ to be malicious and probability $1-\beta$ to be honest
                
            Instruct workers assigned to $(v,t-2\delta),\dots,(v,t-1)$ to send output of $v$ to $w$
                
            \If{$w$ fails to verify all these outputs}{
                \If{$v$ is an initial task}{
                    Instruct the source to send the input of $v$ to $w$
                }
                \Else{
                    Instruct workers assigned to $(v',t-2\delta),\dots,(v',t-1)$ with $v' \in \pre_G(v)$ to send input of $v$ to $w$
                }
            }
            
        }
    }
\end{algorithm}

\begin{algorithm}[H]
    \DontPrintSemicolon
    \caption{Assigning many workers (Worker assigned to task $v$)}
    \label{alg:manyworkers_worker}
    
    \upon{receiving output candidates of $v$}{
        \ForEach{output candidate $o$}{
            \If{Verification of $o$ succeeds}{
                success $\gets$ true
                
                break
            }
        }
        \If{success = true}{
            Remain in system for $2\delta$ rounds to provide output to later workers
        }
        \Else{
            Notify supervisor of failure to verify outputs
        }
    }
    \upon{receiving input candidates of $v$ from workers assigned to $v'\in\pre_G(v)$}{
        \ForEach{input candidate $i$}{
            \If{Verification of $i$ succeeds}{
                success[$v'$] $\gets$ true
                
                break
            }
        }
        \If{success[$v'$] = true for each $v'\in\pre_G(v)$}{
            Compute output

            Remain in system for $2\delta$ rounds to provide output to later workers
        }
        \Else{
            Remain in system for $2\delta$ rounds to provide error to later workers
        }
    }
\end{algorithm}

\begin{algorithm}[H]
\DontPrintSemicolon
\caption{Supervised Path Computation (Supervisor's Logic)}
\label{alg:supervised_path}

\KwIn{Number of tasks $n$, \textit{Source}, \textit{Target} and a black-box to assign the workers}
\BlankLine
\tcc{Supervisor State Initialization}
\BlankLine
$P \gets \text{array of size } n \text{, initialized to NULL}$ \tcp*{Stores assigned workers}
$i \gets 1$ \tcp*{Current task index}

\BlankLine
\Proc{SupervisedPathComputation()}{
    \texttt{AssignNewWorker(1)} \tcp*{Start with the first task}

    \While{$i \le n$}{
        $worker\_id \gets P[i]$\;
        $message \gets \text{WaitForMessage}(worker\_id)$\;

        \uIf{$message$ is TIMEOUT}{
            \texttt{AssignNewWorker(i)} \tcp*{Worker failed, re-assign same task}
        }
        \uElseIf{$message$ is DONE}{
            \uIf{$i < n$}{
                $i \gets i + 1$ \tcp*{Proceed to next task}
                \texttt{AssignNewWorker(i)}\;
            }
            \Else{
                $\text{IntroduceWorkerToTarget}(P[n], \text{Target})$\;
                $target\_message \gets \text{WaitForMessage}(\text{Target})$\;
                \uIf{$target\_message$ is DONE}{
                    $i \gets i + 1$ \tcp*{Success! Terminate loop}
                }
                \Else{
                    \texttt{AssignNewWorker(n)} \tcp*{Target rejected $n$-th worker}
                }
            }
        }
        \ElseIf{$message$ is REJECT}{
            \If{$i > 1$}{
                $i \gets i - 1$ \tcp*{Rollback to previous task}
            }
            \texttt{AssignNewWorker(i)} \tcp*{Re-assign worker for rolled-back task}
        }
    }
   
}
 \BlankLine
\SetKwFunction{AssignNewWorker}{AssignNewWorker}

\SetKwProg{Fn}{Function}{:}{}
\Fn{\AssignNewWorker{$i$}}{
    $new\_worker \gets \text{SampleWorker()}$\;
    $P[i] \gets new\_worker$\;
    \uIf{$i = 1$}{
        $\text{IntroduceWorkerToSource}(new\_worker, \text{Source})$\;
    }
    \Else{
        $previous\_worker \gets P[i-1]$\;
        $\text{IntroduceWorkers}(previous\_worker, new\_worker)$\;
    }
}
\BlankLine
\end{algorithm}

\newpage

\section{Improved Bound and Infeasibility Results}\label{sec:bound_and_infeasibility}

In this section, we provide some additional analysis for the algorithms presented in \cite{augustine2025supervised}.
In \Cref{ssec:improved_path_bound}, we improve the allowed fraction of malicious workers for the runtime proof of the path algorithm from $\beta\leq \nicefrac{1}{12}$ to $\beta<\nicefrac{1}{2}$.
In \Cref{ssec:infeasibility_path}, we prove that this new bound for $\beta$ is tight.
More specifically, for any $\beta>\nicefrac{1}{2}$, the path algorithm requires exponential time to terminate, w.h.p.
Finally, we prove a similar result for the DAG algorithm in \Cref{ssec:infeasibility_dag}.
The authors of \cite{augustine2025supervised} provide a bound $\beta\leq(\nicefrac{1}{(2(2d+1))})^{2+\varepsilon}$ for the DAG algorithm to terminate within $O(\nicefrac{(D+\log n)}{\varepsilon})$, w.h.p.
We almost match this bound by proving that the DAG algorithm requires exponential time to terminate for $\beta>\nicefrac{\ln c}{c^2}$ for $c=\nicefrac{(d+1)}{2}$.

\subsection{Improved Bound for Path Graphs}
\label{ssec:improved_path_bound}
We start by presenting a new analysis of the path algorithm from \cite{augustine2025supervised}.
Specifically, we improve the bound on the number of malicious workers under which the algorithm still terminates correctly in linear time from $\beta \leq \nicefrac{1}{12}$ to $\beta < \nicefrac{1}{2}$.

In the path algorithm, the supervisor assigns workers to tasks sequentially and processes their responses.
A worker assigned to $v_i$ must reply with either \texttt{DONE} (execution completed) or \texttt{REJECT} (input invalid or missing).
If a worker replies \texttt{DONE}, the supervisor proceeds to $v_{i+1}$ (or to the target if $i=n$).
If the reply is \texttt{REJECT}, the supervisor removes the workers assigned to $v_{i-1}$ and $v_i$ from the system and reassigns a new worker to $v_{i-1}$.
In case the supervisor does not receive \texttt{DONE}/\texttt{REJECT} within some specified time, it reassigns a new worker to $v_i$.
The computation terminates once the target replies with \texttt{DONE}.
The pseudocode of the path algorithm is provided as \Cref{alg:supervised_path}.

\begin{theorem}\label{thm:path_case_majority_honest}
    For a path graph of size $n$ it holds that: If $\beta < \nicefrac{1}{2}$, the supervised computation  terminates correctly under any adversarial strategy in $\nicefrac{n}{(1-2\beta)}+\Theta\left(\nicefrac{\sqrt{n \log n}}{(1-2\beta)}\right)$ rounds, w.h.p. Furthermore, the source just needs to send the input $\nicefrac{1}{(1-2\beta)}$ times, and the target just needs to receive the solution $\nicefrac{1}{(1-2\beta)}$ times, in expectation.
\end{theorem}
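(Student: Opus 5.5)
The path algorithm will be modelled as a one-dimensional process on the index $i\in\{1,\dots,n+1\}$ of the task currently being worked on, where $i=n+1$ means termination. Each round the supervisor samples one fresh worker, and the key quantity is the drift of $i$.

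\textbf{Honest step.} An honest worker at task $v_i$ whose predecessor output is correct (or who reads from the source when $i=1$) replies \texttt{DONE}, so $i$ increases by one. An honest worker receiving an incorrect input replies \texttt{REJECT}, so $i$ decreases by one and the bad predecessor is discarded.

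\textbf{Adversarial step.} An adversarial worker can at worst either stall (timeout, so $i$ is unchanged and the worker is resampled) or cause a step back. It can reply \texttt{REJECT} falsely, which discards the good predecessor and decreases $i$. Alternatively it can reply \texttt{DONE} with a wrong output; the next honest worker then rejects, which undoes that progress.

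\textbf{Potential and drift.} I want to charge every adversarial worker at most one unit of backward movement and every honest worker at least one unit of forward movement. Define the potential $\Phi$ as the length of the longest prefix of tasks whose currently stored outputs are correct, plus a correction term for the current pointer. The goal is to show that an honest sample increases $\Phi$ by $1$ and an adversarial sample decreases it by at most $1$. A wrong \texttt{DONE} by an adversary followed by an honest \texttt{REJECT} nets $0$ over two samples, i.e.\ $-1$ for the adversary and $+1$ for the honest worker. A false \texttt{REJECT} removes one correct output, costing $-1$. Establishing this accounting is the main obstacle: the pointer $i$ and the verified prefix can diverge, and I must check that a single adversarial action never destroys more than one unit. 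I would handle this with a case table over the pair (pointer at a correct or incorrect predecessor, honest or adversarial worker).

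\textbf{Concentration.} Given the accounting, $\Phi$ dominates a biased random walk with steps $+1$ with probability $1-\beta$ and $-1$ with probability $\beta$, i.e.\ drift $1-2\beta>0$, since the adversary cannot predict future samples. Termination happens when $\Phi$ reaches $n$, together with a final \texttt{DONE} accepted by the target.

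For the runtime, apply a Chernoff/Hoeffding bound to $T$ steps: the walk's position is at least $(1-2\beta)T-O(\sqrt{T\log n})$ w.h.p. Setting this equal to $n$ gives $T=n/(1-2\beta)+\Theta(\sqrt{n\log n}/(1-2\beta))$.

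For the source and target costs, use the gambler's-ruin and Wald identity. The number of visits of a walk with positive drift $1-2\beta$ to its starting level $0$ (source contacts) is $1/(1-2\beta)$ in expectation. The same holds for visits to level $n$, which correspond to target deliveries; there the reflection at the target plays the same role as a worker rejection.
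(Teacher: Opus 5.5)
Your route differs from the paper's. The paper never analyzes arbitrary adversarial behaviour. It invokes the result of \cite{augustine2025supervised} that always answering \texttt{REJECT} is a dominant strategy, so every adversarial worker is rolled back at once, each round is exactly a $\pm1$ step, and Hoeffding finishes the proof. Proving the domination yourself would make the argument self-contained. However, that step is the entire difficulty, and as you set it up it fails: no potential that depends only on the pointer $i$ and the length $k$ of the prefix with correct stored outputs can make every honest sample worth $+1$. Here is why:
\begin{enumerate}
\item Take the pointer at $v_j$ with an honest prefix, and let an adversary at $v_j$ answer \texttt{DONE} and relay a \emph{correct} output. The state moves $(k,i)=(j-1,j)\to(j,j+1)$, exactly as for an honest \texttt{DONE}, so this transition must be worth at least $+1$.
\item Let an honest worker at $v_{j+1}$ accept that output, and let an adversarial \texttt{REJECT} at $v_{j+2}$ roll the pointer back to $v_{j+1}$. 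This restores $(k,i)=(j,j+1)$.
\item The fresh honest worker at $v_{j+1}$ is now handed a \emph{wrong} output by the same adversary and rejects, moving the state back to $(j-1,j)$. This honest sample is therefore worth at most $-1$.
\end{enumerate}
A case table keyed on whether the predecessor is ``correct'' will not detect this, since for an adversarial worker that is not a fixed property.

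The fix is to charge adversaries by identity rather than by the correctness of what they currently hold, or simply to cite the dominance result as the paper does. Let $a$ be the number of adversarial workers among $P[1],\dots,P[i-1]$ and put $\Phi=i-2a$. Then:
\begin{itemize}
\item an honest \texttt{DONE} gives $+1$;
\item an honest \texttt{REJECT} is only possible when $P[i-1]$ is adversarial, and it removes that worker, so it also gives $+1$;
\item an adversarial \texttt{DONE}, with correct output or not, gives $-1$;
\item an adversarial \texttt{REJECT} at $i\ge 2$ gives $\pm1$;
\item a timeout, a \texttt{REJECT} at $v_1$, or a wrong \texttt{DONE} at $v_n$ rejected by the target gives $0$.
\end{itemize}
Since $\Phi_0=1$ and $\Phi\le i\le n$ before termination, $\Phi_t\ge 1+S_t$ pathwise for the $\pm1$ walk $S_t$ driven by the i.i.d.\ worker types. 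Your Hoeffding step then applies verbatim. Equivalently, each honest or target rejection is charged injectively to the earlier \texttt{DONE} of the adversary it removes.

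For the source count, domination of positions does not by itself control visit numbers, so use a renewal argument instead. After an honest sample at $v_1$ you have $\Phi=2$, and $i$ can only return to $1$ if $\Phi$ returns to $1$, which by domination happens with probability at most $\beta/(1-\beta)$. Hence each source contact is followed by another with probability at most $\beta+(1-\beta)\cdot\beta/(1-\beta)=2\beta$. This gives at most $1/(1-2\beta)$ contacts in expectation, a sharper justification than the paper's one-line argument.
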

\begin{proof}
According to \cite{augustine2025supervised}, always sending \texttt{REJECT} is a dominant adversarial strategy.
Thus, we prove the statement for that strategy and immediately obtain the general result.
When the always-\texttt{REJECT} strategy is used, each adversary is immediately rolled back after assignment.
Thus, whenever the supervisor makes an assignment to a task $v_i$ for $1<i\leq n$, the worker assigned to $v_{i-1}$ is guaranteed to be honest.

Let $X_t$ be a binary random variable representing the progress we make in round $t$.
If the supervisor assigns an honest worker, the computation proceeds to the next task and we have $X_t=+1$.
The supervisor assigning a malicious worker results in a rollback and we have $X_t=-1$.
Further, let $S_T=\sum_{t=1}^T X_t$ be the net progress we make in $T$ rounds.
The supervised computation terminates if it reaches the target, i.e., if the net progress $X_T$ reaches $n+1$.
Our goal is to prove that the computation terminates after $T=\nicefrac{n}{1-2\beta}+\Delta$ rounds, w.h.p., where we will choose $\Delta>0$ later to control the failure probability.
We start by computing the expected progress we make in each round:
\[
    \E[X_t] = (+1)\cdot \Pr[X_t=+1] + (-1) \cdot \Pr[X_t=-1] = (1-\beta)-\beta = 1-2\beta
\]
By linearity of expectation, we have $\E[S_T]=\sum_{t=1}^T \E[X_t] = T(1-2\beta)$.

We will employ a variant on Hoeffding's inequality~\cite{hoeffding1963probability} that states for the sum $S_T$ of independent random variables $X_1,\dots,X_T$ with $a_i\leq X_i\leq b_i$ for $1\leq i\leq T$:
\[
    \Pr[S_T-\E[S_T]\leq x]\leq \exp\left(-\frac{2x^2}{\sum_{i=1}^T(b_i-a_i)^2}\right)
\]
In our case, we have $a_i=-1$ and $b_i=+1$ for $1\leq i\leq T$, which allows us to simplify the bound:
\[
    \Pr[S_T-\E[S_T]\leq x]\leq \exp\left(-\frac{2x^2}{4T}\right) = \exp\left(-\frac{x^2}{2T}\right)
\]
Our goal is to bound the failure probability. We transform the failure event to match the bound:
\begin{align*}    
    S_T< n+1 &\Longleftrightarrow S_T\leq n\\
    &\Longleftrightarrow S_T-\E[S_T]\leq n-\E[S_T]\\
    &\Longleftrightarrow S_T-\E[S_T]\leq n-\left(\frac{n}{1-2\beta}+\Delta\right)(1-2\beta)\\
    &\Longleftrightarrow S_T-\E[S_T]\leq -\Delta(1-2\beta)
\end{align*}
In preparation of our application of Hoeffding's Inequality, we solve the following inequality for $\Delta$ using the quadratic formula.
\begin{align*}
    &\;\; \frac{(\Delta(1-2\beta))^2}{2(\frac{n}{1-2\beta}+\Delta)} \geq c\ln n\\
    \Longleftrightarrow&\;\; (\Delta(1-2\beta))^2 \geq 2c\ln n \cdot \left(\frac{n}{1-2\beta}+\Delta\right)\\
    \Longleftrightarrow&\;\; (1-2\beta)^2\cdot\Delta^2 -2c\ln n \cdot \Delta -\frac{2cn\ln n}{1-2\beta}\geq 0\\
    \Longleftrightarrow&\;\; \Delta\geq \frac{2c\ln n\pm\sqrt{(2c\ln n)^2+4(1-2\beta)^2\frac{2cn\ln n}{1-2\beta}}}{2(1-2\beta)^2}\\
    \Longleftrightarrow&\;\; \Delta\geq \frac{c\ln n\pm\sqrt{(c\ln n)^2+(1-2\beta)^2\frac{2cn\ln n}{1-2\beta}}}{(1-2\beta)^2}
\end{align*}
Note that we are only interested in the positive root.
As we want to pick $\Delta$ as small as possible, we have $\Delta =\Theta\left(\nicefrac{\sqrt{n\log n}}{(1-2\beta)}\right)$.
We are now ready to apply Hoeffdings inequality choosing $\Delta$ as above:
\[
    \Pr[S_T< n+1] \leq \exp\left(-\frac{(\Delta(1-2\beta))^2}{2(\frac{n}{1-2\beta}+\Delta)}\right) \leq \exp(-c\ln n) \leq n^{-c}
\]

It remains to prove the expected number of send and receives for the source and the target.
Recall that $E[X_t] = 1-2\beta$ is the expected progress a worker make in round $t$ for a task.
Therefore, it requires $\nicefrac{1}{(1-2\beta)}$ rounds to progress a task in expectation. We conclude that the source needs to send the input $\nicefrac{1}{(1-2\beta)}$ times and the target needs to receive the output $\nicefrac{1}{(1-2\beta)}$ times, in expectation.
\end{proof}

\subsection{Infeasibility Result for Path Graphs}\label{ssec:infeasibility_path}
In this section, we prove that the bound from \Cref{thm:path_case_majority_honest} is tight.
More specifically, we prove that for any $\beta>\nicefrac{1}{2}$, the path algorithm requires exponential time to terminate, w.h.p.
To achieve this, we fix an adversarial strategy and describe the resulting computation with a Markov chain that has two absorbing states representing the source and the target.
Using the well-established gambler's ruin problem (see, e.g. \cite{ruined_gambler}), we prove that once the computation reaches the first task, it is exponentially more likely to return to the source than to reach the target. 
Finally, we conclude that an exponential number of such trials is required until the target is reached for the first time, w.h.p.

\begin{theorem}\label{thm:lower_bound_linear_time}
    For a path graph of size $n$ it holds that: If $\beta > \nicefrac{1}{2}$, there is an adversarial strategy s.t. the supervised computation requires exponential time to terminate correctly, w.h.p.
\end{theorem}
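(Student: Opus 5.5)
We plan to fix the always-\texttt{REJECT} adversarial strategy (the dominant strategy from \cite{augustine2025supervised}), under which every malicious worker is rolled back immediately. Then the position of the computation, i.e. the index $i$ of the task currently being assigned, evolves as a biased random walk on $\{0,1,\dots,n+1\}$. Here state $0$ stands for a rollback past the first task, which means the computation is back at the source and must restart from task $v_1$. State $n+1$ stands for reaching the target. Each round the walk moves $+1$ with probability $1-\beta$ (honest worker, \texttt{DONE}) and $-1$ with probability $\beta$ (malicious worker, \texttt{REJECT}). For $\beta>1/2$ the drift is negative, which is the regime where gambler's ruin gives exponentially small success probability.

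The first step is to model this as a Markov chain with absorbing states $0$ and $n+1$, starting from state $1$, and set $r:=\beta/(1-\beta)>1$. The classical gambler's ruin formula gives that the probability of hitting $n+1$ before $0$ is $(r-1)/(r^{n+1}-1)\le r^{-n}(r-1)$. We need to check that the supervisor's rule when $i=1$ fits this model. In \Cref{alg:supervised_path}, a \texttt{REJECT} at $i=1$ re-assigns task $1$, so it acts as a return to the source followed by a fresh trial. Likewise, a \texttt{REJECT} by the target at $i=n$ only re-assigns task $n$ and never advances past it. Each time the chain is absorbed at $0$, the process restarts with an independent fresh trial from state $1$.

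The second step turns this into a time bound. Each trial takes at least one round, so within $T$ rounds there are at most $T$ independent trials. By a union bound, the probability that any of them reaches the target is at most $T\,(r-1)r^{-n}$. Choosing $T=r^{n/2}$, which is exponential in $n$ because $r>1$ is a constant, makes this probability $r^{-n/2}(r-1)$. That is smaller than $n^{-c}$ for large $n$. Hence termination requires $\Omega(r^{n/2})=2^{\Omega(n)}$ rounds w.h.p.

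The main obstacle will be matching the walk exactly to the supervisor's rollback semantics. A rollback removes the workers of both $v_{i-1}$ and $v_i$ and re-assigns $v_{i-1}$, so we must confirm that the net index change is exactly $-1$. We also need to handle the boundary at the first task, where there is no $v_0$ to roll back to, and the target's acceptance step. If the boundary at $i=1$ behaves as a reflecting rather than absorbing state, we would instead bound the hitting time of $n+1$ for a reflected walk with negative drift. In that case we would use the same gambler's-ruin estimate applied to excursions from state $1$, which yields the same exponential bound.
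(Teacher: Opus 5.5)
Your proposal is correct and follows essentially the same route as the paper: you fix the always-\texttt{REJECT} strategy, model the index as a gambler's-ruin chain on $\{0,\dots,n+1\}$ with absorbing ends and per-trial success probability $(r-1)/(r^{n+1}-1)$, and union-bound over exponentially many independent trials. The paper takes $n^{-c}r^n$ trials, while you take $r^{n/2}$ rounds and use that each trial costs at least one round. One small slip: $(r-1)/(r^{n+1}-1)\le (r-1)r^{-n}$ requires $r^n(r-1)\ge 1$, so it holds only for large $n$, whereas the paper's bound $f_{1,n+1}<r^{-n}$ holds for every $n$; this does not affect your asymptotic conclusion.
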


\begin{proof}
    We consider the adversarial strategy of always sending a \texttt{REJECT}.
    We recall from the proof of \Cref{thm:path_case_majority_honest} that we can model the progress we make in each round $t$ under this adversarial strategy with a random variable $X_t$ s.t. $X_t=+1$, if the supervisor assigns an honest worker in round $t$, and $X_t=-1$, if the supervisor assigns a malicious worker in round $t$.
    Notably, these variables are independent and identically distributed.
    
    We describe the computation with a Markov chain with states $0,\dots,n+1$ that describe the net progress.
    State $0$ represents the computation being located at the source, state $i$ represents the computation being located at task $i$ for $1\leq i\leq n$, and state $n+1$ represents the computation having reached the target.
    As discussed above, the chain has transition probabilities $p\coloneqq\Pr[i\to i+1]=1-\beta$ and $q\coloneqq\Pr[i\to i-1]=\beta$ for $1\leq i\leq n$.
    Our first goal is to bound the probability $f_{1,n+1}$ that the chain reaches state $n+1$ before reaching state $0$ after reaching $1$ for the first time.
    Thus we choose the initial state to be $1$ (which is reasonable, as the supervisor will eventually assign an honest worker to the first task, w.h.p.), and we define the states $0$ and $n+1$ to be absorbing, i.e., $\Pr[0\to0]=1$ and $\Pr[1\to1]=1$.
    
    This is an instance of the classic gambler's ruin problem for which there are numerous sources, e.g., \cite{ruined_gambler}.
    We obtain the following identity for $f_{1,n+1}$ from the literature:
    \[
        f_{1,n+1} = \frac{1-\nicefrac{q}{p}}{1-\left(\nicefrac{q}{p}\right)^{n+1}}
    \]
    We have $\nicefrac{q}{p}=\nicefrac{\beta}{(1-\beta)}$, which increases monotonously on [0,1).
    $\beta>\nicefrac{1}{2}$ yields $\nicefrac{q}{p}>1$.
    For such fractions, we have for any $n\geq 1$:    \begin{align*}
        &\;\; (\nicefrac{q}{p})^n>1\\
        \Longleftrightarrow&\;\; -(\nicefrac{q}{p})^n<-1\\
        \Longleftrightarrow&\;\; (\nicefrac{q}{p})^{n+1}-(\nicefrac{q}{p})^n<(\nicefrac{q}{p})^{n+1}-1\\
        \Longleftrightarrow&\;\; (\nicefrac{q}{p})^n(\nicefrac{q}{p}-1)<\left(\nicefrac{q}{p}\right)^{n+1}-1\\
        \Longleftrightarrow&\;\; \frac{\nicefrac{q}{p}-1}{\left(\nicefrac{q}{p}\right)^{n+1}-1}<\left(\nicefrac{q}{p}\right)^{-n}
    \end{align*}
    We bound $f_{1,n+1}$ as follows:
    \[
        f_{1,n+1} = \frac{1-\nicefrac{q}{p}}{1-\left(\nicefrac{q}{p}\right)^{n+1}} =  \frac{\nicefrac{q}{p}-1}{\left(\nicefrac{q}{p}\right)^{n+1}-1}< \left(\nicefrac{q}{p}\right)^{-n}
    \]
    We consider an application of the gambler's ruin problem to be a single trial to reach the target before reaching the source again.
    It remains to prove that an exponential number of trials is required to reach the target, w.h.p.
    Let $X$ be a random variable representing the number of independent trials until state $n+1$ is reached for the first time.
    Choosing the exponential threshold $T=n^{-c}\cdot(\nicefrac{q}{p})^n$, we obtain:
    \begin{align*}
        \Pr[X\leq T]&=1-(1-f_{1,n+1})^T\\
        &\leq 1-(1-(\nicefrac{q}{p})^{-n})^T\\
        &\leq 1-(1-T\cdot(\nicefrac{q}{p})^{-n}) \tag{$(1-x)^T\geq 1-Tx$ for $x\in[0,1]$}\\
        &=T\cdot(\nicefrac{q}{p})^{-n}\\
        &=n^{-c}
    \end{align*}
    
\end{proof}

\subsection{Infeasibility Result for DAGs}\label{ssec:infeasibility_dag}
Finally, we prove that the bound of $\beta\leq(\nicefrac{1}{(2(2d+1))})^{2+\varepsilon}$ for the DAG algorithm from \cite{augustine2025supervised} is almost tight, too. 
More specifically, we prove that the DAG algorithm requires exponential time to terminate, w.h.p., if $\beta>\nicefrac{\ln c}{c^2}$, where $c=\nicefrac{(d+1)}{2}$ and $c\geq5$.
In the DAG algorithm, the supervisor keeps track of a set $F\subseteq V$ of tasks that it considers to be finished. 
In each round, it assigns workers to all tasks that are not in $F$, but where all predecessors are in $F$.
A worker assigned to a task $v$ must reply either with \texttt{DONE} (execution completed) or \texttt{REJECT(}$R$\texttt{)}, where $R\subseteq V$ is a subset of predecessors of $v$ from which the worker received an invalid or no input.
If a worker replies \texttt{DONE}, the supervisor adds its task to $F$.
If a worker replies \texttt{REJECT(}$R$\texttt{)}, the supervisor removes its task, every task in $R$ and every task reachable from any task in $R$ from $F$. 
Additionally, it removes all workers assigned to any of these tasks from the system.
If the supervisor receives neither \texttt{DONE} nor \texttt{REJECT}, it removes the worker that failed to reply according to the protocol from the system.
If the predecessors of the corresponding task are still in $F$ by the next round, the supervisor will assign a new worker to it.
The computation terminates once the target replies with \texttt{DONE}.

For our proof, we describe a family of worst-case DAGs, where the DAG algorithm struggles to make progress.
Employing several worst-case assumptions, we set up an instance of the gambler's ruin problem again, and are thus able to argue similarly to \Cref{thm:lower_bound_linear_time}.

\begin{theorem}
    If $\beta>\nicefrac{\ln c}{c^2}$ where $c\geq5$ and $c=\nicefrac{(d+1)}{2}$, then there is a DAG $G$ with $\indeg(G)\leq d$ and $\outdeg(G)\leq d$ where the protocol requires exponential time to terminate for some adversarial strategy, w.h.p.  
\end{theorem}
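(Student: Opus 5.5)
We construct a worst-case DAG, fix the adversarial strategy ``always reply \texttt{REJECT}'' naming a (single) predecessor, and reduce the progress of the DAG algorithm to a biased random walk, then reuse the gambler's ruin argument of \Cref{thm:lower_bound_linear_time}.

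\textbf{The DAG.} Take a leveled DAG of depth $L=\Theta(n/c)$ in which every level consists of $k$ tasks, with $k\le d$ and $k\approx c=\nicefrac{(d+1)}{2}$. Consecutive levels are joined by a complete bipartite graph, so every task has in- and outdegree at most $d$. A malicious worker at level $i+1$ replies \texttt{REJECT(}$R$\texttt{)} with $R=\{u\}$ for one predecessor $u$. Since $u$ has edges to all tasks at level $i+1$, the supervisor then removes from $F$ the task $u$ and all of its descendants: the whole of levels $\ge i+1$ and $u$ itself. Level $i$ is therefore no longer complete.

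\textbf{Reduction to a walk.} Let the state be the largest index $i$ such that levels $1,\dots,i$ are entirely in $F$. Consider a level with $k$ tasks whose workers are assigned simultaneously. The level becomes complete in that round only if all $k$ workers are honest, which happens with probability $(1-\beta)^k$.

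A failure at level $i+1$ removes one task of level $i$. To refill it, the supervisor assigns a worker to that task of level $i$, and it must then try level $i+1$ again. A malicious worker at level $i$ itself pushes the walk further down. Under worst-case bookkeeping this gives a walk with:
\begin{itemize}
\item forward probability $p\le(1-\beta)^k$,
\item backward probability $q\ge 1-(1-\beta)^k$, up to a constant factor correction for rounds in which only partial repairs occur.
\end{itemize}
I would make this monotone by letting the walk count only ``full-level'' events and treating partial repairs as wasted rounds. Wasted rounds only slow the computation down, so this coarsening is conservative.

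\textbf{Parameter condition.} We need $q/p>1+\Omega(1)$ under $\beta>\ln c/c^2$. With $k=\Theta(c)$, the bound $(1-\beta)^k\le e^{-\beta k}$ alone does not immediately give $p<\tfrac12$. The $c^2$ in the threshold suggests a two-level amplification instead: a rejection must occur at one of the $\approx c$ tasks of level $i+1$, and it destroys progress at two levels. I would set this up so that one attempt to advance by one level needs about $c^2$ honest samples in a row, or more precisely about $c\cdot c$ worker outcomes. Then
\[
p\le (1-\beta)^{c^2}\le e^{-\beta c^2}< e^{-\ln c}=1/c,
\]
so $q/p\ge (c-1)>1$ for $c\ge 5$.

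\textbf{Conclusion.} With $q/p>1$, the gambler's ruin bound $f_{1,L+1}<(q/p)^{-L}$ and the same trial-counting computation as in \Cref{thm:lower_bound_linear_time} give $\Pr[X\le T]\le n^{-c'}$ for $T=n^{-c'}(q/p)^{L}$. This threshold is exponential in $n/c$.

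\textbf{Main obstacle.} The hardest step is the coupling in the reduction: the supervisor's state is a set $F$, not a single index, and assignments within a level happen in parallel. I must show that the potential ``number of complete levels'' stochastically dominates a walk with the stated $p,q$. I expect this to require choosing the DAG so that each advance genuinely needs $\Theta(c^2)$ honest outcomes, for instance levels of width $\approx c$ alternating with bottleneck structure.
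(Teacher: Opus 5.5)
Your proposal has a genuine gap exactly where you place the main obstacle. You never exhibit a DAG with in- and outdegree at most $d=2c-1$ in which advancing one level requires $c^2$ honest workers, and the DAG you do write down cannot reach the claimed threshold. With levels of width $k\le d$ joined by complete bipartite graphs, a level is completed as soon as its $k$ workers are honest, so $p\ge(1-\beta)^{k}\ge(1-\beta)^{2c-1}$. For $\beta$ just above $\ln c/c^2$ this is roughly $e^{-2\ln c/c}$: about $0.55$ already for $c=5$, and tending to $1$ as $c$ grows. So there is no drift toward the source. This family yields such a drift only once $(1-\beta)^k<1/2$, i.e.\ for $\beta=\Omega(1/d)$ rather than $\Theta(\log d/d^2)$.

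The missing idea is the paper's block construction. Each level has $c^2$ tasks split into $c$ blocks of $c$ tasks. Block $j$ of level $i$ is joined to block $j$ of level $i+1$ by a complete bipartite graph. In addition, the $k$-th task of block $j$ is joined to the $k$-th task of every other block of level $i+1$. Degrees are $c+(c-1)=d$. Now suppose a single malicious worker anywhere in level $i+1$ rejects \emph{all} of its inputs. This removes all of block $j$ of level $i$ plus one task of every other block, and the descendants of these cover all of level $i+1$. Hence $p=(1-\beta)^{c^2}\le e^{-\beta c^2}<1/c$. Your strategy of naming a single predecessor would not suffice in this DAG, since one task of level $i$ reaches only $2c-1$ tasks of level $i+1$.

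Your backward probability is also not justified. You bound $q\ge 1-(1-\beta)^k$ and conclude $q/p\ge c-1$, which treats every failed attempt at level $i+1$ as a full step back. But such a failure removes only part of level $i$, and that part is restored as soon as the re-assigned workers are honest. In your DAG this is a single re-assignment that succeeds with probability $1-\beta$. In the paper, the step $i\to i-1$ needs a second malicious hit among the $c$ re-assigned tasks of the rolled-back block. This gives $q\ge(1-(1-\beta)^{c^2})(1-(1-\beta)^{c})>(1-\nicefrac{1}{c})(\nicefrac{\ln c}{c}-\nicefrac{\ln^2 c}{2c^2})$, and hence $q/p>(c-1)(\ln c-\nicefrac{\ln^2 c}{2c})/c$. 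That ratio is of order $\ln c$ and exceeds $1$ only because $c\ge 5$, which is where the hypothesis is used. Your bound $c-1$ would make that hypothesis essentially redundant, a sign that it is too optimistic. Once $p$ and $q$ are set up this way, the gambler's-ruin and trial-counting step you describe goes through as in \Cref{thm:lower_bound_linear_time}.
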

\begin{proof}
    We consider the DAG $G=(V,E)$ with $n$ levels of $c^2$ tasks, i.e., $V= \{0,\dots,n-1\}\times\{0,\dots,c^2-1\}$.
    We divide each level into $c$ blocks of $c$ nodes, i.e., block $j$ of level $i$ contains nodes $(i,c\cdot j),\dots,(i,c\cdot(j+1)-1)$ for $0\leq j\leq c-1$.
    For every level $i$ with $0\leq i< n-1$, block $j$ of level $i$ is connected to block $j$ of level $i+1$ for $0\leq j\leq c-1$ with a complete bipartite graph.                 Additionally, for $0\leq i< n-1$, $0\leq j\leq c-1$, and $0\leq k\leq c-1$, the $k$th task in block $j$ of level $i$ $(i,c\cdot j+k)$ is connected to the $k$th task of every block $j'$ of level $i+1$ $(i+1,c\cdot j'+k)$ with $0\leq j'\leq c-1$ and $j'\neq j$.                     \Cref{fig:dag_infeasibility} depicts the connections of two adjacent levels of the DAG for $c=3$.

    \begin{figure}[ht!]
        \begin{center}
            \begin{tikzpicture}
                \node at (0,0) {$i$};
                \node at (0,-2) {$i+1$};

                \foreach \offset [count = \oo from 0] in {0,4,8} {
                    \foreach \y [count = \yy] in {0,-2} {
                        \foreach \x [count = \xx from 0] in {1,...,3} {
                            \node[draw,circle,thick] (\oo-\xx-\yy) at (\x+\offset,\y) {};
                            \node at (\x+\offset,\y) {\xx};
                        }
                        \draw[black,rounded corners] (0.75+\offset,-0.25+\y) rectangle (3.25+\offset,0.25+\y);
                    }
                    \node at (\offset+2, -2.75) {Block \oo};
                }

                \foreach \o in {0,1,2} {
                    \foreach \x in {0,1,2} {
                        \foreach \t in {0,1,2} {
                            \draw[blue,thick] (\o-\x-1) -- (\o-\t-2);
                        }
                    }
                }
                \foreach \o in {0,1,2} {
                    \foreach \x in {0,1,2} {
                        \foreach \t in {0,1,2} {
                            \pgfmathparse{\o != \t}
                            \ifnum\pgfmathresult=1
                                \draw[red,thick] (\o-\x-1) -- (\t-\x-2);
                            \fi
                        }
                    }
                }
            \end{tikzpicture}
            \caption{The connections of two adjacent levels of the DAG for $c=3$. The blue edges connect tasks in the same block of levels $i$ and $i+1$ with a complete bipartite graph. The red edges connect each task in level $i$ with the corresponding task in each other block in level $i+1$.}
            \label{fig:dag_infeasibility}
        \end{center}
    \end{figure}

    It remains to prove that the protocol requires exponential time to terminate on $G$ for some adversarial strategy, w.h.p.
    Our goal is to set up another instance of the gambler's ruin problem. 
    To this end, we have all adversaries follow the strategy to always REJECT all their inputs upon assignment.
    For this strategy, the assignment of a single adversarial worker to block $j$ in level $i+1$ causes a REJECT to exactly one task in every other block $j'$ for $j'\neq j$ of level $i$ and all tasks in block $j$ of level $i$.
    This results in the supervisor having to reassign the rejected workers, and all their successors.
    Specifically, at least all workers in block $j$ of level $i$ and all workers of level $i+1$ get reassigned.

    We model the problem with a Markov chain $M$ with states $0,\dots,n+1$.
    $M$ is in state $i$, if $i$ is the last level where any honest workers are assigned.
    As a worst case assumption, we consider all workers assigned to level $i$ to be honest.
    Each transition of $M$ corresponds to at least one round of the supervised computation.

    For $M$ to increase its state ($i\to i+1$), all tasks in level $i+1$ have to receive an honest assignment as any malicious assignment in level $i+1$ would result in a reject and thus the removal of all workers in level $i+1$.
    Thus, we obtain using $\beta>\nicefrac{\ln c}{c^2}$:
    \[
        p\coloneqq\Pr[i\to i+1]=(1-\beta)^{c^2}\leq e^{-\beta c^2}<\nicefrac{1}{c}\text{.}
    \]
    For $M$ to decrease its state ($i\to i-1$), we require a process of at least two rounds.
    Firstly, at least one task in level $i+1$ has to receive a malicious assignment, resulting in a rollback of at least one block in level $i$.
    Secondly, at least one task in the block of $i$ that was just rolled back has to receive a malicious assignment, resulting in a rollback of the rest of level $i$.
    We obtain using $\beta>\nicefrac{\ln c}{c^2}$:
    \begin{align*}        
        q\coloneqq\Pr[i\to i-1]&\geq(1-(1-\beta)^{c^2})(1-(1-\beta)^c)\\
        &\geq(1-e^{-\beta c^2})(1-e^{-\beta c})\\
        &> (1-\nicefrac{1}{c})(\beta c-\nicefrac{(\beta c)^2}{2})\tag{$1-e^{-x}\geq x-\nicefrac{x^2}{2}$ for $x\geq 0$}\\
        &>(1-\nicefrac{1}{c})\cdot(\nicefrac{\ln c}{c}-\nicefrac{\ln^2 c}{2c^2})\tag{$\beta c< 1$}\\
    \end{align*}
    Note that $M$ has some self loops, where the state neither decreases nor increases. 
    These can be ignored for the gambler's ruin problem as they do not change the number of steps required to reach any absorbing state.
    Thus, we are only interested in the ratio $\nicefrac{q}{p}$.
    We have
    \[
        \nicefrac{q}{p}> \frac{(1-\nicefrac{1}{c})\cdot(\nicefrac{\ln c}{c}-\nicefrac{\ln^2 c}{2c^2})}{\nicefrac{1}{c}}=(c-1) \frac{\ln c-\nicefrac{\ln^2 c}{2c}}{c}\overset{c\geq 5}{>}1\text{.}
    \]
    We conclude that there is a drift towards the source.
    We can prove that the protocol requires exponential time to terminate, w.h.p., for the always-\texttt{REJECT} strategy analogously to \Cref{thm:lower_bound_linear_time}.
    Note that $\beta>\nicefrac{\ln (\nicefrac{(d+1)}{2})}{(\nicefrac{(d+1)}{2})^2}$ and $d=2c-1$ imply $\beta>\nicefrac{\ln c}{c^2}$.
\end{proof}

\end{document}